\let\oldaa\aa
\let\aa\oldaa
\let\aa\oldaa
\newcommand{\CAT}{\texttt{CAT}\xspace}
\newcommand{\PARITY}{\texttt{PARITY}\xspace}
\newcommand{\FANOUT}{\texttt{FAN-OUT}\xspace}
\newcommand{\AND}{\texttt{AND}\xspace}
\newcommand{\OR}{\texttt{OR}\xspace}
\newcommand{\NOT}{\texttt{NOT}\xspace}
\renewcommand{\MOD}{\texttt{MOD}\xspace}
\newcommand{\TOFFOLI}{\texttt{TOFFOLI}\xspace}
\newcommand{\MAJORITY}{\texttt{MAJORITY}\xspace}
\newcommand{\TRIBES}{\texttt{TRIBES}\xspace}
\renewcommand{\NC}{\textsf{NC}\xspace}
\renewcommand{\AC}{\textsf{AC}\xspace}
\renewcommand{\TC}{\textsf{TC}\xspace}
\newcommand{\ACZ}{\textsf{AC}$^0$\xspace}
\newcommand{\ACZn}{\textsf{AC}$^0$}
\renewcommand{\QAC}{\textsf{QAC}\xspace}
\newcommand{\QACZ}{\textsf{QAC}$^0$\xspace}
\newcommand{\QNCZ}{\textsf{QNC}$^0$\xspace}
\newcommand{\QNCZf}{\textsf{QNC}$^0_f$\xspace}
\newcommand{\QACZf}{\textsf{QAC}$^0_f$\xspace}
\newcommand{\QTCZf}{\textsf{QTC}$^0_f$\xspace}
\newcommand{\calO}{\mathcal{O}}
\renewcommand{\poly}{\text{poly}}
\renewcommand{\R}{{\mathbb{R}}}
\renewcommand{\C}{{\mathbb{C}}}
\newcommand{\N}{{\mathbb{N}}}
\newcommand{\Q}{{\mathbb{Q}}}
\newcommand{\CZ}{\ensuremath{\texttt{CZ}}\xspace}
\newcommand{\B}{{\{0,1\}}}
\renewcommand{\Pr}{\mathop{\bf Pr\/}}
\renewcommand{\E}{\mathop{\bf E\/}}
\newcommand{\eps}{\varepsilon}
\newcommand{\Inf}{{\mathbf{Inf}}}
\newcommand{\one}{{\mathbbm{1}}}
\newcommand{\td}[2]{\mathrm{TD}(#1,#2)}
\newcommand{\dtd}{\mathrm{TD}}
\theoremstyle{plain} 
\newtheorem{theorem}{Theorem}[section]
\newtheorem{lemma}[theorem]{Lemma}
\newtheorem{corollary}[theorem]{Corollary}
\newtheorem{claim}[theorem]{Claim}
\theoremstyle{definition} 
\newtheorem{definition}[theorem]{Definition}
\newtheorem{fact}[theorem]{Fact}
\theoremstyle{remark}
\def\DRAFT{1}
\newcommand{\malvika}[1]{\textcolor{purple}{[\textbf{Malvika:} {#1}]}}
\newcommand{\avishay}[1]{\textcolor{orange}{[\textbf{Avishay:} {#1}]}}
\newcommand{\fran}[1]{\textcolor{red}{[\textbf{Fran:} {#1}]}}
\newcommand{\john}[1]{\textcolor{green}{[\textbf{John:} {#1}]}}
\newcommand{\gpt}[1]{\textcolor{blue}{[\textbf{GPT:} {#1}]}}
\newcommand{\malvika}[1]{}
\newcommand{\avishay}[1]{}
\newcommand{\fran}[1]{}
\newcommand{\john}[1]{}
\newcommand{\gpt}[1]{}
\title{Improved Lower Bounds for QAC$^0$}
\author{
  Malvika Raj Joshi\thanks{University of California at Berkeley. \ Email: \url{malvika@berkeley.edu}. \ Supported by the U.S. Department of Energy, Office of Science, under Award No. DE-SC0024124 } 
  \and
  Avishay Tal\thanks{University of California at Berkeley. \ Email: \url{atal@berkeley.edu}. \ Supported by an NSF CAREER Award CCF-2145474.}
  \and
  Francisca Vasconcelos\thanks{University of California at Berkeley. \ Email: \url{francisca@berkeley.edu}. \ Supported by the U.S. Department of Energy, Office of Science, under Award No. DE-SC0024124 and the Paul \& Daisy Soros Fellowship for New Americans.}
  \and
John Wright
\thanks{University of California at Berkeley. \ Email:  \url{jswright@berkeley.edu}. \ Supported by an NSF CAREER Award CCF-233971.}
}
\author{
  Anonymous Authors.
}
\date{}
\begin{document}

\maketitle

\begin{abstract}
   In this work, we establish the strongest known lower bounds against \QACZ, while allowing its full power of polynomially many ancillae and gates.  Our two main results show that: 
   \begin{enumerate}
       \item Depth~$3$ \QACZ circuits cannot compute \PARITY regardless of size, and require at least $\Omega(\exp(\sqrt{n}))$ many gates to compute \MAJORITY.
       \item Depth~$2$ circuits cannot approximate high-influence Boolean functions (e.g., \PARITY) with non-negligible advantage, regardless of size.
   \end{enumerate} 
  We present new techniques for simulating certain \QACZ circuits classically in \ACZ to obtain our depth~$3$ lower bounds. In these results, we relax the output requirement of the quantum circuit to a single bit (i.e., no restrictions on input preservation/reversible computation), making our depth~$2$ approximation bound stronger than the previous best bound of \cite{rosenthal2021qac0}. This also enables us to draw natural comparisons with classical \ACZ circuits, which can compute \PARITY exactly in depth~$2$ using exponential size. Our proof techniques further suggest that, for boolean total functions, \QACZ circuits do not necessarily provide more power than their classical counterparts. Our third result shows that depth~$2$ \QACZ circuits, regardless of size, cannot exactly synthesize an $n$-target nekomata state (a state whose synthesis is directly related to the computation of \PARITY). This complements the depth~$2$ exponential size upper bound of \cite{rosenthal2021qac0} for approximating nekomata (which is used as a sub-circuit in the only known constant depth \PARITY upper bound). 
  Finally, we argue that approximating \PARITY in \QACZ,  with at least $1/\polylog(n)$ advantage on average, is just as hard as computing it exactly. Thus, extending our techniques to higher depths would also rule out approximate circuits for \PARITY and related problems. 
\end{abstract}

\thispagestyle{empty}
\newpage

\tableofcontents

\thispagestyle{empty}
\newpage
\pagenumbering{arabic} 

\section{Introduction}
In classical computation, the ability to copy information is considered an elementary operation. Every major classically studied circuit class---e.g. $\NC$, $\AC$, and $\TC$---trivially contains the \FANOUT~operation. Quantumly, however, the ability to copy information is more limited and nuanced. For example, the no-cloning theorem explicitly prohibits copying quantum information. Interestingly, even the ability to copy classical information,  via the quantum \FANOUT~gate, appears to offer substantial power in the quantum setting. 

The relative power of the \FANOUT~operation in the classical and quantum settings has largely been formalized through the study of three circuit classes: \ACZ, \QACZ, and \QACZf. \ACZ~is the class of polynomial-sized, constant-depth circuits comprised of unlimited fan-in \AND~and \OR~operations, with \NOT s allowed on the inputs and arbitrary \FANOUT.
Following the seminal works of~\cite{FSS84,Ajtai83,Yao85}, the celebrated work of \cite{hastad1986switch} introduced the switching lemma, which proved tight exponential lower bounds on the size of bounded-depth \ACZ circuits computing or even approximating \PARITY. Later, \cite{lmn1993ac0} used this technique to establish low-degree Fourier concentration of Boolean functions implementable by \ACZ. Beyond profound implications for fields such as cryptography and learning theory, these results demonstrated that high-degree functions cannot be approximated by \ACZ.
Subsequent works refined this picture: \cite{tal2017fourierac0} proved essentially tight bounds on the Fourier spectrum of \ACZ, while  \cite{hastad2017average} established an average-case depth hierarchy theorem, showing that increased depth strictly increases the power of \ACZ even on random inputs.

In 1999, Moore proposed the \QACZ~circuit class as a natural quantum analog of \ACZ~\cite{moore1999qac0} (later published in the work of \cite{green2002acc}). \QACZ~(\QACZf, resp.) is the class of polynomial-sized, constant-depth quantum circuits comprised of arbitrary single-qubit gates and generalized Toffoli gates (with unbounded \FANOUT~gates, resp.). Moore also posed the following fundamental question: 
\begin{quote}
    \centering Is \FANOUT~$\in$ \QACZ? Equivalently, is \QACZ~$=$ \QACZf?
\end{quote}
Note that, quantumly, \FANOUT~is equivalent to \PARITY, up to Hadamard conjugation \cite{moore1999qac0}. Therefore, this question can equivalently be framed as:
\begin{quote}
    \centering Is \PARITY~$\in$ \QACZ?
\end{quote}
Resolving this question would have several profound implications for quantum complexity and quantum computation. In the classical \ACZ setting, even allowing \PARITY~(i.e., $\MOD_{2}$) gates does not yield $\MOD_{m}$ for general $m$~\cite{razborov1987lower,smolensky1987algebraic}. In sharp contrast, if \FANOUT/\PARITY~$\in$~\QACZ, then:
\begin{enumerate}
    \item \QACZ contains \FANOUT, and thus \ACZ and is substantially more powerful than \ACZ \cite{moore1999qac0} (\PARITY~$\notin$~\ACZ~ \cite{hastad1986switch}).
    \item For any integer $m$, $\MOD_m$ can be implemented in \QACZ \cite{green2002acc}. 
    \item $n$-qubit GHZ (cat) states can be prepared in constant depth using only single-qubit gates and a single $\FANOUT/\PARITY$ operation~\cite{moore1999qac0,hoyer2005fanout}.

    \item \QACZ~can perform many powerful computations, including:  majority, threshold[$t$], exact[$t$], counting, sorting, arithmetic, phase estimation, and the quantum Fourier transform \cite{hoyer2005fanout}.
    \item Strong pseudo-random unitaries (PRUs) are implementable in \QACZ~\cite{foxman2025pru}.

    \item The quantum shallow-depth hierarchy collapses, i.e. \QNCZf~= \QACZf~=\QTCZf~and these classes can be characterized by just \FANOUT gates and single qubit unitaries \cite{takahashi2012collapse}. 
\end{enumerate}

Furthermore, since \QACZ~includes gates of unbounded width, standard light-cone techniques are insufficient for proving circuit lower-bounds. Thus, resolving whether \PARITY~$\in$~\QACZ~will likely result in novel techniques for proving more general quantum circuit lower-bounds. In fact, previously-developed techniques have already led to exciting applications. For example, \QACZ~Fourier concentration established by the lower-bound approach of \cite{nadimpalli2024pauli} led to sample-efficient algorithms for learning single-output \QACZ~channels \cite{nadimpalli2024pauli,bao2025learning} and a time-efficient algorithm for average-case learning of \QACZ~unitaries \cite{vasconcelos2025learning}. Additionally, the exponential-size implementation of \PARITY~in \QACZ~proposed by \cite{rosenthal2021qac0} enabled the compression of strong \QACZf~PRUs to weak \QACZ~PRUs by \cite{foxman2025pru}.

Despite substantial effort in proving both upper and lower-bounds \cite{fang2006qlb,pade2020depth2qaccircuitssimulate,rosenthal2021qac0,nadimpalli2024pauli,anshu2025computational,fenner2025tightboundsdepth2qaccircuits}, Moore's question has remained unresolved for nearly three decades.  Prior to this work, the strongest known \PARITY~lower-bounds were 
    either in the setting with limited, \emph{slightly super-linear}, ancillae \cite{anshu2025computational} or with unlimited ancillae, but only up to depth-$2$ \cite{rosenthal2021qac0,fenner2025tightboundsdepth2qaccircuits}.
  As demonstrated by the only known constant-depth upper-bound for \PARITY \cite{rosenthal2021qac0}, and lower-bounds against circuits with limited ancillae \cite{nadimpalli2024pauli,anshu2025computational}, the main power of \QACZ circuits comes from their use of super-linear ancillae to generate entanglement, accounting for the lack of \FANOUT. 

In this work, we introduce novel techniques for proving \QACZ~circuit lower-bounds, enabling us to give the strongest fixed-depth lower-bounds for \QACZ~to-date, while still allowing its full power of polynomial ancillae. In \Cref{sec:d3poly}, we prove the first \emph{depth-$3$} lower-bounds for \QACZ, ruling out computation of exact \PARITY and \MAJORITY with sub-exponential size. For \PARITY specifically, we further prove a size-independent depth-3 lower-bound.  

In \Cref{sec:depth_2_influence}, we also prove a Fourier-tail decay bound for depth-$2$ \QACZ~circuits with unlimited ancillae, demonstrating that they have low total influence. For these results, we treat the output of the circuit as a single bit on a designated register with no constraints on other registers. This makes our depth-$2$ lower bounds stronger than the previously known 
depth-2 approximation bound of \cite{rosenthal2021qac0} which requires the circuit to preserve the state on the input registers. This also allows us to draw analogies with classical \ACZ, where the output is a single bit. Interestingly, our results contrast what is known for classical circuits, since exponential-size \ACZ circuits can compute \PARITY exactly in depth-$2$.  

Finally, in \Cref{sec:generalized_neko}, we show that depth-$2$ \QACZ circuits cannot exactly synthesize a so-called ``nekomata'' state, which is closely related to \FANOUT. For example, \cite{rosenthal2021qac0} achieves a constant-depth upper-bound for \PARITY by first using an exponential-size depth-2 \QAC circuit to approximately synthesize a nekomata state. We show that such a state on $n = \omega(1)$ targets cannot be exactly prepared in depth-$2$, even with unlimited ancillae. This complements the only known $O(1)$-depth upper-bound for approximating nekomatas, due to \cite{rosenthal2021qac0}.

We present new techniques for simulating \QACZ circuits for exact computation of boolean functions in \ACZ. Although it is known that \QACZ (or even \QNCZ) provides more power than \ACZ for search problems \cite{watts2019separation}, and worst-case bounded error regimes for decision problems (\textsf{BQAC$^0$}) \cite{grier2026mathsfqac0containsmathsftc0with}, there are no known decision separations in which \QACZ circuit is \emph{exact}.
We justify our approach in \Cref{sec:aproofs} though a \QACZ reduction from exactly computing \PARITY to approximating \PARITY with any inverse-polylogarithmic advantage on a random input. We conclude that it is sufficient to establish lower bounds against \emph{exact} \QACZ circuits for any of the problems connected to \PARITY described above \cite{green2002acc,hoyer2005fanout,grier2024threshold}, to also rule out these approximate regimes. 

\subsection{Prior Work} \label{sec:prior_work}
\begin{table}[t!]\label{tab:prior}
    \centering
    \footnotesize
    \renewcommand{\arraystretch}{1.4}
    \setlength{\tabcolsep}{4pt}
    \begin{tabular}{|c|c|c|c|c|c|}
        \hline
         \cellcolor{gray!10}Result Type& \cellcolor{gray!10}Paper & \cellcolor{gray!10} \cellcolor{gray!10}Comp. Type & \cellcolor{gray!10}Output Type & \cellcolor{gray!10}depth-& \cellcolor{gray!10}\# Ancillae\\
         \hline
         \hline
         \multirow{2}{*}{\shortstack{\shortstack{\PARITY/Nekomata\\Upper-Bounds}}} & \cite{rosenthal2021qac0} & Approximate & Input-Preserving \PARITY & $d\geq 7$ & $\exp(n^{\calO(1/d)})$ \\
         \cline{2-6}
         & \cite{rosenthal2021qac0} & Approximate & Nekomatas & 2 & $\exp(n^{1+o(1)})$ \\
         \hline
         \hline
         \multirow{3}{*}{\shortstack{Boolean Function\\Lower-Bound Via\\Structural Results}} & \cite{nadimpalli2024pauli} &  Approximate & High-Degree Boolean Funcs & $d=\calO(1)$ & $n^{\Omega(1/d)}$\\
         \cline{2-6}
         & \cite{anshu2025computational} &  Approximate & High-Degree Boolean Funcs & $d=\calO(1)$ & $\Omega\left(n^{1+1/3^{d}}\right)$\\
         \cline{2-6}
         & Thm \ref{thm:inf_lb}/\ref{thm:influence} &  Approximate & \cellcolor{green!15}High-Influence Boolean Funcs & 2 & \cellcolor{green!15}$\infty$ \\
         \hline
         \hline
         \multirow{2}{*}{\shortstack{\shortstack{CAT/Nekomata\\Lower-Bounds}}} &\cite{rosenthal2021qac0} & Approximate & CAT States & 2 & $\infty$ \\
         \cline{2-6}
         & Thm \ref{thm:d2_lb}/\ref{cor:nekolb} & Exact & \cellcolor{green!15}Generalized Nekomatas & 2 & $\infty$ \\
         \hline
         \hline
         \multirow{4}{*}{\shortstack{Restricted-Ancillae\\ \PARITY~Lower-Bounds}} 
         &\cite{bera2011qac0} & Exact & \PARITY & $o(\log n)$ & 0 \\
         \cline{2-6}
         &\cite{fang2006qlb} & Exact & \PARITY & $o(\log n)$ & $n^{1-o(1)}$\\
         \cline{2-6}
         & \cite{nadimpalli2024pauli} &  Approximate & \PARITY & $d=\calO(1)$ & $n^{\Omega(1/d)}$\\
         \cline{2-6}
         & \cite{anshu2025computational} &  Approximate & \PARITY & $d=\calO(1)$ & $\Omega\left(n^{1+1/3^{d}}\right)$\\
         \hline
         \hline
    \multirow{5}{*}{\shortstack{Restricted-Depth\\ \PARITY/\MAJORITY \\Lower-Bounds}} 
    & \cite{rosenthal2021qac0} &  Approximate & Input-Preserving \PARITY & 2 & $\infty$ \\
           \cline{2-6}
    & \cite{rosenthal2021qac0} &  Approximate & \PARITY & $d \geq 1$ & $\Omega(n/d)$ \\
       \cline{2-6}
   & Cor \ref{thm:par_corr_bound}/\ref{thm:corr_proof} &  Approximate & \cellcolor{green!15} \PARITY & 2 & $\infty$ \\
   \cline{2-6}
    & \cite{fenner2025tightboundsdepth2qaccircuits} &  Exact & \PARITY & 2 & $\infty$ \\
    \cline{2-6}
         & Thm \ref{th:d3lb}/\ref{thm:depth_3_inf_anc} &  Exact & \PARITY & \cellcolor{green!15}3 & $\infty$ \\
    \cline{2-6}
         & Thm \ref{th:d3lb}/\ref{cor:d3anclb} &  Exact & \cellcolor{green!15}\MAJORITY & \cellcolor{green!15}3 & \cellcolor{green!15}$\exp(n^{\Omega(1)})$ \\
         \hline
    \end{tabular}
    \caption{Upper- and lower-bounds for computation of Boolean functions and nekomata/CAT states in constant-depth \QAC. The results are grouped together by type. For lower-bound results, with the exception of \cite{bera2011qac0}, the depth column states the values of $d$ for which the corresponding ancilla bounds apply. For prior works, the paper is referenced, whereas for novel results from this work, theorem references (main paper/proof section) are provided. For each result we list the computation type (exact versus approximate), output type, explicit depth, and explicit ancilla count. Key improvements achieved in our work, relative to prior works, are highlighted in green.}
    \label{tab:prior_works}
\end{table}

We will now briefly summarize known \QACZ~upper-bounds, lower-bounds, and structural results prior to this work, as listed in \Cref{tab:prior_works}. We will first discuss \QACZ lower-bounds for \PARITY, which can be split into two main categories: (i) restricted-ancillae and (ii) restricted-depth. We will also describe corresponding \QACZ~low-degree structural results and nekomata/CAT state preparation lower-bounds. Finally, we conclude by describing the only known upper-bound for approximately computing \PARITY~and nekomata states in \QACZ, using exponential ancillae.

\paragraph{Restricted-Ancillae \PARITY Lower-Bounds.}
The size of a \QACZ circuit is closely related to the number of ancillae it uses. 
By definition, the number of ancillae in \QACZ circuits is allowed to be an arbitrary polynomial in $n$.  
The first category of \PARITY lower-bounds \cite{fang2006qlb, bera2011qac0, nadimpalli2024pauli, anshu2025computational} focuses on generic depth-$d$ \QACZ~circuits with ancillae limited to $o(n^2)$.
The proofs of \cite{nadimpalli2024pauli,anshu2025computational} follow by showing that the circuit's Heisenberg-evolved single-qubit ``output'' measurement Pauli/projector can be approximated, to high precision, by low-degree objects. Beyond ruling out the computation of \PARITY, these low-degree structural results enable correlation bounds against generic \emph{high-degree Boolean functions}, such as \MAJORITY~and $\MOD_k$. The key caveat of this low-degree approach, however, is that it only holds for a depth dependent number of ancillae, which \cite{anshu2025computational} pushed to slightly super-linear in $n$ for arbitrary constant-depth \QACZ~circuits. 

\paragraph{Restricted-Depth \PARITY~Lower-Bounds.} The second category of \PARITY~lower-bounds focuses on fixed-depth \QAC~circuits, without any extra constraints on the ancillae. Specifically, \cite{rosenthal2021qac0} established a depth-2 average-case approximate lower-bound, while  \cite{fenner2025tightboundsdepth2qaccircuits} established a depth-2 worst-case exact lower-bound against \PARITY~$\in$ \QACZ~, both with unlimited ancillae. All the known lower bounds that do not impose any restrictions on the number of ancillae beyond the default $\poly(n)$ fall under this category and only go up to depth-$2$. Both the upper and lower bounds of \cite{rosenthal2021qac0} correspond to circuits with an $n$-bit output that preserve the state on the input qubits, and we refer to these circuits as ``input-preserving''.

\paragraph{\CAT/Nekomata Lower-Bounds.} In \cite{moore1999qac0}, Moore proved that there exist reductions between computing \PARITY/\FANOUT~and preparing the $n$-qubit CAT state, $\ket{\Cat_n} = \frac{1}{\sqrt{2}} \left(\ket{0^n}+\ket{1^n}\right)$. In recent work \cite{rosenthal2021qac0}, Rosenthal introduced the notion of an $n$-qubit ``nekomata'' state, of form
\begin{align} \label{eqn:neko}
    \ket{\text{Nekomata}} = \frac{1}{\sqrt{2}} \left(\ket{0^n} \ket{\psi_\alpha} +  \ket{1^n} \ket{\psi_\beta} \right).
\end{align}
The nekomata is similar to $\ket{\Cat_n}$, but allows for each branch to have an arbitrary ancillary state (i.e. the normalized states $\ket{\psi_\alpha}$ and $\ket{\psi_\beta}$). In this case, where the two branches are equally weighted (with probability $1/2$), we refer to the state as a ``balanced'' nekomata. 

\cite{rosenthal2021qac0} showed $O(1)$-depth reductions between preparing $n$-qubit nekomata states and computing \FANOUT/\PARITY, analogous to those for $\ket{\Cat_n}$, and extended all these reductions to the approximate setting. Rosenthal also showed that any depth-$d$ \QACZ circuit that approximates an $n$-qubit nekomata must have  $\Omega(n/(d+1))$ multi-qubit gates acting on the targets. 

\paragraph{\PARITY~and Nekomata Upper-Bounds.} Despite several lower-bound results for \QACZ, there is only one known upper-bound for approximating \PARITY in constant-depth \QAC.
Notably, \cite{rosenthal2021qac0} gave a depth-7 \QAC~circuit for approximating $n$-qubit \PARITY, using $\exp(n^{1-o(1)})$ gates (thereby requiring more resources than permitted in polynomial-sized \QACZ). To achieve this, Rosenthal first gave a depth-$2$ circuit using an approximate $n$-qubit nekomata and then used it to obtain a depth-7 circuit for approximating \PARITY. Due to the recursive nature of \PARITY, for any depth $k=7d$, this implies \QACZ circuits of size roughly $\exp(n^{1/d})$ approximating \PARITY.

\subsection{Our Results} \label{sec:our_results}
In this work we study fixed-depth \QAC~circuits, specifically with depth-$\leq 3$, and do not impose additional restrictions on ancillae. Note that for constant depth circuits, limiting the ancillae also limits the size of the circuit because each qubit can belong to at most $d$ gates. Our motivation for studying fixed-depth \QAC~circuits stems from the large gap between the only known constant-depth upper-bound for \PARITY \cite{rosenthal2021qac0}, which uses an exponential number of gates, and
the best known techniques for arbitrary-depth circuits \cite{fang2006qlb, nadimpalli2024pauli, anshu2025computational}, which fail to rule out even $O(n^2)$-sized circuits for depth-$\geq 2$. 
Our main results are summarized below. 

Our first result shows that depth-$3$ circuits (i) cannot compute \MAJORITY using only \emph{sub-exponential} gates and (ii) cannot compute \PARITY regardless of size. The informal theorem statement is as follows, with the full proof given in \Cref{sec:d3poly} (\Cref{thm:depth_3_inf_anc}). We note that both results apply regardless of the number of ancillae. 

\begin{theorem}[Depth-$3$ \MAJORITY Lower-bound]\label{th:d3majlb}
  Let $C$ be a depth-$3$ \QACZ~circuit $C$ on $n$ inputs with $m \leq 2^{n^{o(1)}}$ gates such that on every input $\x \in \bin^n$, $C$ produces the state $\ket{f(\x)}_t$ on a designated output $t$. Then, $f(\x)$ cannot be the \PARITY or \MAJORITY function.
\end{theorem}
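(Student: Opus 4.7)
The plan is to prove that any depth-$3$ \QACZ circuit $C$ of size $m \leq 2^{n^{o(1)}}$ that exactly computes a Boolean function $f$ on a designated output qubit admits a classical simulation by a constant-depth \ACZ circuit of size polynomial in $m$. Combined with Hastad's classical \ACZ lower bound---that any depth-$d$ \ACZ circuit computing \PARITY or \MAJORITY requires size at least $2^{\Omega(n^{1/(d-1)})}$---this yields the desired contradiction when $f$ is the \PARITY or \MAJORITY function, for $n$ sufficiently large.

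I would begin by analyzing the Heisenberg-evolved observable $L_3^\dagger \Pi_t L_3$, where $\Pi_t$ is the projector onto $\ket{1}$ on the output qubit $t$, acting on the state $L_2 L_1 \ket{x, 0}$. When $L_3$ contains a generalized Toffoli with $t$ as target and with control set $C_3$ (the main case), this observable is the classical diagonal observable $\one[\mathrm{AND}(C_3) \oplus t = 1]$ in the computational basis on $C_3 \cup \{t\}$. Otherwise $L_3$ acts on $t$ either trivially or via a single-qubit rotation, which I would show either reduces the problem to the depth-$2$ setting (where prior \QACZ depth-$2$ lower bounds rule out \PARITY and can be adapted for \MAJORITY) or forces $L_3$'s single-qubit gate to act as a classical bit permutation on $t$, after which the same classical-simulation argument applies.

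The main technical step is the classical simulation of the depth-$2$ sub-circuit $L_2 L_1$ followed by the effective classical observable. After $L_1$, the state factorizes over $L_1$'s gates into a tensor product of deterministic basis states (from Toffolis or untouched qubits) and independent single-qubit states (from single-qubit gates). The deferred-measurement principle then allows me to insert intermediate computational-basis measurements on all single-qubit gate outputs of $L_1$ (and analogously for $L_2$ once its output is again in tensor-product form) without altering the marginal distribution of the final measurement on $t$. Since the overall output is $f(x)$ deterministically, every intermediate outcome with positive probability must also yield $f(x)$; one can therefore canonically derandomize each single-qubit measurement---for instance, by picking the outcome $0$ whenever its amplitude is nonzero---while preserving this property. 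Composing these derandomized choices with the classical computations performed by the Toffoli gates and the final $\mathrm{AND}(C_3) \oplus t$ gadget yields a constant-depth \ACZ circuit of size polynomial in $m$.

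The hard part will be carefully tracking the tensor-product structure across layers and ensuring that the canonical derandomization is consistent with the exactness of the quantum computation---especially when $L_2$ contains Toffoli gates that could in principle create entanglement across the qubits in $C_3 \cup \{t\}$ relevant to the final observable. Handling the sub-cases of $L_3$'s action on $t$ in a unified way, and simultaneously verifying the argument for both \PARITY and \MAJORITY, requires a careful case analysis that I expect to be the most delicate part of the proof.
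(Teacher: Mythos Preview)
Your central structural claim --- that ``after $L_1$, the state factorizes \ldots into a tensor product of deterministic basis states (from Toffolis or untouched qubits) and independent single-qubit states (from single-qubit gates)'' --- is false for \QACZ as defined. Single-qubit unitaries are free and sit \emph{between} every pair of Toffoli layers, including before the first one. Thus the state entering the layer-1 Toffolis is already a tensor product of arbitrary single-qubit states (from $U_0$ applied to $\ket{x,0}$), and each Toffoli therefore produces an entangled state on its block, not a computational-basis vector. Equivalently, in the Rosenthal normal form each layer-1 gate is a reflection about an arbitrary separable state, and its output on a block containing several input qubits is genuinely entangled and depends jointly on all those inputs.

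This breaks your derandomization. Within a layer-1 block touching $k$ input qubits, the set of computational-basis outcomes with nonzero amplitude is a joint function of those $k$ bits, and your rule ``pick outcome $0$ whenever its amplitude is nonzero'' is not locally computable: deciding nonzeroness already requires simulating the block on the given $k$-bit input, which is not an \ACZ operation when $k$ can be as large as $n$. The paper confronts exactly this obstacle with its Clean-Up Lemma: a carefully designed \emph{quantum} restriction (a superposition of parity-preserving, or balanced, classical restrictions) that deactivates every layer-1 gate touching more than one input while keeping $\Omega(n)$ inputs alive and preserving exact computation of \PARITY/\MAJORITY. Only \emph{after} this reduction does a classical simulation become possible, and the paper carries it out not via measurement/derandomization but by tracking \emph{activation functions} of projectors (whether $\Pi\cdot C(x)\neq 0$) through the layers, showing these are computable by small \ACZ formulas. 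Your outline has no analogue of either ingredient, and the ``hard part'' you flag --- entanglement created at layer 2 across the qubits feeding the final gate --- is precisely where the activation-function machinery does the work that deferred measurement cannot.
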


\begin{theorem}[Depth-$3$ \PARITY~Size-Independent Bound]\label{th:d3lb}
Let $C$ be a depth-$3$ \QACZ circuit with $n > 100$ input qubits and an arbitrary number of ancillae and gates, such that on every input $\x \in \bin^n$, the circuit $C$ produces the state $\ket{f(\x)}_t$ on a designated output qubit $t$. Then, $f$ cannot be the \PARITY function.
\end{theorem}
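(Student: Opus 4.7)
The plan is to assume for contradiction that a depth-$3$ \QACZ~circuit $C = U_3 U_2 U_1$ computes \PARITY~exactly on a designated output qubit $t$, and reduce to a depth-$2$ claim that is ruled out by combining Theorem~\ref{thm:inf_lb} of \Cref{sec:depth_2_influence} with the depth-$2$ exact \PARITY~bound of~\cite{fenner2025tightboundsdepth2qaccircuits}. First I would reduce to a canonical ``Toffoli-target'' last-layer case, and then extract a Fourier contradiction from the resulting structured observable on the depth-$2$ state $\ket{\psi(\x)} := U_2 U_1 \ket{\x, 0^a}$.

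\paragraph{Reducing to a Toffoli-target last layer and a Pauli--Fourier identity.}
Let $G \in U_3$ be the layer-$3$ gate acting on $t$. If $G$ is a single-qubit gate, then $G^\dagger Z_t G$ is a single-qubit observable on $t$ and can be absorbed as a change of measurement basis, yielding an effective depth-$2$ circuit computing \PARITY~and contradicting~\cite{fenner2025tightboundsdepth2qaccircuits}. If $G$ is a generalized Toffoli with $t$ as a control, then $G$ is diagonal in the computational basis on $t$ and commutes with $Z_t$, so $G$ may be dropped, giving the same contradiction. The remaining case is that $G$ is a generalized Toffoli $T$ with $t$ as the target and controls $\vec c = (c_1, \ldots, c_k)$. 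A direct computation gives $T^\dagger Z_t T = Z_t(I - 2\,\Pi_{\vec c = \vec 1})$; expanding $\Pi_{\vec c = \vec 1} = 2^{-k}\sum_{S \subseteq [k]}(-1)^{|S|}\, Z^S$ in Paulis (with $Z^S = \prod_{i\in S} Z_{c_i}$), the exactness condition $\langle \psi(\x)\,|\,T^\dagger Z_t T\,|\,\psi(\x)\rangle = (-1)^{\PARITY(\x)}$ becomes the identity
\[
(-1)^{\PARITY(\x)} \;=\; (1 - 2^{1-k})\, F_\emptyset(\x) \;-\; 2^{1-k}\sum_{S \neq \emptyset}(-1)^{|S|}\, F_S(\x), \qquad F_S(\x) \;:=\; \langle \psi(\x)\,|\,Z_t\,Z^S\,|\,\psi(\x)\rangle.
\]

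\paragraph{Top-level Fourier cancellation and the main obstacle.}
Taking the Fourier coefficient of both sides at $A = [n]$, the LHS contributes $1$; on the RHS, the $\pm 2^{1-k}$ prefactors summed against the $2^k$ monomials give a total coefficient weight of $3 - 2^{2-k} \leq 3$, so the triangle inequality yields $1 \leq 3 \max_{S \subseteq [k]} |\widehat{F_S}([n])|$. Combined with the standard identity $|\widehat{F_S}([n])|^2 \leq \Inf[F_S]/n$, a uniform bound $\Inf[F_S] \leq I_0 < n/9$ on every Z-Pauli expectation $F_S$ of the depth-$2$ state would immediately force a contradiction for $n > 100$. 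The principal obstacle, and the heart of the technical work, will be to lift the depth-$2$ influence bound of Theorem~\ref{thm:inf_lb} (stated for a Boolean output) to the real-valued Pauli expectations $F_S$, \emph{uniformly} in $|S|$ and in the number of ancillae of the depth-$2$ circuit; once such a generalization is in place, the Fourier cancellation above closes the argument.
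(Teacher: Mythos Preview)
Your reduction to the Toffoli-target case and the Pauli--Fourier identity are fine, but the ``main obstacle'' you flag is not a technicality to be filled in later: the uniform influence bound $\Inf[F_S]\le I_0$ you need is \emph{false} as a statement about arbitrary depth-$2$ \QACZ states. Take the depth-$1$ circuit that applies $\CNOT(x_i,a_i)$ for each $i\in[n]$, so that after layers $1$--$2$ we have $a_i=x_i$ and $t=0$. For any $S\subseteq\{a_1,\ldots,a_n\}$ we get $F_S(\x)=\langle\psi(\x)|Z_t Z^S|\psi(\x)\rangle=\chi_S(\x)$, which has $\Inf[F_S]=|S|$; in particular $F_{[n]}(\x)=(-1)^{\PARITY(\x)}$ with influence $n$ and $\widehat{F_{[n]}}([n])=1$. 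Thus a multi-qubit Pauli expectation on a depth-$2$ state can already be the \PARITY character, so no influence bound uniform in $|S|$ can follow from the depth-$2$ structure alone, and your triangle-inequality step cannot close. You might hope the exactness hypothesis constrains the $F_S$, but the identity you derive is a single linear relation among $2^k$ functions; it does not force any individual $\widehat{F_S}([n])$ to be small.

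The paper takes an entirely different route that sidesteps this issue. Rather than analyzing all $F_S$ Fourier-analytically, it first applies a ``clean-up'' step (a quantum restriction ensuring each layer-$1$ gate touches at most one input), then works directly with the exactness requirement via \emph{activation functions} $f_{C,\Pi}(\x)=[\Pi\cdot C(\x)\neq 0]$. The final gate $G(S,t)$ is decomposed as a controlled operation, and a case analysis shows that either some control qubit admits a $\le 2n/3$ restriction placing us in the identity branch of $G$ (reducing to a depth-$2$ computation, contradiction), or every relevant activation function is an \AND of width $\ge 2n/3$. In the latter case, a limited-non-monotonicity lemma (at most two layer-$2$ gates are killed in any given direction) lets one build a deterministic restriction killing \emph{all} those wide \AND s simultaneously while preserving $\Omega(n)$ inputs, forcing the state on $S$ into $\ket{\vth}_S$ and again reducing $G$ to a single-qubit gate. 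Both branches land in the depth-$2$ exact lower bound. The crucial difference is that the paper exploits exactness via deterministic, structure-aware restrictions rather than trying to bound Pauli expectations one at a time; your approach would need a fundamentally new idea to control the $F_S$ for large $|S|$, and the counterexample above suggests this cannot come from Theorem~\ref{thm:inf_lb}.
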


\noindent We prove these depth-$3$ lower-bounds in \Cref{sec:d3poly}, by first showing that, after applying a quantum restriction (that keeps $\Omega(n)$ input bits alive), 
the output of the remaining depth-$(\leq 3)$ \QACZ circuit can be simulated by a classical \ACZ circuit of a slightly larger depth and size (\Cref{thm:d3ac0}). This immediately implies an $\exp(n^{\Omega(1)})$-size lower-bound due to known lower-bounds for \PARITY in \ACZ \cite{hastad1986switch}. Our techniques also apply to other functions that behave  \MAJORITY giving us the first lower-bound against the \MAJORITY~function for \QACZ circuits with polynomial ancillae. 

We observe that at very low depths ($\leq 2$), \QACZ circuits exhibit certain monotonicity properties. For \PARITY we can exploit these properties to strengthen the bound from \Cref{th:d3majlb} to a size-independent bound, \Cref{th:d3lb}, using carefully-designed classical restrictions. These restrictions rely on the property of the \PARITY function being invariant under arbitrary classical restrictions, unlike \MAJORITY, which requires balanced restrictions. Moreover, \PARITY and \MAJORITY are known to be equivalent up to a $O(1)$ factor in depth for \QAC circuits \cite{hoyer2005fanout}. Thus we expect \PARITY lower bounds for higher depths to depend on the size as in \Cref{th:d3majlb}.

Our second main result is a structural result for depth-2 \QACZ circuits. Namely, we show these circuits have \emph{low total influence}, regardless of the number of ancillae. The informal theorem statement is as follows, with the full proof given in \Cref{sec:depth_2_influence} (\Cref{thm:influence}).
\begin{theorem}[Depth-2 Influence Upper-Bound] \label{thm:inf_lb}
Let $C$ be a depth-2 \QACZ circuit with $n$ input qubits and of ancillae.
Consider the function $f_C: \B^n \to [0,1]$ defined by $f_C(x) = \Pr[\text{$C$ accepts $x$}]$. Then, $f_C$ has  total influence $O(\log n)$.
\end{theorem}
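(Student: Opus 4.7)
The plan is to exploit the essentially classical structure of a depth-$2$ \QACZ~circuit in order to write the acceptance probability as a product of two factors, and then to bound each factor's total influence via elementary Fourier analysis. First I would analyze the state $|\psi\rangle := V_1|x\rangle|0^a\rangle$ produced after the first layer $V_1$. Because $V_1$ is a product of disjoint single-qubit gates and generalized Toffolis applied to a computational-basis input, $|\psi\rangle$ factors as a tensor product $\bigotimes_q |\phi_q(x)\rangle$, in which each single-qubit factor is one of: $|x_q\rangle$/$|0\rangle$ (if $q$ is untouched by $V_1$), $U_q|x_q\rangle$/$U_q|0\rangle$ (if $q$ is in a single-qubit gate), or the computational-basis encoding of $x_q \oplus \mathrm{AND}(x_{R_q})$ (if $q$ is a Toffoli target with input controls $R_q$). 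Each $|\phi_q\rangle$ depends on a set $T_q \subseteq [n]$ of input coordinates, and the $T_q$'s for qubits targeted by different layer-$1$ Toffolis are pairwise disjoint.

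Next I would reduce the analysis to the single layer-$2$ gate $G$ touching the designated output qubit $t$, since all other layer-$2$ gates act on qubits disjoint from $t$ and commute with the projector $\Pi := |1\rangle\langle 1|_t$. A short case analysis handles $G$ absent, $G$ a single-qubit gate on $t$, or $G$ a generalized Toffoli with $t$ as a \emph{control}: $f_C$ then depends only on coordinates in $T_t$ and satisfies $I(f_C) = O(1)$ by inspection. The key case is when $G$ is a generalized Toffoli with controls $S$ and target $t$. Computing $G^\dagger \Pi G$ explicitly and evaluating on the product state $|\psi\rangle$ yields the algebraic identity
\[
1 - 2 f_C(x) \;=\; \bigl(1 - 2A(x)\bigr)\bigl(1 - 2B(x)\bigr), \qquad A(x) := |\langle 1|\phi_t(x)\rangle|^2,\ \ B(x) := \prod_{s \in S}|\langle 1|\phi_s(x)\rangle|^2.
\]
From the general Leibniz-type estimate $I(fg) \leq 2\|g\|_\infty^2 I(f) + 2\|f\|_\infty^2 I(g)$ applied to the bounded factors $(1-2A),(1-2B) \in [-1,1]$, one then obtains $I(f_C) \leq 2 I(A) + 2 I(B)$.

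The remaining task is to bound $I(A), I(B) = O(1)$. The bound $I(A) = O(1)$ is immediate, since $A$ is either a one-bit function of $x_t$ or the balanced Boolean $x_t \oplus \mathrm{AND}(x_{R_t})$. For $I(B)$ the naive factorization $B = \prod_s B_s$ may have overlapping supports, which happens precisely when some $s \in S$ is also a layer-$1$ input control in a Toffoli targeting another qubit of $S$; this overlap is resolved by the idempotent substitution $x_s \cdot B_{s'}(x) = x_s \cdot B_{s'}(x)|_{x_s=1}$, valid because $B_s = x_s$ for an input control $s$, which rewrites $B$ as a product $\prod_j C_j$ on pairwise \emph{disjoint} supports. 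The disjoint-support product formula $I(B) = \sum_j I(C_j) \prod_{j'\neq j}\|C_{j'}\|_2^2$ then applies, and the crucial observation is that every non-constant $C_j$ satisfies $\|C_j\|_2^2 \leq \tfrac{1}{2}$: for single-qubit-gated controls this uses the unitarity identity $|\langle 1|U|0\rangle|^2 + |\langle 1|U|1\rangle|^2 = 1$, forcing the two possible values of the factor to sum to $1$; for the Boolean factors ($x_s$, $x_s \oplus \mathrm{AND}$, $\mathrm{AND}$, or the combined form produced by the rewriting) it is a direct computation. Combined with $I(C_j) = O(1)$ for each factor, this yields $I(B) \leq O(k/2^k) = O(1)$ when $k$ non-constant factors remain. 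The main obstacle is the support overlap in the naive product for $B$, dispelled by the above substitution; once disjointness is restored, the remainder is standard Fourier bookkeeping together with a one-line unitarity estimate. The plan in fact delivers the strictly stronger bound $I(f_C) = O(1)$, of which $O(\log n)$ is a comfortable weakening.
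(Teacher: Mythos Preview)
Your proposal claims the stronger bound $\Inf[f_C] = O(1)$, but this is \emph{false}: as the paper itself notes, the \TRIBES function can be computed exactly by a depth-$2$ \QACZ circuit and has total influence $\Theta(\log n)$. The error lies in your ``crucial observation'' that every non-constant factor $C_j$ in the product for $B$ satisfies $\|C_j\|_2^2 \le \tfrac{1}{2}$. Your enumeration of the possible single-qubit states $\ket{\phi_q(x)}$ is incomplete: it omits the case where $q$ is a layer-$1$ Toffoli target \emph{followed} by a single-qubit gate (single-qubit gates are free in this model and may be inserted between the two Toffoli layers). Concretely, the standard depth-$2$ circuit for \TRIBES computes each term $\AND(x_{i,1},\ldots,x_{i,w})$ into an ancilla $a_i$ in layer~$1$, applies a \NOT to each $a_i$, and then feeds all ancillae into a single layer-$2$ Toffoli. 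The factor associated with $a_i$ is therefore
\[
C_i(x) \;=\; |\langle 1 \,|\, X \,|\, \AND(x_{i,\cdot})\rangle|^2 \;=\; \text{NAND}(x_{i,1},\ldots,x_{i,w}),
\]
for which $\|C_i\|_2^2 = 1 - 2^{-w}$, not $\le \tfrac{1}{2}$. Plugging into your own disjoint-product formula gives $I(B) = s \cdot I(\text{NAND}_w)\cdot (1-2^{-w})^{s-1} = \Theta(\log n)$ for the usual parameters $w = \Theta(\log n)$, $s = \Theta(n/\log n)$, matching the true influence of \TRIBES and refuting the $O(1)$ claim.

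There is a second, more basic structural gap. Your assertion that the state after layer~$1$ is a product over individual qubits, $\ket{\psi} = \bigotimes_q \ket{\phi_q(x)}$, fails whenever single-qubit gates \emph{precede} a layer-$1$ Toffoli on its controls: a Toffoli with a control in superposition (say $\ket{+}$) produces genuine entanglement, so no single-qubit factorization exists. This is precisely what makes the ancilla regime nontrivial. The paper's proof circumvents both obstacles differently: it first removes layer-$1$ gates depending on many input qubits (small $\ell_2$ error), applies a random-valued restriction so that each remaining layer-$1$ gate carries at most one input qubit, and then bounds the Fourier tail via a trace-distance/entropy argument on the mixed state entering the layer-$2$ gate, splitting on whether that gate activates with non-negligible probability. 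The resulting $O(\log n)$ bound is tight.
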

\noindent
First, note that this result is \emph{tight}. Specifically, consider the Tribes function $\TRIBES(x) = \lor_{i=1}^{s}\land_{j=1}^{w}x_{i,j}$ that can be exactly implemented by depth-$2$ \QACZ circuits with $s+1$ ancillae and has $\Inf[f] = \Theta(\log n)$ for a specific choice of parameters ($s = \Theta(n/\log n)$ and $w=\Theta(\log n)$).

\Cref{thm:inf_lb} follows from a proof of stronger Fourier tail bounds for functions computable by depth-2 \QACZ~circuits, using novel entropy-based arguments. We also use these improved Fourier tail bounds to prove the following correlation bound against \PARITY (see \Cref{thm:corr_proof}), thus ruling out the approximate computation of \PARITY~in depth-2 \QACZ.
\begin{corollary}
    \label{thm:par_corr_bound}
    $f_C$ has correlation at most $\exp(-\Omega(\sqrt{n}))$  with \PARITY.
\end{corollary}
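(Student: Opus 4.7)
The plan is to deduce the corollary from the finer Fourier tail control on depth-2 $\QACZ$ circuits developed in $\Cref{sec:depth_2_influence}$, of which \Cref{thm:inf_lb} is only the first-moment consequence. Passing to the $\pm 1$ convention for $f_C$, observe that
\[
\mathrm{corr}(f_C, \PARITY) \;=\; \left| \Es{\x}\bigl[(-1)^{\PARITY(\x)}\, f_C(\x)\bigr]\right| \;=\; \bigl|\widehat{f_C}([n])\bigr|,
\]
so the entire corollary reduces to an upper bound on the single top-level Fourier coefficient.

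The core technical step is to sharpen the total-influence bound into an actual Fourier tail estimate whose value at level $k = n$ is of the right order. The entropy-based argument that yields $\Inf[f_C] = O(\log n)$ in fact controls the Fourier mass at each individual level of the level distribution, and concentrating its output gives a tail bound of the form $\sum_{|S|\geq k} \widehat{f_C}(S)^2 \leq \exp(-\Omega(\sqrt{k}))$ (or any comparable form whose value at $k = n$ is $\exp(-\Omega(\sqrt n))$). I would establish this by running the entropy bookkeeping separately at each Fourier level and aggregating with a Chernoff-style summation, in the same spirit as how high-moment refinements of total-influence bounds are obtained for analytic classes such as low-depth decision trees.

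Once that tail estimate is in place, Parseval closes the loop:
\[
\bigl|\widehat{f_C}([n])\bigr|^{2} \;\leq\; \sum_{|S|\geq n} \widehat{f_C}(S)^{2} \;\leq\; \exp\!\left(-\Omega(\sqrt n)\right),
\]
and taking a square root delivers the advertised $\exp(-\Omega(\sqrt n))$ correlation bound.

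The main obstacle is precisely the first step: upgrading the influence bound to an exponential tail bound with the correct scaling. Applying Markov directly to $\Inf[f_C] = O(\log n)$ only produces polynomial tail decay $O(\log n / k)$, which is $\widetilde{O}(1/n)$ at $k = n$ and therefore useless against $\PARITY$. The crucial input is the sharper level-by-level Fourier accounting carried out in \Cref{sec:depth_2_influence}, which tracks not just how much influence a depth-2 $\QACZ$ circuit can concentrate but also how that influence is distributed across levels — once that distributional statement is in hand, the reduction from the tail bound to the $\PARITY$ correlation estimate is purely mechanical via Parseval.
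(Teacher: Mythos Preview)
Your proposal is correct and follows essentially the same route as the paper: identify the correlation with \PARITY\ as $|\widehat{f_C}([n])|$, bound it by $\sqrt{\W^{\ge n}[f_C]}$, and invoke the Fourier tail bound from \Cref{sec:depth_2_influence} at level $k=n$ to obtain $\exp(-\Omega(\sqrt{n}))$.

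One clarification worth making: you describe needing to ``sharpen the total-influence bound into an actual Fourier tail estimate'' as additional work, but in the paper the logical order is reversed. The main result of \Cref{sec:depth_2_influence} (\Cref{thm:influence}) \emph{is} already the tail bound $\W^{\ge k(\eps)}[f_C]\le\eps$ for $k(\eps)=c\log(1/\eps)\log(n/\eps)$; both the $O(\log n)$ influence bound and the \PARITY\ correlation bound are derived from it as corollaries. Inverting this relation (\Cref{thm:weigt_reex}) gives precisely $\W^{\ge k}[f_C]\le\exp(-\Omega(\sqrt{k}))$ once $k\ge 2c\log^2 n$, which at $k=n$ yields the bound you want. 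So no separate ``level-by-level entropy bookkeeping plus Chernoff aggregation'' is needed beyond what the proof of \Cref{thm:influence} already does; the mechanical Parseval step you describe is exactly what the paper carries out.
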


Our next result is a depth-2 \emph{unlimited ancillae} \QACZ~lower-bound against \emph{exact} preparation of nekomata states. In fact, we are able to rule out a depth-$2$ circuit for creating any entangled state that only has nonzero amplitude in two subspaces, corresponding to the all $0$s, $\ket{\0}_{[n]}$ and all $1$s, $\ket{\1}_{[n]}$ branches on any set of $n > 1$ ``target'' qubits. We call such states, as below,  ``generalized nekomatas'',
\begin{align}
  \ket{\psi} =  \alpha \ket{0^n} \ket{\psi_\alpha}
      + \beta \ket{1^n} \ket{\psi_\beta}, 
    &\qquad \text{with } \alpha \neq 0 \text{ and } \beta \neq 0.
\end{align}
The informal theorem statement is as follows, with the full proof given in \Cref{sec:generalized_neko} (\Cref{cor:nekolb}).
\begin{theorem}[Depth-2 Nekomata Lower-Bound] \label{thm:d2_lb}
A depth-$2$ \QACZ circuit with an arbitrary number of ancillae and gates cannot exactly prepare a ``generalized'' nekomata on $n > 4$ targets.
\end{theorem}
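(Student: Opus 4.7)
The plan is to exploit the extreme rigidity of depth-$2$ \QACZ~circuits acting on $\ket{0^N}$. First, I would decompose $C = L_2 L_1$ into two layers of gates on pairwise-disjoint supports. Since every multi-controlled Toffoli in $L_1$ is controlled on $\ket 1$s but each qubit starts in $\ket 0$, every multi-qubit gate in $L_1$ acts as the identity; only single-qubit gates in $L_1$ have any effect, so after $L_1$ the state is a tensor product $\bigotimes_i \ket{\phi_i}$ of single-qubit states. The second layer $L_2$ then applies generalized Toffolis on disjoint supports $B_1, B_2, \ldots$, and the final state factorizes across these blocks as $\ket{\Psi} = \bigotimes_j \ket{\psi_{B_j}}$.

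The first structural claim is that all $n$ target qubits must lie inside a single $L_2$-block $B$. If two targets $t_1, t_2$ lay in different blocks, their joint reduced density matrix would factor as $\rho_{t_1} \otimes \rho_{t_2}$; but the generalized nekomata's two-target marginal is supported only on $\{\ket{00}, \ket{11}\}$ with nonzero weight on both, which is incompatible with any product of single-qubit marginals each having positive weight on both $\ket 0$ and $\ket 1$ (forced by $\alpha, \beta \neq 0$). Since an $L_2$ single-qubit gate covers only one qubit, the block $B$ must be the support of a single multi-controlled Toffoli $G_2$ with controls $C$ and a single target $\tau$. I would then write the output of $G_2$ on the input product state as
\begin{equation*}
\ket{\Psi}_B \;=\; \ket{U} \;+\; P \cdot \ket{1^{|C|}}_C \otimes (X - I)\ket{\phi_\tau}_\tau,
\end{equation*}
where $\ket{U} = \bigotimes_{i \in B}\ket{\phi_i}$ and $P = \prod_{i \in C} b_i$ is the amplitude that all controls of $G_2$ are $\ket 1$. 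Tracing out the ancillae in $B$ yields a rank-$\leq 2$ decomposition $\rho_T = \ket{W}\bra{W} + (1 - |\gamma|^2)\ket{V'}\bra{V'}$, where $\gamma$ is the amplitude that the ancilla controls are in $\ket 1$, $\ket{V'}$ is the $T$-restriction of the correction term above, and $\ket{W} = \ket{U'} + \bar{\gamma}\ket{V'}$.

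The crux is to combine the support constraint $\mathrm{supp}(\rho_T) \subseteq \mathrm{span}(\ket{0^n}, \ket{1^n})$ with the following rigidity lemma: a product state on $m \geq 2$ qubits lying in $\mathrm{span}(\ket{0^m}, \ket{1^m})$ must equal $\ket{0^m}$ or $\ket{1^m}$ up to phase. I would split on whether $\tau$ is an ancilla or a target qubit. If $\tau$ is an ancilla, the product state $\ket{\Phi}_T$ itself appears as a factor of $\ket{U'}$, so the rigidity lemma forces $\ket{U'} \in \{\ket{0^n}, \ket{1^n}\}$, collapsing $\rho_T$ to a pure $\ket{0^n}\bra{0^n}$ or $\ket{1^n}\bra{1^n}$ and contradicting $\alpha, \beta \neq 0$. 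The main obstacle is the case $\tau \in T$, where the correction $(X - I)\ket{\phi_\tau}$ can cancel unwanted components of $\ket{U'}$. There I would further split on $|\gamma| < 1$ versus $|\gamma| = 1$: in the former regime both $\ket{U'}$ and $\ket{V'}$ must separately lie in the two-dimensional span, reducing the problem to applying the lemma to $\ket{\Phi}_{T \setminus \tau}$ on $n - 1$ qubits; in the latter regime only $\ket{W}$ need lie in the span, but a direct coefficient-matching argument against $\ket{0^n}$ and $\ket{1^n}$ still forces $\ket{\Phi}_{T \setminus \tau}$ into a product state in $\mathrm{span}(\ket{0^{n-1}}, \ket{1^{n-1}})$. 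In every subcase the rigidity lemma then collapses $\rho_T$ to a single diagonal element, yielding the contradiction once the hypothesis on $n$ guarantees $n - 1 \geq 2$.
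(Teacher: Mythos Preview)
Your proof has a fundamental gap at the very first step. You claim that ``every multi-controlled Toffoli in $L_1$ is controlled on $\ket{1}$s but each qubit starts in $\ket{0}$, [so] every multi-qubit gate in $L_1$ acts as the identity.'' This is false in the \QACZ model: single-qubit unitaries are free and do not count toward depth, so there is a layer of arbitrary single-qubit gates \emph{before} the first layer of Toffolis. After those gates each qubit sits in an arbitrary state $\ket{\phi_i}$, and the layer-$1$ Toffolis can and generically do create entanglement within their supports. Equivalently, in the reflection-gate normal form a layer-$1$ gate is $G(S)=I-2\kb{\vth}_S$ for an arbitrary separable $\ket{\vth}_S$, and $G(S)\ket{0^{|S|}}$ lies in $\mathrm{span}\{\ket{0^{|S|}},\ket{\vth}_S\}$, which is typically entangled across $S$.

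This error cascades through the rest of the argument: the state entering $L_2$ is a product over the $L_1$-blocks, not over single qubits, and since $L_1$-blocks and $L_2$-blocks need not coincide, the final state does \emph{not} factorize across the $L_2$-blocks $B_j$. Your single-block reduction and the subsequent analysis of one Toffoli acting on a product input therefore do not apply. The paper handles exactly this mismatch by working backward: it applies $L_2^{\dagger}$ to the putative nekomata and shows, via carefully chosen separable post-selections on qubits of each layer-$2$ gate, that the resulting depth-$1$ state is still a generalized $\lceil n/2\rceil$-nekomata under separable post-selection; a separate lemma then rules this out for more than $2$ targets by exploiting that a depth-$1$ state lives in the span of two non-orthogonal separable vectors. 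That post-selection device is precisely what lets the argument survive the interleaving of the two layers' block structures, and it is the key idea missing from your approach.
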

\noindent Note that this bound is also \emph{tight}, since any $\ket{\Cat_n}$ is also a $n$-nekomata and we can construct the state $\ket{\Cat_4}$ in depth-$2$ by constructing $\ket{\Cat_2} = \ket{\mathrm{EPR}}$ in depth-$1$. Rosenthal \cite{rosenthal2021qac0} gave a circuit for approximating an $n$-nekomata with exponential ancillae in depth-2, which is then used as a sub-circuit to obtain a parity circuit. Our result implies that such an approximation cannot be made exact by simply using more ancillae, thus ruling out the exact analog of Rosenthal's parity circuit in the same depth. 

Note that we define these generalized nekomata states only to aid proving our depth-$2$ bound against balanced nekomatas and we do not expect highly unbalanced instances to be useful or powerful. For example, the state $\sqrt{(1-\eps)} \cdot \ket{0^n} + \sqrt{\eps} \cdot \ket{1^n}$, which is a generalized $n$-nekomata, can be simply approximated by $\ket{0^n}$. 

Our final result provides evidence for the robustness of our exact lower-bound techniques by showing that computing \PARITY approximately on a random input in \QACZ is just as hard as computing it exactly.   
The informal statement is as below, 
\begin{theorem}[$1/\polylog(n)$-adv for \PARITY is equivalent to exact in \QACZ]\label{thm:approx_inf}
  Let $C$ be a depth-$d$ \QACZ circuit with $n$ input qubits and $a$ ancillae such that $f_C(\x)$, the acceptance probability of $C$ has correlation at least $1/\log(n)^\delta$ with \PARITY. Then, there exists a depth-$O(d)$ with circuit with $O(a \cdot n^{\delta})$ ancillae that \emph{exactly} computes \PARITY on $n$ qubits.
\end{theorem}
We describe the main ideas behind this reduction in \Cref{sec:reduction_tech} (\Cref{thm:approxtoexactpar}). The formal proof follows though a combination of several standard \QACZ techniques and we include it in \cref{sec:aproofs}.
\section{Techniques}

\subsection{Key Challenges and High-Level Intuition} 
The main challenge in proving lower-bounds for \QACZ~circuits stems from their ability to use multiple ancilla qubits in a single gate. In contrast, the concept of ancillae is foreign to classical \ACZ~circuits, as they have no advantage in using ancillae (e.g., they can be replaced with \FANOUT).

Perhaps a more suitable classical analogue of the use of ancillae in \QACZ~comes from \emph{classical reversible circuits}. These circuits are significantly weaker than \ACZ~and limited only to use of reversible gates (i.e. \TOFFOLI~and \NOT). They are equivalent in power to bounded-read \ACZ formulae, where each variable appears at most $2^d$ times.
In classical reversible circuits, each gate can only spread the influence of a bit to one other bit. Thus, in depth-$d$ we can only compute \FANOUT of size $2^d$. Therefore, depth-$\log n$ is necessary to perform \FANOUT of size $n$.

In the context of decision problems, a \QACZ~circuit without ancillae has roughly the same power as a mere classical reversible circuit. For uniformly random inputs, the state after each layer is maximally mixed (due to unitarity and uniform input distribution).
This was the key insight of \cite{nadimpalli2024pauli}, i.e., gates acting on many qubits are rarely active and can be removed with little error, thus enabling similar light-cone arguments to the classical reversible setting.

In \QACZ~circuits with ancillae, a gate can use multiple ancillae such that all these ancillae become correlated with the input after the gate. However, even in the classical setting, there is a crucial distinction between the notions of \FANOUT in \ACZ circuits and in reversible circuits, in which a \FANOUT operation is required to preserve reversibility.
Simply being entangled with many ancillae is not an indication of the latter since the input cannot be reliably recovered from a single ancilla alone.
Our techniques exploit this distinction. We observe that reversibility puts significant constraints on the type of computation that  \QACZ~circuits can perform. To our knowledge, these constraints cannot be bypassed with only polynomially many ancillae. 
To uncover the true nature of their computation, it is crucial to study these circuits without severely limiting the number of ancillae, i.e., by considering their power with an arbitrary polynomial number of ancillae.

\subsection{Setup}
We define the output of a \QACZ circuit to be the output on a single designated register $t$, as in \Cref{def:qacoutput}. We \emph{do not} require any particular output state for qubits other than the output register $t$.
This is a weaker requirement than ``clean-computation'' or ``dirty-computation'' used in the prior works that establish an approximate depth-$2$ lower bound  \cite{rosenthal2021qac0}, making our results stronger. We find that, for fixed-depth circuits, relaxing the output requirement to a single register reveals more about the structure of the circuit's computation of $f$. 

\begin{restatable*}[Circuit computing classical function $f$]{definition}{qacoutput}\label{def:qacoutput}
Let $C$ be a \QACZ circuit with a designated target register $t$ and associated output basis $\lr{\ket{\mu_0}, \ket{\mu_1}}$
with $\braket{\mu_0|\mu_1} = 0$. We say that $C$ {\emph computes a Boolean function} $f: \bin^n \to \bin$, if 
  for all $\x \in \bin^n$, the output of $C(\x)$ on $t$ is exactly $\ket{\mu_{f(\x)}}$ and unentangled with other qubits. 
  Equivalently, 
    $$\forall \x \in \bin^n \ \ \bra{\mu_{f(\x) \oplus 1}} \cdot C(\x) = 0.$$
\end{restatable*}

\subsection{Depth-3 circuits cannot compute Parity or Majority}
Our proof consists of several components that we detail below.  
\subsubsection{Block Diagonalization of Gates} \label{sec:pregates}
All the multi-qubit classical reversible gates (\AND and \OR, composed with \NOT) can be viewed as reflection gates. As shown by \cite{rosenthal2021qac0}, this lets us define our gate set as reflections about arbitrary separable states. 
For example, a Toffoli gate with controls on qubits in $S$  and a target $t$ is given by \[(I - 2\kb{\1}_S \tens \kb{+}_t).\] 

Although reflection gates have no inherent ``controls'' and ``target'', we can arbitrarily partition the qubits into controls and targets to view these gates as controlled unitaries as follows. For a gate given by $G(S) = (I - 2\kb{\vth}_S)$, and any partitioning of qubits in $G$ into two sets $S = (X,Y)$, $G$ can be diagonalized as follows, 
\begin{align}\label{eq:pre_gatedec}
  G(S) = (I - \kb{\vth}_{X}) \tens I_Y +\kb{\vth}_X \tens (I - 2\kb{\vth})_Y
\end{align}
This can be interpreted as applying a smaller reflection, $(I - 2\kb{\vth})_Y$, on the targets $Y$, in the $\ket{\vth}_X$ subspace on the controls, and doing nothing to $Y$ in the orthogonal subspace on the controls. 
Additionally, these two subspaces are invariant under $G(S)$ because any measurement on the qubits $X$ in the $\ket{\vth}$ basis commutes with such a gate. Therefore, any projector that is either $\kb{\vth}_X$ or orthogonal to $\kb{\vth}_X$ commute with $G$ and induces a unitary on $Y$. 
For example, for any $q \in S$, and $S' = S \setminus q$, conditioning on outcome $\ket{\th_q}$ on $q$ gives, 
\begin{align}
  \kb{\th_q} \cdot G(S) &= \kb{\th_q} \tens I_{S'} - 2\kb{\vth}_S \\
                   &=  \kb{\th_q} \tens (I_{S'} - 2\kb{\vth}_{S'}) \\
                   &= G(S') \tens \kb{\th_q}  
\end{align}
Where $G(S')$ is a valid reflection gate on the smaller subset $S'$. Also, conditioning on outcome $\ket{\th^\perp_{q}}$ gives,
\begin{align}
  \kb{\th^\perp_q}  G(S) &= \kb{\th^\perp_q} \tens I_{S'} - 2 \underbrace{\kb{\th^\perp_q} \cdot \kb{\vth}_S}_{= \braket{\th_q | \th^\perp_q} = 0} \\
&=  \kb{\th^\perp_q} \tens  I_{S'} \label{eq:kill}
\end{align}
Such measurements are unaffected by whether they are performed on the state before the gate or after. More generally, if the state of the qubits on $S$ is either $\ket{\vth}_S$ or some (possibly entangled) state orthogonal to $\ket{\vth}_S$, the gate only adds a global phase to the state and does not create any additional entanglement. The same can be said of any subset $S' \subseteq S$ that is in the state $\ket{\vth}_{S'}$, the gate does not create any entanglement on qubits in $S'$, but it may do so on qubits $S \setminus S'$.  

If a state on a subset of qubits $S$ is denoted as $\ket{\psi}_S$, a pure state (as opposed to a mixed state $\rho_S$), then the qubits in $S$ are not entangled with anything outside of $S$. For example, each single-qubit component of a fully separable state $\ket{\vth_S}$ is a pure state $\ket{\th_q}$ on qubit $q \in S$. 

\subsubsection{Quantum Analogue of Restrictions}
A central building block of our depth-$3$ lower-bound is a new technique, referred to as the ``clean-up step''.  
This step applies ``quantum restrictions" to simplify the first layer of gates in the circuit, such that each of them depends on at most 1 input qubit. We then show that for \emph{any} cleaned-up circuit of depth-$\leq 3$, the circuit's output can be simulated classically by a \ACZ~circuit whose size is polynomial in $n$.

Recall that a depth-$d$ \QACZ circuit consists of $d$ layers of gates on $n$ input qubits, denoted by $[n]$, and ancillae qubits $A$ with $|A| = \poly(n)$. The reversibility property of the circuit enforces that each qubit appears in at most one gate per layer.  As evidenced by classical techniques, such as the Switching Lemma \cite{hastad1986switch}, it is useful to ensure that all coordinates have disjoint light-cones in the bottom layer, i.e, each layer-$1$ gate contains at most one input qubit. Our clean-up step lets us simply convert any \QACZ~circuit computing a Boolean function $f$ that behaves well under restrictions (e.g., \PARITY and \MAJORITY) to one that computes $f$ on a subset of at least $n/3$ coordinates, with the additional guarantee that each coordinate appears in at most one gate. This is reminiscent of clean-ups performed on \ACZ~circuits using random restrictions that simplify the first layer of gates. 

As in \cite{nadimpalli2024pauli}, when analyzing \QACZ for approximate computation, we argue in our approximation lower bound that gates containing many input coordinates can be replaced by identity, while incurring a small error.  However, this argument is not suitable for analysis of circuits with exact output, because the resultant circuit is no longer exact. In classical reversible circuits, however, this type of simplification can be achieved via deterministic restrictions, which preserve the exact computation. For instance, restricting a single coordinate of each layer-$1$ \AND gate to $0$ kills the gate and leaves the rest of the coordinates unrestricted. This means that one can perform such a clean up on classical reversible circuits and leave at least half of the coordinates unrestricted (by restricting only coordinates from gates of width at least $2$).  However, since gates in a \QACZ circuit can be reflections about arbitrary separable states, we cannot always achieve this effect with a classical restriction. 

Our technique, therefore, is a generalization of these deterministic restrictions for quantum circuits. The main idea is to restrict the input to a state lying in a subspace orthogonal to the gate's reflection which, in effect, deactivates the gate. Additionally, we want to be able to apply these restrictions in \QACZ. Finally, to ensure that the circuit still correctly outputs $f(\x)$ after this quantum restriction, the restriction is specifically chosen to be a superposition of the classical restrictions that keep $f(\x)$ invariant. For $f = \PARITY$ these are precisely the states formed by superpositions of classical restrictions of the same parity, and for $f = \MAJORITY$, these are superpositions of classical restrictions with equal number of $0$s and $1$s.
We describe our clean-up step as the following lemma and defer its proof to \Cref{sec:d3poly}. 
\begin{restatable*}[Clean-Up Lemma]{lemma}{cleanup}\label{lem:l1cleanup}
  Let $C$ be a depth-$d$ circuit that computes $f(\x)$, where $f(\x)$ is \PARITY or \MAJORITY, on $n > 2$ coordinates with a separable ancilla starting state. Then, there is a depth-$d$ circuit $C'$ that computes $f(\x)$ on $n/3$ coordinates using a separable ancilla state and satisfies that every layer-$1$ gate of $C'$ contains at most $1$ input qubit.  
\end{restatable*}

It turns out that performing this clean-up step gets us most of the way to an exact depth-$2$ lower bound, stated below as \Cref{cor:d2par}.
We note that \cite{fenner2025tightboundsdepth2qaccircuits} already provide a tighter bound, in terms of $n$, for \PARITY in depth-$2$. Nevertheless, we include our alternative proof of \Cref{cor:d2par} below, which also applies to \MAJORITY and encompasses the main ideas used in \Cref{sec:d3poly} for the depth-$3$ bound.

\begin{corollary}[Depth-2 Exact \PARITY/\MAJORITY Lower-Bound]\label{cor:d2par}
  Let $C$ be a $n$-input depth-$2$ \QACZ circuit with an arbitrary number of gates and ancillae, whose ancillae start in a separable state. Then, $C$ cannot compute \PARITY for $n > 6$ or \MAJORITY for $n > 12$ coordinates. 
\end{corollary}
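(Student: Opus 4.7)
I would first apply \Cref{lem:l1cleanup} to $C$, obtaining a depth-$2$ circuit $C'$ that computes $f \in \{\PARITY,\MAJORITY\}$ on $n' \ge n/3$ input qubits, with every layer-$1$ gate of $C'$ containing at most one input qubit. Because the inputs and ancillae both begin in product form and the layer-$1$ gates act on disjoint qubit sets, the post-layer-$1$ state then factorizes as $\ket{\Psi(x)} = \bigotimes_g \ket{\psi_g(x_g)}$ across ``groups'' $g$ (the qubits of each layer-$1$ gate, with untouched qubits as singleton groups); each factor depends on at most one input bit $x_g$.

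\textbf{Step 2 (Reduction to a single reflection).} Let $t$ be the output qubit. If $t$ lies in no layer-$2$ gate, then the reduced state on $t$ depends on at most one input bit, already ruling out $\PARITY$ and $\MAJORITY$ for $n' \ge 2$. Otherwise, let $G_2 = I - 2\ketbra{\theta}$ be the layer-$2$ gate containing $t$, acting on a set $S_2 \ni t$. Since layer-$2$ gates disjoint from $S_2$ act only on qubits that are ultimately traced out, they leave $\rho_t$ unchanged; hence
\[
  \rho_t(x) \;=\; \operatorname{Tr}_{S_2 \setminus t}\!\bigl( G_2\, \rho_{S_2}(x)\, G_2^\dagger \bigr),
  \qquad
  \rho_{S_2}(x) \;=\; \bigotimes_g \rho_g(x_g),
\]
where $\rho_g(x_g)$ is the reduction of $\ket{\psi_g(x_g)}\bra{\psi_g(x_g)}$ to $g \cap S_2$.

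\textbf{Step 3 (Single-reflection analysis).} Exact computation forces $\rho_t(x) = \ketbra{\mu_{f(x)}}$ for every $x$. Purifying the $\rho_g$'s yields a product state $\ket{\tilde\Psi(x)} = \bigotimes_g \ket{\tilde\psi_g(x_g)}$ with
\[
  (G_2 \otimes I_{\mathrm{purif}})\,\ket{\tilde\Psi(x)} \;=\; \ket{\tilde\Psi(x)} \,-\, 2\,\braket{\theta | \tilde\Psi(x)}\,\ket{\theta},
\]
a rank-$2$ perturbation of a product state whose entire $x$-dependence is mediated by the single scalar overlap $\braket{\theta|\tilde\Psi(x)} = \prod_g \braket{\theta_g|\tilde\psi_g(x_g)}$, which itself factors multiplicatively over groups. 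Requiring $t$ to emerge in an unentangled pure state $\ket{\mu_{f(x)}}$ for every $x$ forces the Schmidt decomposition of this vector across the bipartition $t$ vs.\ everything else to collapse to rank $1$ in direction $\ket{\mu_{f(x)}}$. Matching these constraints across inputs differing in a single coordinate then yields algebraic identities on $(\ket{\theta_q}, \ket{\tilde\psi_g(\cdot)}, \ket{\mu_0}, \ket{\mu_1})$ that force $f$ to depend on at most a bounded number of bits---at most $2$ for $\PARITY$ and at most $4$ for $\MAJORITY$. Since $\PARITY$ on $n' \ge 3$ coordinates and $\MAJORITY$ on $n' \ge 5$ coordinates genuinely depend on every input, combining with $n' \ge n/3$ gives a contradiction whenever $n > 6$ (resp.\ $n > 12$), proving the claim.

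\textbf{Main obstacle.} The hardest step is the algebraic case analysis in Step~3: extracting from the ``exact pure unentangled output on $t$'' requirement a concrete bound on how many coordinates $f$ can effectively depend upon. The crucial leverage is that the entire $x$-dependence of $G_2\ket{\tilde\Psi(x)}$ enters through a scalar overlap that factors multiplicatively over groups---a very rigid ``product'' form, essentially equivalent to a depth-$1$ quantum operation on independently-encoded single-bit registers---which is incompatible with the high per-coordinate sensitivity of $\PARITY$ and $\MAJORITY$ on more than a handful of inputs. The remaining steps (clean-up, post-layer-$1$ tensor decomposition, and elimination of the irrelevant layer-$2$ gates) are direct consequences of the setup plus \Cref{lem:l1cleanup}.
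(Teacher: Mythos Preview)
Your Steps 1 and 2 are fine and match the paper's setup. The gap is in Step 3: you never actually carry out the ``algebraic case analysis,'' and the sentence ``matching these constraints across inputs differing in a single coordinate then yields algebraic identities \ldots\ that force $f$ to depend on at most a bounded number of bits'' is an assertion, not an argument. Moreover, your claim that ``the entire $x$-dependence is mediated by the single scalar overlap $\braket{\theta|\tilde\Psi(x)}$'' is not accurate as stated: the first term $\ket{\tilde\Psi(x)}$ in $G_2\ket{\tilde\Psi(x)} = \ket{\tilde\Psi(x)} - 2\braket{\theta|\tilde\Psi(x)}\ket{\theta}$ also depends on $x$, and once you trace out the purifying registers and everything but $t$, the interaction between these two terms is not obviously controlled by the product structure of the overlap alone. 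You would need a Schmidt-rank-$1$ condition on a state whose two summands live on different registers (one involving the purification, one not), and you have not indicated how to close that.

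The paper's argument bypasses this entirely with a single clean observation you are missing. Pick any qubit $q \in S$ (the non-target qubits of the final gate $G = I - 2\ketbra{\vec\theta}_{S,t}$) for which the post-layer-1 state is not always in $\ket{\theta_q}$; such a $q$ exists unless $G$ degenerates to a single-qubit unitary on $t$. Then the projector $\Pi_{q,t} = \ketbra{\mu_b}_t \otimes \ketbra{\theta_q^\perp}_q$ satisfies $\Pi_{q,t}\cdot\ketbra{\vec\theta}_{S,t} = 0$, so $\Pi_{q,t}\cdot G = \Pi_{q,t}$: the projector \emph{kills} the gate. Hence $\Pi_{q,t}\cdot C(x) = \Pi_{q,t}\cdot C^1(x)$, and the right-hand side depends only on the (at most two) layer-$1$ groups containing $q$ and $t$, i.e., on at most two input bits. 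Choosing $b$ so that this is nonzero for some input, a restriction on those $\le 2$ bits makes it nonzero for \emph{all} remaining inputs, forcing $f$ to be constant after fixing two bits---impossible for $\PARITY$ on $>2$ bits or $\MAJORITY$ on $>4$ bits. This ``project orthogonally to the reflection to deactivate the gate'' trick is the missing ingredient; once you have it, Step~3 is two lines rather than an open-ended case analysis.
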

\begin{proof}
  Suppose $C$ computes $f(\x)$ where $f(\x)$ is \PARITY or \MAJORITY. Then, we can obtain a cleaned-up version of $C$ that computes $f(\x)$ on $n' \geq n/3$ coordinates by applying \Cref{lem:l1cleanup}. Now we will proceed to prove that a \emph{cleaned up circuit} $C$ on $n$ inputs cannot compute \PARITY for $n > 2$ or \MAJORITY for $n > 4$. 
  Let $G(S,t) = (I - 2\kb{\vth}_{S,t})$ be the final gate of $C$, containing the output register $t$. 
  We consider two cases depending on the state on $S$ in $C(\x)$:
\begin{description}
    \item[Case 1:] If the state on $S$ in $C(\x)$ is always $\ket{\vth}_S$, regardless of the input, we can simplify $G(S,t)$ to a single-qubit unitary on $t$. This results in a depth-$1$ circuit in which the light-cone of $t$ contains a single coordinate. This cannot happen unless $n = 1$, because the output of $t$ does not depend on the other coordinates. 
    \item[Case 2:] Otherwise, there is some qubit $q \in S$ such that $\kb{\pth_q} \tens I \cdot C(\x)$ is not always $0$.  Pick $b$ such that, for the two qubit projector, $\Pi_{q,t} :=\kb{\pth_q} \tens  \kb{\mu_b}_t$, the quantity $\Pi_{q,t} \tens I \cdot C(\x)$ is also not always $0$. This is always possible because $\kb{\mu_1} + \kb{\mu_0} = I$.
  Letting $C^1$ denote the depth-$1$ sub-circuit of $C$ we have, 
\begin{align}
  \Pi_{q,t} \tens I \cdot G(s,t) \cdot C^1(\x)  &= \Pi_{q,t} \tens I \cdot C^1(\x)
\end{align}
Then, $\Pi_{q,t}$ has at most two gates in its $C^1$ light-cone and, thus, depends on at most two input coordinates (due to our cleanup). 
Therefore, there exists a  classical restriction $R$ on these $\leq 2$ coordinates such that $\Pi_{q,t} \tens I \cdot C|_R(\y) \neq 0$ on all strings $\y \in \bin^{n - |R|}$. If $C$ computes $f(\x)$ correctly, it must be that $f_R(\y) = b$ is a constant function. 
For \PARITY this cannot happen when $n-|R|> 0$  and for \MAJORITY this cannot happen when $n-|R| > 2$, regardless of the values $R$ chooses for these (at most) $2$ bits. Therefore, $C$ cannot compute \PARITY on $n > 2$ coordinates or \MAJORITY on $n > 4$ coordinates. 
\end{description}
\end{proof}

\subsubsection{Classical Simulation of Shallow Quantum Circuits}

For our depth-$3$ lower bound, we show that parts of the circuit can be simulated classically in the following sense. The circuit can be described as a unitary transformation on the $2^{n + |A|}$ dimensional space of the inputs and ancillae $A$. Since our ancillae start in a fixed state $\ket{\0}_A$, the state of the circuit at any point of the computation is restricted to lie in a smaller, rank $2^n$ subspace. Additionally, for circuits as in \Cref{def:qacoutput}, the final state of the circuit fully lies in one of two orthogonal rank $2^{n-1}$ subspaces, determined by the classical function of the input. Extending this idea to other circuits,   
for any projector $\Pi$ and circuit $C$, we define a classical function $f_{C,\Pi} : \bin^n \to \bin$  that captures whether or not $C(\x)$ has a component in the +1 eigenspace of $\Pi$. We call this the \emph{activation function} of $\Pi$ on $C$.
That is, \[f_{C,\Pi}(\x) = [\Pi \cdot C(\x)  \neq 0]\] where $[\cdot]$ stands for the indicator of an event.

Observe that for any  $C$ that exactly computes a classical function $f(\x)$ on a target register $t$ in the ${\ket{\mu_0}, \ket{\mu_1}}$ basis, $f_{C,\kb{\mu_1}}(\x)$ is precisely $f(\x)$ and $f_{C,\kb{\mu_0}}(\x)$ is $\neg f(\x)$. However, we don't have an analogue of this for intermediate states of the circuit.
In general, it is possible for both $f_{C,\Pi}(\x) = 1$ and $f_{C,(I-\Pi)}(\x) = 1$, and thus $f_{C,\Pi}(\x)$ does not always provide useful information. Our key observations, enabling us to use these activation functions effectively, are as follows. 

At the start of the circuit, the entire input and ancillae state lies inside the subspace $\S_0 = \spn\lr{ \ket{\veta}_A } \tens \mathcal{H}_{[n]}$, where $\mathcal{H}_{[n]}$ is the Hilbert space of our input qubits and $\ket{\veta}_A$ is the ancillae starting state. Using the \QACZ normal form, originally proposed by \cite{rosenthal2021qac0}, each gate $G(S)$ of the \QACZ~circuit is a reflection about a separable state $\ket{\vth}_S$. Observe that any state in the subspace $\S_{\oth}$  that is orthogonal to $\kb{\vth}_S$ is unchanged by $G(S)$. We show that the output of shallow circuits can simulated classically by tracking the evolution of the state in only a $\poly(n)$ number of such subspaces through their activation functions.

Building on these insights, we describe an \ACZ circuit to simulate the activation function of the projector corresponding to the singular output of cleaned-up depth-$3$ \QACZ circuits.  Our classical simulation of depth-$3$ \QACZ circuits with $m$ ancillae produces a depth-$3$ $\poly(mn)$-size \ACZ circuit as stated in the theorem. 
\begin{restatable*}[Depth-3 Classical Simulation]{theorem}{acsimthm} \label{thm:d3ac0}
  Let $C$ be a single-output cleaned-up depth-$3$ \QACZ~circuit on $n$ inputs and $m$ gates that computes the function $f(\x)$ on $n$ coordinates. Then, $f(\x) \in \aczm(O(m^4 \cdot n^4),3)$ 
\end{restatable*}
This is sufficient for a lower-bound against the usual setting of \QACZ, which is limited to $\poly(n)$ gates, by applying known \ACZ lower-bounds of \cite{hastad1986switch}. We also present a stronger result \Cref{thm:depth_3_inf_anc} that rules out depth-$3$ \QACZ circuits for \PARITY on more than $O(1)$ coordinates with \emph{unlimited} number of ancillae and gates.  This is based on our observation that most of the activation functions in the lower layers of the circuit are monotone in the same direction, regardless of the number of gates, which enables us to construct a deterministic (classical) restriction simplifying most of the gates. In contrast, to simplify arbitrary \ACZ circuits, we need to make use of random restrictions, which introduce a dependence on the circuit size. This monotonicity property does not necessarily hold at higher depths, and we expect the lower bounds at higher depths to depend on the size. 

A consequence of \Cref{thm:d3ac0} is that the activation functions at lower levels of \QACZ circuits can be simplified using random restrictions on the corresponding \ACZ circuits. This provides evidence of a \QACZ analog of the {\em Switching Lemma} because the simplified activation functions are either a small junta or a CNF/DNF formula. Then, for polynomial-sized \QACZ circuits, we can potentially simplify the circuit itself using additional random restrictions.

 \subsection{Depth-2 \texorpdfstring{\QACZ}{QAC0} Circuits have \texorpdfstring{$O(\log n)$}{O(log n)} Total Influence}
 
We prove that any depth-2 \QACZ~circuit $C$ on $n$ inputs has total influence $O(\log n)$, regardless of the number of ancillae or gates. This is established by showing exponentially small Fourier tails: for any $\eps>0$, the Fourier weight above level $k = c\log(1/\eps)\log(n/\eps)$ is at most $\eps$, where $c$ is an absolute constant. Since \PARITY~has total influence $n$, this immediately implies an average-case depth-2 lower-bound against \PARITY~with unlimited ancillae, as well as any Boolean function with large total influence.%
\footnote{Furthermore, since $O(\log n)\ll n$ there exists a {\em constant} $n$ such that any depth-$2$ \QACZ circuit cannot even approximately compute \PARITY on $n$ bits.}

\paragraph{Comparison to Prior Work -- Key Challenges and New Ideas.} We compare to prior work by Nadimpalli, Parham, Vasconcelos, and Yuen~\cite{nadimpalli2024pauli}, which shows that any depth-$d$ \QACZ~circuit has small Fourier tails, but only for a restricted class of circuits with $O(n^{1/d})$ many ancillae. Our proof only holds in the more restricted depth-$2$ setting, but also in the more general unlimited ancillae setting. Nadimpalli et al. first proved their results for \QACZ~circuits with \emph{no ancillae}, and then applied a simple reduction to solve the case of $O(n^{1/d})$ many ancillae. We thus discuss the case of no ancillae as it is cleaner and easier to follow.

The total influence measures the average sensitivity of the circuit on a uniformly random input.  On a random input, the initial state is the maximally mixed state, and since the circuit is reversible, the state is also the maximally mixed state after each layer. However, on the maximally mixed state, any \CZ gate with fan-in $\omega(\log n)$ is activated with negligible probability, and can thus be replaced by the identity gate while incurring only a small error in its Fourier tail. This leaves us with a circuit composed only of gates of fan-in $O(\log n)$, meaning that the output depends only on a $O(\log n)^d$ number of input qubits, and thus the total influence is at most $O(\log n)^d$. 

This proof technique fails when we introduce ancillae since the state after each layer is no longer the maximally mixed state.
For example, consider the read-once DNF circuit for the \TRIBES function, where $\TRIBES(x) = \lor_{i=1}^{s} \land_{j=1}^{w} x_{i,j}$ for $w= \Theta(\log n)$ and $s = \Theta(n/\log n)$. Note that, without loss of generality, we can apply controlled-\OR and controlled-\AND gates (as they can be implemented using single-qubit gates before and after a \CZ gate).
Any read-once DNF with $s$ terms can be simulated by a depth-$2$ \QACZ~circuit with $s+1$ ancillae, where in the first layer, each of the first $s$ ancillae is flipped only if the corresponding term is true and in the second layer, the first $s$ ancillae qubits are fed into a controlled-\OR gate that flips the target ancilla qubit, that will contain the value of the DNF. In the case of the \TRIBES function, we get $s+1 = O(n/\log n)$ ancillae.%
\footnote{As this paragraph demonstrates, the \TRIBES function can be implemented by depth-2 \QACZ, and we see that the total influence of such circuits can be $\Omega(\log n)$ as this is the total influence of the \TRIBES function -- proving the tightness of our total influence upper bound.}
Observe though that the fan-in of the second layer is $O(n/\log n)$, and nevertheless the gate is activated with constant probability on a uniformly random input. This is because after the first layer, the ancillae are quite ``biased'' towards $\ket{0}$, unlike in the case of a maximally mixed state. 

Clearly, fan-in is not a good indicator of which gates are active with non-negligible probability in the presence of ancillae.
H\aa{}stad~\cite{hastad1986switch}
and Linial, Mansour, and Nisan~\cite{lmn1993ac0} that use restrictions and the switching lemma to obtain Fourier tails of constant depth circuits. Generalizing the classical proof technique to the quantum setting has remained elusive, as the proofs rely heavily on the discreteness of the classical circuits where bits are either $0$ or $1$ and using encoding arguments (cf. Razborov's proof of the switching lemma in Beame's survey \cite{beame1994switching}).

We therefore need a novel technique.
We observe that the \emph{entropy of the mixed state entering a gate} is a good indicator of which gates are active with non-negligible probability in the presence of ancillae, at least for the case of depth-$2$ circuits. Indeed, if the mixed state has a lot of entropy, and is also separable, then its ``min-entropy'' is large as well, which means that the state has negligible amplitude on any particular basis state. This means that the gate is activated with negligible probability. To use this approach, we need to: (i) reduce to the case of separable states entering a gate at layer $2$, (ii) bound the total influence of the circuit assuming that the mixed state entering a gate has small entropy. We explain how to do this in the following. We believe that this approach can be extended to higher depths, but the main obstacle we face is that even for depth-$3$ circuits, the state entering a gate may not be separable, in which case it is unclear how to connect the entropy and min-entropy measures.

\paragraph{Proof Overview.} The proof proceeds in three main stages. First, we simplify the circuit structure by removing layer-1 gates that depend on too many input qubits. Specifically, any \CZ~gate at layer 1 depending on more than $b = \Theta(\log(n/\eps))$ input qubits can be replaced with the identity gate, incurring only $\eps/2$ error in the Fourier tail. This is because such gates are rarely active on uniformly random inputs, and their removal changes the circuit's behavior by at most $O(2^{-b})$ in $\ell_2$ distance.

Second, we apply a random-valued restriction tailored to the circuit structure. For each remaining layer-1 gate that depends on between $1$ and $b$ input qubits, we randomly keep exactly one of its input qubits alive and fix the rest uniformly at random. Input qubits not involved in any gate remain alive. This restriction keeps alive at least $n/b$ variables and reduces the Fourier tail analysis to structured circuits where each layer-1 gate depends on at most one input qubit (but potentially many ancillae).

Third, we analyze the Fourier tails of these structured circuits. Since the computation is single-output, we focus on the single layer-2 gate $g$ containing the target qubit.
Allowing for the ancilla starting state to be an arbitrary product state, WLOG, $g$ is a \CZ~gate that flips the phase when its input is $\ket{1^m}$. We partition the input qubits of $g$ into disjoint sets $Q_0, Q_1, \ldots, Q_n$, where $Q_i$ contains qubits from the layer-1 gate involving input $x_i$ (if any), and $Q_0$ contains qubits from ancilla-only gates. On input $x$, the mixed state entering $g$ is $\rho^x = \rho_0 \otimes \rho_1^{x_1} \otimes \cdots \otimes \rho_n^{x_n}$, where $\rho_i^b$ is the state of $Q_i$ when $x_i = b$, i.e., $\rho^x$ is a highly separable mixed state. 

The analysis splits into two cases based on the activation probability of gate $g$. Let $\rho = \E_x[\rho^x]$ be the average state over all inputs. 

\textbf{Case 1:} If $\bra{1^m}\rho\ket{1^m} \le \eps/32$, then gate $g$ is almost always inactive. In this case, replacing $g$ with identity changes the circuit's acceptance probability function by at most $\eps/8$ in $\ell_2$ distance. The resulting circuit has only one layer of gates and computes a dictator function (depends on a single input), which has zero Fourier weight above level 1. This implies that the original circuit has at most $\eps/4$ Fourier weight above level $\Theta(\log(1/\eps))$.

\textbf{Case 2:} If $\bra{1^m}\rho\ket{1^m} \ge \eps/32$, i.e., the gate is activated with non-negligible probability, then most $\rho_i^{0}, \rho_i^{1}$ are close to the the all $1$s state. This implies that the trace distance between $\rho_i^0$ and $\rho_i^1$, $\td{\rho_i^0}{\rho_i^1}$ is small, which implies that the influence in direction $i$ is small and overall the total influence is small.

To get the exponential Fourier tail bound, we express the mixed state $\rho$ as a ``matrix Fourier decomposition'': First we express $\rho_i^{x_i} = \rho_i + (-1)^{x_i} \cdot D_i$, where $D_i = (\rho_i^0 - \rho_i^1)/2$ is the derivative with respect to $x_i$ and $\rho_i = \frac{1}{2}(\rho_i^0 + \rho_i^1)$ is the average state. Then, we write the Fourier decomposition of the mixed state $\rho^x$ as 
$$\rho^x = \sum_{R \subseteq [n]} \widehat{\rho}(R) \cdot (-1)^{\sum_{i\in R} x_i}$$
where each $\widehat{\rho}(R)$ is expressed as a product of the $\rho_i$'s (for $i\not\in R$) and $D_i$'s (for $i\in R$).
This decomposition is similar to a Fourier decomposition of Boolean functions except that the coefficients $\hat{\rho}(R)$ are density matrices. 
Then, we relate the Fourier coefficients of the acceptance probability function $f_C(x)$ to those of $\rho^x$. 
We show that for any set $R\subseteq [n]$, we have $|\widehat{f_C}(R)| \le \|\widehat{\rho}(R)\|_1 = \prod_{i\in R} \|D_i\|_1 = \prod_{i\in R} \td{\rho_i^0,\rho_i^1}$. This implies that the Fourier weight at level $\ell$ is at most $(64\ln(8/\eps)/\ell)^\ell$ using Maclaurin's inequality. Setting $\ell = \Theta(\log(1/\eps))$ ensures that $\W^{\ge \ell}[f_C] \le \eps/4$. Combining this with the errors from the simplification and restriction steps, we obtain the desired Fourier tail bound, which implies the $O(\log n)$ total influence bound.

\subsection{Depth-2 \texorpdfstring{\QACZ}{QAC0} Circuits Cannot Prepare Generalized Nekomatas}
The proof techniques so far have focused on lower-bounds for computation of the \PARITY~function, which ultimately maps a variable input state into a single designated output target qubit. We will now describe our lower-bound techniques for preparation of a quantum state, which we refer to as a generalized nekomata, using a depth-2 \QACZ and unlimited ancillae. 

For a \QACZ circuit synthesizing a quantum state such as a nekomatas, all qubits start in the fixed $\ket{0}$ state, meaning there is no input. Thus, input-based restrictions are not possible and instead we ``restrict'' to certain subspaces by post-selecting on the output. Furthermore, rather than just a single output qubit, there are $n$ target output qubits of interest. Whereas the final layer of a \PARITY~circuit trivially depends only on the singular gate acting on the sole output target, in a nekomata circuit $O(n)$ gates can play a non-trivial role in the final layer by acting on all $n$ output qubits and ancillae. Therefore, depth-2 lower bounds against nekomatas appear more challenging than those against \PARITY.

\paragraph{Proof Sketch.} 
Although the state synthesis setting is fundamentally different from the input-output setting, this result follows the same core concepts behind our depth-$3$ bound. The proof consists of two parts.

\textbf{Simplify final layer:} Given a depth-$d$ circuit that outputs a nekomata, we construct a separable state on some subset of qubits, $\ket{\veta}_Q$, such that, inside the eigenspace of $\kb{\veta}_Q$, (1) the final layer is simplified 
(2) the state is still entangled across at least $n/2$ of the targets, and thus still a nekomata. 

\textbf{Structure after a single gate:} Then, we argue that such a state cannot be created by a depth-$2$ circuit by reasoning about the entanglement structure of states created by depth-$1$ circuits. Since all the qubits start in the state $\ket{\0}$, qubits across different gates are unentangled, and we only need to consider a single gate. We observe that the state after a single (non-trivial) gate, $(I - 2\kb{\vth}) \cdot \ket{\0}$, lies in a rank $2$  subspace $\spn\{\ket{\vth}, \ket{\0}\}$ consisting of two non-orthogonal separable states. Additionally, for any partitioning of the qubits into targets $T$ and ancillae $A$, it is the case that the state on $T$ lies in $\spn\{\ket{\vth}_T, \ket{\0}_T\}$, even when the state on $A$ is \emph{restricted} to lie inside some subspace of $\mathcal{H}_A$. The defining property of a generalized $n$-nekomata is that it has an $n$-partite rank $2$ Schmidt decomposition. This is not possible for a state in $\spn\{\ket{\vth}_T, \ket{\0}\}$ unless $|T| = n = 2$ because $\braket{\vth_T | \0} \neq 0$. 

\subsection{Exact to approximate reduction for \PARITY}\label{sec:reduction_tech}
Recently \cite{grier2026mathsfqac0containsmathsftc0with} showed that the previous approximate nekomata constructions, such as \cite{rosenthal2021qac0} can be made exact using the amplitude amplification of \cite{Grover} \cite{Brassard2000QuantumAA}, thus giving a $\poly(n)$ size \QACZ circuit for exact \PARITY/\FANOUT/\MAJORITY on $\polylog(n)$ qubits. We point out this allows us to eliminate the error in \emph{any} \QACZ circuit that computes parity with at least $1/\polylog(n)$ advantage on average, in the same asymptotic depth.  

\begin{restatable}[Approximate-to-Exact Parity in \QACZ]{theorem}{approxtoexpar} \label{thm:approxtoexactpar}
Let $C$ be a depth-$d$ \QACZ circuit with $n$ input qubits and $a$ ancillae.
Consider the function $f_C: \B^n \to [0,1]$ defined by $f_C(\x) = \Pr[\text{$C$ accepts $\x$}]$. Let $\rho \in [-1,1]$ be the correlation of $f_C(\x)$ with $\PARITY(\x)$, such that $\rho = 1/(\log n)^\delta$. Then, there is a depth-$O(d)$ circuit $C'$ using $a\cdot n+n^{O(\delta)}$ ancillae that \emph{exactly} computes $\PARITY(\x)$ on \emph{every} $\x \in \bin^n$. 
\end{restatable}

The main approximate reduction from nekomata to \PARITY, \cite{rosenthal2021qac0}, requires the \PARITY circuit to have error on \emph{most} inputs rather than on average \cite{rosenthal2021qac0}. However, this can be easily remedied by using the ``poor man's cat state'' that can be constructed in \QNCZ \cite{watts2019separation} to map an input $\x$ to a random input of the same parity as $\x$. Formally, this gives us the following. 

\begin{restatable}[Average-to-Worst Case \PARITY ]{claim}{avgtoworst}\label{lem:avgtoworst}
Let $C$ be a depth-$d$ \QACZ circuit with $n$ input qubits and $a$ ancillae and define $f_C(\x)$ to be  $\Pr[\text{$C$ accepts $\x$}]$. Let $\gamma \in [-1,1]$ be the correlation between $f_C(\x)$ and $\PARITY(\x)$. Then, there is a depth-$d' = d + O(1)$ circuit $C'$ with $a' = a + O(n)$ ancillae such that, for \emph{every} input $\x \in \bin^{n}$, the output register of $C'(\x)$ measures to $\ket{\oplus_{\x}}$ with probability $\geq 1/2 + |\gamma|/2$.
\end{restatable}

Then, using \cite{rosenthal2021qac0, grier2026mathsfqac0containsmathsftc0with} to create $\polylog(n)$ many copies of the input register, we can obtain a circuit $C$ with an arbitrarily low constant error. Finally, we use some standard ancilla tricks and then reflect about the state $C \ket{+}^{\tens n}$ to obtain a nekomata.

\begin{restatable}[Approximate-to-Exact nekomata]{claim}{aprexactcat}\label{lem:exactcat}
  Let $C$ be a depth-$d$ \QACZ circuit that constructs a state on $n$ targets $T$ using $a$ ancillae, given by $\ket{\psi}_{T,A} = C \ket{0^{a}}$. Suppose that the qubits $T$ have $\geq 1/4$ probability of measuring to $1^n$ and to $0^n$ each, then, there exists a circuit $C'$ of depth-$d' \leq 3(d+2)$ with $a' = a+1$ ancillae that exactly synthesizes an $n$-nekomata, i.e. 
$$C' \cdot \ket{0^{a'}}  = \frac{1}{\sqrt{2}} \ket{0^n} \ket{\varphi_0} + \frac{1}{\sqrt{2}} \ket{1^n} \ket{\varphi_1}$$
for some states $\ket{\varphi_0}, \ket{\varphi_1}$.
\end{restatable}

We provide the formal proof of \Cref{thm:approxtoexactpar}, which is a simple consequence of the aforementioned works in \Cref{sec:aproofs}. The proofs techniques we use in \Cref{sec:d3poly} and \Cref{sec:generalized_neko} rely on the circuit computation being \emph{exact}. Nevertheless, extending these techniques to a higher depths would be sufficient to rule out approximate circuits for \PARITY, \MAJORITY and all related problems in \QACZ. 
\section{Preliminaries}\label{sec:prelim}

A depth $d$ $\qac$ circuit acting on input qubits  $[n]$ and ancilla  qubits $A$ consists of $d$ layers of multi-qubit Toffoli gates interleaved with layers of arbitrary single-qubit unitaries. The single-qubit unitaries are ``free'' and do not contribute to the depth. 
Each qubit in the circuit can appear in at most one gate per layer, and therefore the layers with multi-qubit gates are associated with a partitioning of the qubits. The inputs start in the standard basis state $\ket{\x}$ for $\x \in \bin^n$ and the ancillae start in the all $\ket{\0}$ state. 

\begin{figure}[ht]
    \centering
    \includegraphics[width=0.6\linewidth]{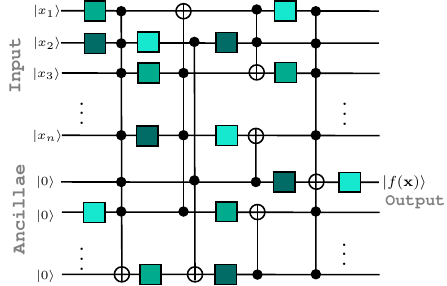}
     \caption{\QACZ circuit computing $f(\x)$. The circuit takes in a $n$-qubit classical input corresponding to the string $\x \in \bin^n$. Each ancilla starts in a fixed state $\ket{0}$. At the end of the circuit, the state on a designated output register contains the answer $\ket{f(\x)}$. Each qubit can belong to at most one gate per layer and there are no locality based constraints on multi-qubit gates, thus, the illustrated circuit has depth $4$.}
    \label{fig:qac0ckt}
\end{figure}

There is an equivalent characterization of \QACZ due to \cite{rosenthal2021qac0} where all the single qubit unitaries are moved to a new layer at the end. The intermediate layers consist only of multi-qubit gates that reflect about product states and there are $d$ such layers. A gate $G$ acting on the subset of qubits $S$ has the form,
\begin{align}
  G(S) = I_S - 2\kb{\vth}_S,
\end{align}
where $\ket{\vth}_S$ is a separable state across all of $S$. Note that this gate is simply a reflection about the state $\ket{\vth}_S$ and is a unitary with eigenvalues $\pm 1$. 
We use the vector labels such as $\ket{\vth}_S$ or  $\ket{\veta}_S$ to denote a product state on $S$, whose component on any qubit $q \in S$ is denoted by $\ket{\th_q}$ or $\ket{\eta_q}$ respectively.  

A depth $d$ \QACZ circuit in this normal form therefore consists of $d$ such layers and we assume that there are no extra single qubit unitaries in between. Additionally, we will consider single-output circuits, in which we allow the ``output'' to be in any basis and therefore do not need to consider the additional layer of single-qubit unitaries at the end of the computation. We define the output of such circuits as below.
\qacoutput

\subsection{Additional Notation} 
We will use the shorthand notation $C(\x)$ to refer to the state $C \ket{\x} \ket{\veta}_A$, the final state of circuit $C$ on input $\x$, when the ancilla starting state $\ket{\veta}_A$ is clear from the context. 
For a classical restriction $R$, $C_{|R}$ denotes the depth $\leq d$ \QAC circuit on $n-|R|$ inputs given by fixing the inputs in $R$ and treating them as ancillae. 

For a distribution $D$ we denote by $x\sim D$ a sample from the distribution. Let $X$ be a finite set. We denote by $x \sim X$ a uniformly random sample from $X$. 

We use the following notation while referring to the complexity of classical functions.
\begin{definition}[\ACZn$(s,d)$]
\ACZn$(s,d)$ refers to the set of all Boolean functions $f : \bin^n \to \bin$ computable by classical \ACZ circuits of depth $d$ using at most $s$ gates. 
\end{definition}

\subsection{Analysis of Boolean Functions}
\label{sec:prelim_aobf}
For every function $f: \B^n \to \R$ there exists a unique Fourier representation 
\[
f(x) = \sum_{S\subseteq [n]}{\hat{f}(S)}\cdot (-1)^{\sum_{i \in S} x_i}
\] where $\hat{f}(S) \in \R$ are called the Fourier coefficients of $f$.
These coefficients satisfy 
$$\hat{f}(S) = \E_{x\sim \{0,1\}^n}[f(x) \cdot \chi_S(x)]$$
for $S\subseteq [n]$, where $\chi_S(x) = (-1)^{\sum_{i \in S}x_i}$ are the Parity functions. 
Indeed, one can see that the existence and uniqueness of the Fourier representation follows from the fact that the $2^n$ parity functions form an orthonormal basis to the space of all functions from $\B^n$ to $\R$ equipped with inner product 
\[\langle{f, g\rangle} = \E_{x\sim \B^n}[f(x) \cdot g(x)].\]
Parseval's identity implies that 
$\E_{x\sim \B^n}[f(x)^2] = \sum_{S\subseteq[n]}\hat{f}(S)^2$
and both sides equal $1$ if $f$ is a Boolean function, i.e., $f:\B^n \to \{\pm1\}$.

We define the total influence of $f$, $\Inf[f]$,  as \[\Inf[f] = \sum_{S\subseteq [n]} \hat{f}(S)^2 \cdot |S|.\]
(In the special case where $f$ is a Boolean function, note that the Fourier coefficients squared of a Boolean function $f$ naturally define a probability distribution over the sets $\{S: S\subseteq[n]\}$ where set $S$ is chosen with probability $\hat{f}(S)^2$. Then, the total influence is the expected size of $S$ under this distribution.)
The total influence also equals the combinatorial quantity $\sum_{i=1}^n \Inf_i[f]$, where $\Inf_i[f]$ is the influence of the $i$-th coordinate on $f$ defined as
\[\Inf_i[f] = \E_{x\sim \B^n}\left[\left(\tfrac{|f(x)-f(x^{\oplus i})|}{2}\right)^2\right] \text{ where  } x^{\oplus i} = (x_1, \ldots, x_{i-1}, \overline{x_i}, x_{i+1}, \ldots, x_n).\]
Note that if $f$ is Boolean, then $\tfrac{|f(x)-f(x^{\oplus i})|}{2}\in \{0,1\}$, and its value indicates whether changing the $i$-th coordinate in $x$ changes the value of $f$.

We say that a coordinate $i$ is \emph{influential} on $f$ if $\Inf_i[f] > 0$. We say that a function is a {\em $k$-junta} if at most $k$ of its coordinates are influential (i.e., if $f$'s value depends on at most $k$ coordinates). In other words, $f$ is a $k$-junta if it can be written as a function $g:\{0,1\}^k \to \R$ applied to some subset of $k$ coordinates $i_1<i_2<\dots<i_k$ as follows: $f(x) = g(x_{i_1}, x_{i_2}, \dots, x_{i_k})$.

We define the Fourier weight at level $k$, $\W^k[f]$, and the Fourier tail at level $k$, $\W^{\ge k}[f]$, as 
\[\W^{k}[f] = \sum_{S:|S|=k} \hat{f}(S)^2,\qquad \W^{\ge k}[f] = \sum_{S:|S| \ge k} \hat{f}(S)^2\]

A {\em restriction} is a partial assignment to the variables of a Boolean function. We denote it by a pair $(J, z)$ where $J \subseteq[n]$ is the set of coordinates that stay alive and $z\in \{0,1\}^{[n]\setminus J}$ is an assignment to the rest. Given a function $f: \B^n \to \R$ and a restriction $(J,z)$, we naturally get the {\em restricted function} $f|_{J,z}:\B^n \to \R$ defined by $f|_{J,z}(x) = f(y)$ where for $i\in [n]$, $y_i = x_i$ if $i\in J$ and $y_i=z_i$ otherwise.

A random restriction is a distribution over restrictions. A {\em random valued restriction} is a distribution over restrictions of a special form -- we first pick $J\subseteq[n]$ from an arbitrary distribution and then pick $z\in \B^{[n]\setminus J}$ uniformly at random.
The expected Fourier coefficients under random valued restrictions are well understood.

\begin{lemma}[Fourier weights under random valued restrictions]\label{lemma:random_valued_restriction}
Let $f: \B^n \to \R$. Let $J$ be a random subset of $[n]$ under some arbitrary distribution $D$ and 	let $z \sim \{0,1\}^{[n]\setminus J }$.
Then, for any set $S\subseteq [n]$, 
\begin{align}
    \E_{J,z}[\widehat{f|_{J,z}}(S)^2] = \sum_{T\subseteq [n]} \widehat{f}(T)^2 \cdot \Pr_{J}[T\cap J = S].
\end{align}
In particular, for any $k\in \N$, 
\begin{align}
    \E_{J,z}[\W^{\ge k}[f|_{J,z}]] = \sum_{T\subseteq [n]} \widehat{f}(T)^2 \cdot \Pr_J[|T\cap J|\ge k].
\end{align}
\end{lemma}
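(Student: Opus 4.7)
The plan is to directly compute the Fourier expansion of the restricted function $f|_{J,z}$ for a fixed restriction $(J,z)$, and then take expectation first over $z$ and then over $J$. The essential observation is that when one evaluates $f$ at a hybrid input that uses $x$ on the coordinates in $J$ and $z$ on the coordinates in $[n]\setminus J$, each parity character $\chi_T$ factors multiplicatively across the two disjoint pieces $T\cap J$ and $T\setminus J$. Concretely, I would start from
\[
f|_{J,z}(x) \;=\; \sum_{T\subseteq [n]} \widehat{f}(T)\,\chi_{T\cap J}(x)\,\chi_{T\setminus J}(z),
\]
and regroup by the value $S := T\cap J$ to read off the Fourier coefficients of the restricted function:
\[
\widehat{f|_{J,z}}(S) \;=\; \sum_{T\subseteq [n]\,:\,T\cap J = S} \widehat{f}(T)\,\chi_{T\setminus J}(z),
\]
which is automatically zero whenever $S\not\subseteq J$, since then no $T$ satisfies $T\cap J=S$.

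Next, I would square this expression and average over the uniformly random $z\in\{0,1\}^{[n]\setminus J}$. Expanding the square gives a double sum indexed by pairs $(T_1,T_2)$ with $T_1\cap J = T_2\cap J = S$; the $z$-average of $\chi_{T_1\setminus J}(z)\,\chi_{T_2\setminus J}(z) = \chi_{(T_1\setminus J)\triangle(T_2\setminus J)}(z)$ equals $1$ if $T_1\setminus J = T_2\setminus J$ and $0$ otherwise, by orthogonality of characters. Combined with $T_1\cap J = T_2\cap J = S$, this forces $T_1 = T_2$, so the off-diagonal terms vanish and we obtain
\[
\E_{z}\!\left[\widehat{f|_{J,z}}(S)^{2}\right] \;=\; \sum_{T\,:\,T\cap J = S} \widehat{f}(T)^{2}.
\]
Taking expectation over $J\sim D$ and exchanging the (absolutely convergent) sums yields
\[
\E_{J,z}\!\left[\widehat{f|_{J,z}}(S)^{2}\right] \;=\; \sum_{T\subseteq[n]} \widehat{f}(T)^{2}\,\Pr_{J}[\,T\cap J = S\,],
\]
which is the first identity.

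For the Fourier tail statement, I would simply sum both sides over all $S\subseteq[n]$ with $|S|\ge k$, swap the order of summation on the right, and use that the events $\{T\cap J = S\}_{S}$ partition the sample space for each fixed $T$. This collapses $\sum_{|S|\ge k}\Pr_{J}[T\cap J = S]$ into $\Pr_{J}[|T\cap J|\ge k]$, giving the second identity. I do not anticipate any real obstacle here; the argument is a routine Plancherel/orthogonality calculation. The only point that warrants care is ensuring that the $z$-average is taken before the $J$-average (so that the character orthogonality used above is applied for a fixed $J$, with $z$ uniform on the complementary coordinates $[n]\setminus J$), and that the step of rearranging sums is justified, which is immediate since all quantities are nonnegative.
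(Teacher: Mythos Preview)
Your proof is correct and is precisely the standard argument; the paper does not give its own proof but simply cites \cite[Prop.~4.17]{o2014analysis}, where this exact Plancherel/character-orthogonality computation appears.
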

The proof was essentially given in \cite[Prop.~4.17]{o2014analysis}.

\subsection{Trace Distance and its Properties}
For two mixed states (density matrices) $\rho,\sigma$ on the same Hilbert space, the \emph{trace distance} is defined as
\[
\td{\rho}{\sigma} \;:=\; \tfrac{1}{2}\|\rho-\sigma\|_1
\qquad\text{where}\quad
\|X\|_1 := \mathrm{Tr}\sqrt{X^\dagger X}.
\]
This metric has several important properties:

\begin{itemize}
  \item \textbf{Range and equality:} $0 \le \dtd(\rho,\sigma) \le 1$, and $\dtd(\rho,\sigma)=0$ iff $\rho=\sigma$.
  \item \textbf{Unitary invariance:} $\dtd(U\rho U^\dagger, U\sigma U^\dagger)=\dtd(\rho,\sigma)$ for any unitary $U$.
 \item \textbf{Tensoring:} $\dtd(\rho\otimes\tau, \sigma\otimes\tau)=\dtd(\rho,\sigma)$ for any fixed $\tau$.
  \item \textbf{Triangle inequality:} $\dtd(\rho,\tau)\le \dtd(\rho,\sigma)+\dtd(\sigma,\tau)$.

\item \textbf{Contractivity under channels and measurements:}
Let $\Phi$ be any quantum channel. Then for all density matrices $\rho,\sigma$,
\[
\dtd \bigl(\Phi(\rho),\,\Phi(\sigma)\bigr)\;\le\; \dtd(\rho,\sigma).
\]
In particular, if $\dtd(\rho,\sigma)\le \varepsilon$, then the outputs $\Phi(\rho)$ and $\Phi(\sigma)$ are also $\varepsilon$-close in trace distance.
Hence no quantum operation or measurement can increase the statistical distance beyond~$\varepsilon$.
\end{itemize}

\subsection{Activation Functions}\label{sec:actv}
We will refer to \emph{activation functions} of projectors on intermediate states of our circuit, described by the corresponding sub-circuit. These are defined below. 
\begin{definition}[Projector Activation Function]
Given a \QACZ circuit $C$ on $n$ inputs and any projector $\Pi_S$ acting on a subset of qubits $S$, the \emph{activation function of $\Pi_S$ on $C$} is a classical Boolean function, $f_{C,\Pi_S} : \bin^n \to \bin$ defined as.  
$$f_{C,\Pi_S}(\x) := \blr{\Pi_S \cdot C(\x) \neq 0}$$
where the notation $\blr{\cdot}$ to refers to an indicator function. 
\end{definition}
We note that it is possible for both $f_{C,\Pi_S}(\x) = 1$ and $f_{C,(I - \Pi_S)}(\x) = 1$, but it cannot be that both of them are $0$. For a projector $\Pi$, we will use $\eig{\Pi}$ to refer to the $+1$ eigenspace of $\Pi$. Then, the activation function is essentially checking if $C(\x)$ has a component in $\eig{\Pi}$. 

\subsubsection{Projectors and their Eigenspaces}
We give some useful lemmata for manipulating projectors.
\begin{lemma}[Projector Decomposition]\label{lem:projdecomp}
  Let $\Pi$ be a projector on a Hilbert space $\mathcal{H}$ that can be decomposed as $\Pi = \lr{\bigotimes_{i =0}^{n} \Pi^{(i)}_{B_i}}$
  where $B_i$ is a set of qubits.
  Then \begin{enumerate}
    \item $\eig{\Pi} = \bigotimes_{i =0}^{n} \eig{\Pi^{(i)}_{B_i}}$.
    \item $\eig{I-\Pi} = \spn\{\bigcup_{i =0}^{n} \eig{I-\Pi^{(i)}_{B_i}}\}$.\end{enumerate}
    \end{lemma}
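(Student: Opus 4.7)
The plan is to reduce both statements to a combinatorial computation in the product eigenbasis of the $\Pi^{(i)}_{B_i}$. Since every $\Pi^{(i)}_{B_i}$ is a projector, I would pick an orthonormal eigenbasis on each $\mathcal{H}_{B_i}$ so that $\mathcal{H}_{B_i}=V_i\oplus V_i^\perp$ with $V_i:=\eig{\Pi^{(i)}_{B_i}}$ and $V_i^\perp:=\eig{I-\Pi^{(i)}_{B_i}}$. Expanding out the tensor gives a direct sum decomposition $\mathcal{H}=\bigoplus_{s\in\{0,1\}^{n+1}}W_s$ with $W_s=\bigotimes_{i:s_i=1}V_i\otimes\bigotimes_{i:s_i=0}V_i^\perp$, and I would carry out the rest of the argument entirely in terms of these $W_s$.

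For part (1), any basis vector $\ket{v_0}\otimes\cdots\otimes\ket{v_n}$ with each factor in $V_i$ or $V_i^\perp$ is a joint eigenvector of $\Pi=\bigotimes_i\Pi^{(i)}_{B_i}$ with eigenvalue equal to the product of the per-factor $\{0,1\}$ eigenvalues. This product equals $1$ iff every $\ket{v_i}$ lies in $V_i$, so $\eig{\Pi}=W_{1^{n+1}}=\bigotimes_i V_i$, which is exactly the claimed identity.

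For part (2), I would combine part (1) with the standard fact that $\eig{I-\Pi}=\ker\Pi=(\eig{\Pi})^\perp$, which holds because $I-\Pi$ is again a projector. By part (1), this gives $\eig{I-\Pi}=\bigoplus_{s\ne 1^{n+1}}W_s$. Reading the right-hand side of the lemma via the natural embedding $\eig{I-\Pi^{(i)}_{B_i}}\mapsto V_i^\perp\otimes\bigotimes_{j\ne i}\mathcal{H}_{B_j}$ (which is the only way the union of these subspaces makes sense inside a common ambient space), the $i$-th subspace in the union equals $\bigoplus_{s:s_i=0}W_s$. Taking the span of the union over $i$ then captures exactly the $W_s$ whose string $s$ has at least one coordinate equal to $0$, i.e., all $s\ne 1^{n+1}$, so both sides coincide with $\bigoplus_{s\ne 1^{n+1}}W_s$. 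The main obstacle, and essentially the only place one can slip up, is the fact that the orthogonal complement of a tensor product $\bigotimes_i V_i$ is \emph{not} $\bigotimes_i V_i^\perp$ but the sum of ``one-factor-flipped'' tensor products; the $W_s$ bookkeeping above is precisely what makes this distinction explicit and identifies the union-span construction with the correct orthogonal complement.
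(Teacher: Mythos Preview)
Your proof is correct and follows essentially the same approach as the paper: both arguments diagonalize in the product eigenbasis of the $\Pi^{(i)}_{B_i}$ and then check, on each joint eigenvector, which side of the $\Pi$ versus $I-\Pi$ split it falls on. Your $W_s$ bookkeeping is simply a more explicit version of the paper's ``simultaneous diagonalization of commuting observables'' argument, and your observation that $\eig{I-\Pi^{(i)}_{B_i}}$ embeds as $\bigoplus_{s:s_i=0}W_s$ is exactly the content of the paper's case analysis on whether a joint eigenvector is a $+1$ eigenvector of some $(I-\Pi^{(i)}_{B_i})$.
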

    \begin{proof}
      \begin{enumerate}
        \item The first item follows as any +1 eigenvector of $\Pi$ is the tensor product of +1 eigenvectors for $\Pi^{(i)}_{B_i}$.
        \item  All these projectors can be simultaneously diagonalized because they form a set of commuting observables, i.e,  
        $[(I - \Pi^{(i)}_{B_i}), (I - \Pi)] = 0$ and $[(I - \Pi^{(i)}_{B_i}), (I - \Pi^j_{B_j})] = 0$ for all $i,j \in [n]$. Take any eigenvector $\ket{v}$ from this diagonalization. If $\ket{v}$ is a $+1$  eigenvector of $(I - \Pi^{(i)}_{B_i})$ for \emph{some} $i$, then it is a $0$ eigenvector of $\Pi^{(i)}_{B_i}$ and $\Pi$, and thus a $+1$ eigenvector of $(I -\Pi)$. If $\ket{v}$ is a $0$ eigenvector of \emph{every} $(I - \Pi^{(i)}_{B_i})$, then from (1), it must be a $+1$ eigenvector of $\Pi$, and thus a $0$ eigenvector of $(I - \Pi)$. 
    \end{enumerate}
    \end{proof}
    
\begin{lemma}\label{lem:componentinside}
    Suppose $\eig{\Pi} = \spn\{\bigcup_{i\in I} \eig{\Pi^{(i)}\}})$ for some set of projectors $\Pi^{(i)}$. Then,
    any state $\ket{\psi}$ that has a component inside $\eig{\Pi}$ must have a component inside one of the subspaces $\eig{\Pi^{(i)}}$, and vice versa, 
    i.e., $\Pi\ket{\psi} \neq 0$ if and only if $\exists i\in I$ 
    such that $\Pi^{(i)}\ket{\psi} \neq 0$.
    \end{lemma}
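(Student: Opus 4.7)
The plan is to translate the claim into a statement about orthogonal complements and then invoke a standard duality identity. The first observation I would make is that for any projector $\Pi'$, the condition $\Pi'\ket{\psi} \neq 0$ is equivalent to $\ket{\psi} \notin \eig{\Pi'}^\perp$, since the kernel of an orthogonal projector is exactly the orthogonal complement of its $+1$-eigenspace. Applying this to $\Pi$ and to each $\Pi^{(i)}$ separately reduces the lemma to establishing
\[
\ket{\psi} \notin \eig{\Pi}^\perp \iff \exists\, i \in I : \ket{\psi} \notin \eig{\Pi^{(i)}}^\perp.
\]

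The central step is the subspace identity $\eig{\Pi}^\perp = \bigcap_{i \in I} \eig{\Pi^{(i)}}^\perp$. This is a standard fact about spans: if $V = \spn\{\bigcup_{i\in I} V_i\}$ then $V^\perp = \bigcap_{i\in I} V_i^\perp$, because a vector is orthogonal to a span iff it is orthogonal to every spanning vector. The hypothesis of the lemma supplies exactly this structure with $V = \eig{\Pi}$ and $V_i = \eig{\Pi^{(i)}}$, so the identity applies directly with no additional work.

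With that identity in hand, the equivalence follows from the elementary set-theoretic observation that $\ket{\psi} \notin \bigcap_{i \in I} \eig{\Pi^{(i)}}^\perp$ iff there exists some $i \in I$ with $\ket{\psi} \notin \eig{\Pi^{(i)}}^\perp$. Chaining the three equivalences gives $\Pi\ket{\psi}\neq 0$ iff $\Pi^{(i)}\ket{\psi} \neq 0$ for some $i$, which is the statement of the lemma.

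I do not anticipate any real obstacle here, as the argument is pure finite-dimensional linear algebra. In particular, no commutativity, simultaneous diagonalizability, or tensor/product structure on the $\Pi^{(i)}$ is required, in contrast to the preceding \Cref{lem:projdecomp}. The only care needed is to keep straight the distinction between ``$\ket{\psi}$ has a component in $\eig{\Pi}$'' (i.e.\ $\Pi\ket{\psi}\neq 0$) and ``$\ket{\psi} \in \eig{\Pi}$'' (i.e.\ $\Pi\ket{\psi} = \ket{\psi}$); the lemma concerns the former, which is precisely what the orthogonal-complement reformulation captures.
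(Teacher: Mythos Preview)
Your proposal is correct and essentially matches the paper's argument: both reduce to the fact that $\ket{\psi}$ is orthogonal to $\eig{\Pi}=\spn\{\bigcup_i \eig{\Pi^{(i)}}\}$ iff it is orthogonal to every $\eig{\Pi^{(i)}}$. The paper simply splits this into the two contrapositive directions, whereas you package both at once via the identity $\eig{\Pi}^\perp=\bigcap_i \eig{\Pi^{(i)}}^\perp$; the content is the same.
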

    \begin{proof}
      In one direction, suppose that $\Pi^{(i)}\ket{\psi} \neq 0$ for some $i$.
      Then, $\ket{\psi}$ has a non-zero inner product with a +1 eigenvector of $\Pi^{(i)}$ which is also an +1 eigenvector of $\Pi$, so we get that $\Pi\ket{\psi}\neq 0$.

    In the other direction, if $\Pi^{(i)} \ket{\psi} = 0$ for all $i$, then $\ket{\psi}$ is orthogonal to $\eig{\Pi^{(i)}}$ for all $i$ and it must be orthogonal to their span, which is $\eig{\Pi}$, thus $\Pi \ket{\psi} = 0$. 
    \end{proof}

\subsection{Nekomata States}\label{sec:nekodef}
\begin{definition}[Generalized $n$-nekomata]\label{def:neko}
 A state $\ket{\psi}$ is said to be a \emph{generalized $n$-nekomata} if, for some $\alpha \neq 0, \beta \neq 0$, it has the form,  
$$\ket{\psi} = \alpha \cdot \ket{\mu_1}_{t_1} \ket{\mu_2}_{t_2} \dots \ket{\mu_n}_{t_n} \ket{\gamma_0}_A + \beta \cdot \ket{\mu^\perp_1}_{t_1} \ket{\mu^\perp_2}_{t_2} \dots \ket{\mu^\perp_n}_{t_n} \ket{\gamma_1}_A$$
where $\braket{\mu_i | \mu^\perp_i} = 0$ for $i \in [n]$. 
Here $A$ is the set of remaining qubits referred to as the \emph{ancillae} of $\ket{\psi}$ and the qubits $t_1,t_2 \dots t_n$ are as the \emph{targets} of $\ket{\psi}$. 
\end{definition}
A nekomata, as defined in the literature, has $\alpha = \beta = 1/\sqrt{2}$ and $\ket{\mu_i} = \ket{0}$, and is included in the above definition.
\begin{definition}[Generalized nekomata under separable post-selection (GNSP)] \label{def:nekopost}
A state $\ket{\varphi}$ is said to be a \emph{generalized $n$-nekomata under separable post-selection}, if for some (possibly empty) subset of qubits $Q$ and a separable state $\ket{\veta}_Q$, $\kb{\veta}_Q \cdot \ket{\varphi}$ is nonzero and a (un-normalized) $n$-nekomata. In other words, for some set of $n$ target qubits $T$, 
$$\kb{\veta}_Q \cdot \ket{\varphi} =  \alpha \cdot \ket{\mu_1}_{t_1} \ket{\mu_2}_{t_2} \dots \ket{\mu_n}_{t_n} \ket{\gamma_0}_A + \beta \cdot \ket{\mu^\perp_1}_{t_1} \ket{\mu^\perp_2}_{t_2} \dots \ket{\mu^\perp_n}_{t_n} \ket{\gamma_1}_A$$
where $\alpha \neq 0, \beta \neq 0$ and $\braket{\mu_i|\mu^\perp_i} = 0$.
\end{definition}
Then any generalized $n$-nekomata is also a \emph{$k$-GNSP} for any $k \leq n$.

\renewcommand{\P}{\mathcal{P}}
\section{Depth-3 \texorpdfstring{\QACZ}{QAC0} Circuits Cannot Compute \PARITY or \MAJORITY}\label{sec:d3poly}


In this section, we will prove that depth-$3$ \QACZ circuits cannot compute \PARITY or \MAJORITY. We will first show a reduction from circuits computing \PARITY or \MAJORITY to circuits whose first layer is ``cleaned up'' in the sense that each gate depends on at most one input qubit. We will then show that any cleaned-up depth-$3$ \QACZ circuit cannot compute \PARITY or \MAJORITY. To do that, we will show that any cleaned-up depth-$3$ \QACZ circuit can be simulated classically in \ACZ.

\subsection{The Clean-Up Lemma}
We begin with the proof of the clean-up lemma which is restated below.
\cleanup
\begin{proof}
 Note that \PARITY is invariant under any classical restriction and \MAJORITY is invariant under any \emph{balanced classical restriction} with an equal number of $0$s and $1$s among the fixed coordinates.
 Our goal is to convert the layer-$1$ gates of the circuit into ``cleaned up'' gates that depend on at most one input qubit.
 We handle all the other gates as follows:
\paragraph{Handling Gates with exactly 2 input qubits.}
Suppose that there are $m$ such gates, with $2m$ total coordinates. Apply a balanced classical restriction $R$ on $2 \lceil m/2 \rceil \leq m + 1$ coordinates, by fixing one coordinate from each gate, and potentially one extra coordinate in case $m$ is odd (to ensure we fix an even number of coordinates and maintain a balanced restriction). We thus restrict at most $(m + 1)/2m \le 2/3$ fraction of the coordinates. The balanced restriction ensures that the output of the resulting circuit is correct.
\paragraph{Handling Gates with 3 or more input qubits.} Let $G(S)$ be any such gate. We describe our quantum analogue of a restriction that turns two input qubits from $S \cap [n]$ to ancillae and deactivates $G(S)$.
Applying such a restriction for every gate with at least $3$ input qubits deactivates all these gates, while preserving at least $1/3$ fraction of the input qubits.
Pick any two input coordinates in $i,j \in S \cap [n]$, say $(i,j) =(1,2)$ and convert them to ancilla by hard-wiring their state to $\ket{\varphi}_{1,2}$ satisfying (i)
$\ket{\varphi}_{1,2} := \alpha \cdot \ket{01} + \beta\cdot  \ket{10}$ and (ii) $\braket{\vth_{1,2}|\varphi}=0$.
Such a state exists because the dimension of the space of states spanned by $\ket{01}$ and $\ket{10}$ is $2$ and $\braket{\vth_{1,2}|\varphi}=0$ is a linear constraint, so there must exist a non-trivial solution.
More explicitly, we choose $\alpha =\frac{a}{\sqrt{|a|^2 + |b|^2}}$ and $\beta = \frac{-b}{\sqrt{|a|^2 + |b|^2}}$ for $a = \braket{\vth_{1,2}|10}$ and $b = \braket{\vth_{1,2}|01}$ if $(a,b)\neq (0,0)$ and otherwise choose $\alpha = 1$ and $\beta = 0$.


This restriction deactivates $G$ because $\braket{\vth_{1,2}|\varphi} = 0$.
We will first argue that the output of the circuit under this restriction still computes PARITY or MAJORITY (resp.) and then reason about the resulting circuit. Let $\ket{\veta}_A$ denote the original ancilla state, so that, on input $\x \in \bin^{n-2}$, the resulting circuit produces the state $C \cdot \ket{\varphi}_{1,2} \ket{\x} \ket{\veta}_A$. Also, let $\clr{\ket{\mu_0}, \ket{\mu_1}}$ denote the original output basis of $C$. 

\paragraph{$C$ computes \PARITY.} 
Observe that $\ket{\phi_{1,2}}$ is a superposition over two classical restrictions setting the coordinates $(1,2)$ to either $01$ or $10$ respectively, and that both of these restrictions flips the parity of the input.
Thus, for all $\x$, $C \cdot \ket{\varphi}_{1,2} \ket{\x} \ket{\veta}$ correctly outputs \PARITY according to \Cref{def:qacoutput} in the basis $(\ket{\nu_0}, \ket{\nu_1})$ for $\ket{\nu_b} = \ket{\mu_{b\oplus 1}}$ (i.e., we flip what we consider to be $0$ and $1$ in the output basis to accomdoate the fact that we flipped the parity of the input by restricting to $\ket{\phi_{1,2}}$). 
More slowly,  for $b = \PARITY(\x)$, we have that 
\begin{align}
\bra{\mu_b} \cdot  C \ket{\varphi}_{1,2} \ket{\x} \ket{\veta}_A  
  &= \alpha \cdot \bra{\mu_{b}} \cdot C \cdot \ket{01\x} \ket{\veta}_A + \beta \cdot \bra{\mu_{b}} \cdot C \ket{10\x} \ket{\veta}_A 
  = 0.
\end{align}

\paragraph{$C$ computes \MAJORITY.}
Here, we observe that $\ket{\phi_{1,2}}$ is a superposition over two classical restrictions setting the coordinates $(1,2)$ to either $01$ or $10$ respectively, and that both of these restrictions maintains the majority of the input.
Thus, for all $\x$, $C \cdot \ket{\varphi}_{1,2} \ket{\x} \ket{\veta}$ correctly outputs \MAJORITY according to the same basis $(\ket{\mu_0}, \ket{\mu_1})$ as before.
More slowly,  for $b = \neg \MAJORITY(\x)$, we have that 
\begin{align}
  \bra{\mu_b} \cdot  C \ket{\varphi}_{1,2} \ket{\x} \ket{\veta}_A  
  &= \alpha \cdot \bra{\mu_{b}} \cdot C \cdot \ket{01\x} \ket{\veta}_A + \beta \cdot \bra{\mu_{b}} \cdot C \ket{10\x} \ket{\veta}_A 
  = 0.
\end{align}


So far, we have shown that the circuit's output under this restriction still computes \PARITY or \MAJORITY. However, we need to implement the restriction in the standard model with separable initial ancilla state without increasing the depth of the circuit, as we explain next.

\paragraph{Converting back to separable ancillae.}
Let $D$ be the depth-$d$ sub-circuit consisting of gates other than $G(S)$, so that,
$C = D \cdot G(S)$.
The new linear operator on the inputs is given by,  
\begin{align}
 C \ket{\varphi}_{1,2} \ket{\x} \ket{\veta}_A &=  D \cdot G(S) \ket{\varphi}_{1,2} \ket{\x} \ket{\veta}_A \\
 &= D \cdot (I - \underbrace{2\kb{\vth}) \cdot \ket{\varphi}_{1,2}}_{ = 0} \ket{\x} \ket{\veta} \\
 &= D \cdot \ket{\varphi}_{1,2} \ket{\x} \ket{\veta}
\end{align}
Here the ancilla starting state is not necessarily separable because $\ket{\varphi}_{1,2}$ could be entangled. To fix this, we add a new gate $G'(t_1,t_2)$ in place of $G$ to compute $\ket{\varphi}_{1,2}$ from a separable state. One such gate is given by $G' = \cnot(1,2)$ with the starting state $\ket{\mu}_{1} \ket{1}_{2}$, where $\ket{\mu}_{1} = \alpha\ket{0} +\beta\ket{1}$. Then,
\begin{align}
  G' \cdot \ket{\mu}_1 \ket{1}_2 &= \cnot(1,2) \cdot \lr{\alpha \ket{01} +\beta \ket{11}} \\
                                 &= \alpha\ket{01} + \beta \ket{10} 
\end{align}
Since the first layer in $D$ doesn't apply any gate on $(1,2)$, 
the new circuit $C'' = D \cdot G'$ is still a depth-$\leq d$ circuit on the other input coordinates with a separable initial ancilla state $\ket{\veta} \ket{\mu}_1 \ket{1}_2$.

So far, we explained how to clean up one gate with at least 3 input qubits.
Repeating this transformation for all layer-$1$ gates containing $\geq 3$ coordinates in the original circuit preserves at least $1/3$ fraction of these coordinates. 

\begin{figure}[htbp]
  \centering
  \includegraphics[width=0.9\linewidth]{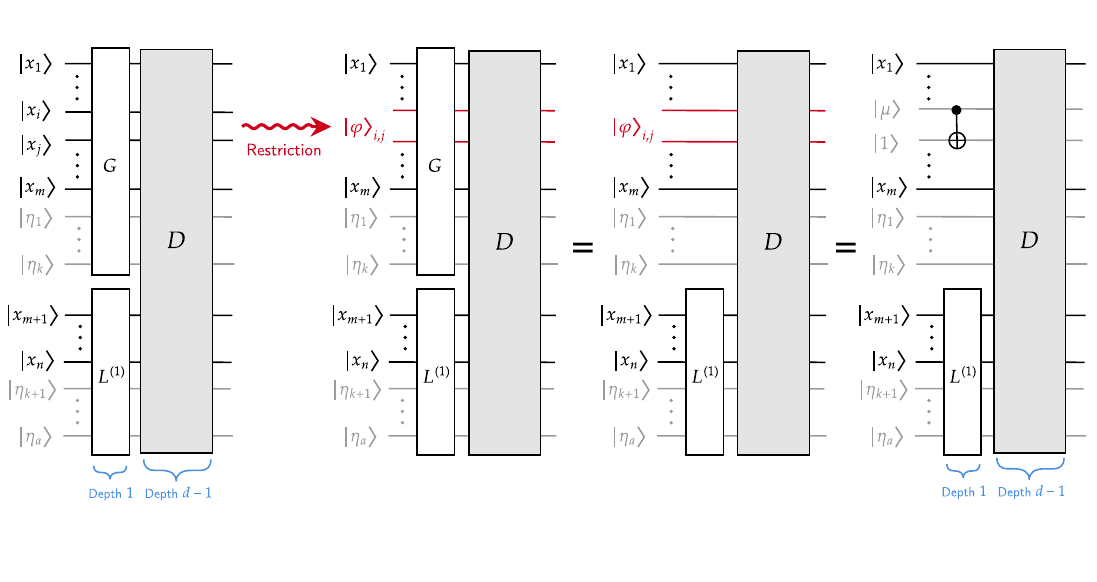}
  \caption{Clean-up of a gate $G$ with more than 3 input qubits. The restriction fixes the state of $\ket{x_i}\ket{x_j}$ to $\ket{\varphi}_{i,j}$ in a way that  deactivates the gate $G$. Then, the restriction to $\ket{\varphi}_{i,j}$ is equivalent to applying a CNOT gate on the separable initial ancillae state $\ket{\mu}\ket{1}$.}
  \label{fig:placeholder}
\end{figure}

The classical and quantum restrictions together fixes at most $2n/3$ input qubits and converts them to ancillae, such that the resulting circuit acts on at least $n/3$ coordinates and has at most one input qubit per layer-$1$ gate, i.e., a cleaned-up circuit.
\end{proof}

From here on, we will assume that the circuits we consider are cleaned-up. Recall that, for a \QACZ circuit $C$ and projector $\Pi$ acting on some of its qubits, the activation function $f_{C,\Pi}(\x)$, defined in \Cref{sec:actv}, checks whether $C(\x)$ has a component in the non-zero eigenspace of $\Pi$ (i.e., $\eig{\Pi}$). We will show that at each layer of the circuit (which is of depth-$3$), the relevant activation functions can be simulated in \ACZ.
We demonstrate this in the following sections.  

\subsection{\texorpdfstring{\ACZ}{AC0} Simulation of Projectors After Layer One}\label{sec:ac0d1}
After the clean-up in \Cref{lem:l1cleanup}, the state after the first layer of the circuit is separable across all input qubits, and has the following form, 
\begin{align}\label{eq:l1state}
   C(\x) &= \ket{\psi_0}_{B_0} \tens \lr{\bigotimes_{i \in [n]}  \ket{\psi_i(x_i)}_{B_i}} 
\end{align}
where $\ket{\psi_i(x_i)}_{B_i}$ is a state on qubits $B_i$ in layer-$1$ gate containing input qubit $i$, that only depends on $x_i$, and $\ket{\psi_0}_{B_0}$ is a fixed state on the remaining ancillae. 
In this section, we show that the activation functions of certain projectors have a simple form that lets us simulate them classically. These consist of two main building blocks: (i) projectors with few gates in their light-cones (``juntas''), described in \Cref{def:juntaproj} and (ii) projectors that are separable across all the $B_i$'s  from \Cref{eq:l1state}.
The first two lemmas show that the activation functions of these two building blocks can be simulated classically in \ACZ depth-$(\leq 2)$.

\begin{definition}[Junta Projector]\label{def:juntaproj}
Given a depth-$1$ \QACZ circuit $C$, a projector $\Pi_S$ acting on a subset of qubits $S$ is a \emph{$k$-junta projector} on $C$ if at most $k$ gates in $C$ act on some qubit from $S$. 
\end{definition}

\begin{lemma}[Junta Projector in \ACZ]\label{lem:d1junta}
  Let $C$ be a cleaned-up depth-$1$ \QACZ circuit on $n$ inputs and $\Pi_S$ be a \emph{$k$-junta projector} on $C$. Then, $f_{C,\Pi_S}(\x)$ is a $k$-junta. 
  In particular, $f_{C,\Pi_S}(\x)$ can be implemented by a width $k$ {\sf DNF} formula with at most $2^k$ clauses. 
\end{lemma}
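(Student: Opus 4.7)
The plan is to exploit the separable structure of $C(\x)$ after the cleaned-up first layer, as given in \eqref{eq:l1state}, and reduce the question of whether $\Pi_S \cdot C(\x) \neq 0$ to a statement about a sub-system involving only the at most $k$ input coordinates in the light-cone of $\Pi_S$.

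First, I would let $L \subseteq \{0, 1, \dots, n\}$ denote the set of indices $i$ for which $B_i \cap S \neq \emptyset$. By the $k$-junta hypothesis, at most $k$ layer-$1$ gates act on any qubit of $S$, so $|L| \leq k$. Since the circuit is cleaned up, each $B_i$ with $i \in [n]$ contains at most one input qubit (namely $x_i$), and $B_0$ contains no input qubits at all, so the input coordinates in the light-cone of $\Pi_S$ form the set $L \cap [n]$ of size at most $k$.

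Next, I would use that $\Pi_S$ acts as identity on qubits outside $S$, combined with the tensor-product form of $C(\x)$, to factor
\begin{align*}
(\Pi_S \otimes I)\cdot C(\x) \;=\; \left[\left(\Pi_S \otimes I_{(\bigcup_{i \in L} B_i)\setminus S}\right) \bigotimes_{i \in L} \ket{\psi_i(x_i)}_{B_i}\right] \otimes \bigotimes_{i \notin L} \ket{\psi_i(x_i)}_{B_i}.
\end{align*}
The right-hand tensor factor is a product of normalized states and hence never zero, so the full expression vanishes if and only if the left bracket vanishes. But the left bracket depends only on the variables $\{x_i : i \in L \cap [n]\}$, so $f_{C,\Pi_S}(\x)$ depends on at most $k$ coordinates, i.e., it is a $k$-junta.

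Finally, any Boolean function on $k$ variables has a canonical DNF representation of width at most $k$ with at most $2^k$ clauses (one term per satisfying assignment of the relevant coordinates), giving the claimed DNF. There is no real obstacle here: the lemma is essentially a direct consequence of the separability guaranteed by \Cref{lem:l1cleanup} together with the locality of $\Pi_S$.
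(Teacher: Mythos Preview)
Your proof is correct and follows essentially the same approach as the paper: both arguments exploit the separable form \eqref{eq:l1state} of $C(\x)$ after the cleaned-up first layer to restrict attention to the light-cone of $\Pi_S$, which touches at most $k$ input coordinates, and then invoke the trivial DNF representation of a $k$-junta. Your version is slightly more explicit in writing out the tensor factorization, whereas the paper simply names the sub-circuit $C'$ on the light-cone and asserts $f_{C,\Pi_S}=f_{C',\Pi_S}$; one minor imprecision is the claim $|L|\le k$ (since $0\in L$ need not correspond to a single gate), but you correctly identify that only $|L\cap[n]|\le k$ is needed, so this does not affect the argument.
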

\begin{proof}
    Let $C'$ be the sub-circuit that only contains the gates in the light-cone of $S$ (i.e., the gates that contain a qubit from $S$). Then, $C'$ acts on at most $k$ coordinates and $f_{C,\Pi_S}(\x) = f_{C', \Pi_S}(\x)$. This shows that $f_{C,\Pi_S}$ is a $k$-junta, and it is straightforward to see that any $k$-junta can be written as a width $k$ {\sf DNF} with at most $2^k$ clauses. (Each clause will check that the value of the $k$ junta variables equals a specific value for which the function should output $1$.)
\end{proof}

\begin{lemma}[Separable Projector in \ACZ]\label{lem:d1ac0}
  Let $C$ be a cleaned-up depth-$1$ \QACZ circuit on $n$ inputs whose ancillae start in $\ket{\0}_A$. Let $\Pi$ be projector that has the form,
    $$\Pi = \Pi^0_{B_0} \tens \lr{\bigotimes_{i =1}^{n} \Pi^{(i)}_{B_i}}$$
    where $B_i$ is the set of qubits in $G_i$, the layer-$1$ gate containing $i \in [n]$ and $B_0 \subseteq A$ is the set of remaining ancillae. 
    (note that $\Pi^{(i)}_{B_i}$ could in particular be $I_{B_i}$ as this is a special case of a projector.)

  Then, the following holds.
  \begin{enumerate}
      \item $f_{C,\Pi}(\x) \in \aczm(n,1)$,  and is either a constant function or given by,
    $$f_{C,\Pi}(\x) =  \andl_{i = 1}^n f_{C,\Pi^{(i)}_{B_i}}(x_i)$$
    
    \item $f_{C,(I - \Pi)}(\x) \in \aczm(n,1)$, and is either the constant function or given by, 
  $$f_{C,(I - \Pi)}(\x) = \orl_{i=1}^n f_{C,(I-\Pi^{(i)}_{B_i})}(x_i).$$
  \end{enumerate}
\end{lemma}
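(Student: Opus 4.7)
The plan is to exploit the product structure of both the post-layer-1 state of the circuit and the projector $\Pi$, and then read off the activation function as a single unbounded-fan-in AND (for part 1) or OR (for part 2) of one-bit functions.

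For part 1, I would start from the expression of the state after the first layer, as guaranteed by the clean-up and by equation (4):
\[
C(\x) \;=\; \ket{\psi_0}_{B_0} \,\otimes\, \bigotimes_{i=1}^n \ket{\psi_i(x_i)}_{B_i},
\]
where $\ket{\psi_i(x_i)}_{B_i}$ depends only on $x_i$ because, after cleaning up, the only layer-$1$ gate acting on $B_i$ is $G_i$ and it contains input $i$ alone. Since $\Pi$ factorizes across exactly the same partition, I would write
\[
\Pi \cdot C(\x) \;=\; \bigl(\Pi^0_{B_0}\ket{\psi_0}_{B_0}\bigr) \,\otimes\, \bigotimes_{i=1}^n \bigl(\Pi^{(i)}_{B_i}\ket{\psi_i(x_i)}_{B_i}\bigr).
\]
A tensor product of vectors is nonzero iff every factor is nonzero, so letting $c_0 := \bigl[\Pi^0_{B_0}\ket{\psi_0}_{B_0}\neq 0\bigr]$ (a constant, as $\ket{\psi_0}_{B_0}$ does not depend on $\x$), we get
\[
f_{C,\Pi}(\x) \;=\; c_0 \cdot \prod_{i=1}^n f_{C,\Pi^{(i)}_{B_i}}(x_i).
\]
If $c_0=0$ the function is identically $0$; otherwise it is an AND of $n$ single-bit functions, each being either $x_i$, $\neg x_i$, or a constant, which is computable by a single unbounded-fan-in AND gate and therefore lies in $\aczm(n,1)$.

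For part 2, I would invoke the tools from \Cref{lem:projdecomp} and \Cref{lem:componentinside}. Applying \Cref{lem:projdecomp}(2) to $\Pi = \Pi^0_{B_0} \otimes \bigotimes_{i=1}^n \Pi^{(i)}_{B_i}$ yields
\[
\eig{I-\Pi} \;=\; \mathrm{span}\Bigl\{\eig{I-\Pi^0_{B_0}} \,\cup\, \bigcup_{i=1}^n \eig{I-\Pi^{(i)}_{B_i}}\Bigr\},
\]
and then \Cref{lem:componentinside} gives $(I-\Pi)\cdot C(\x) \neq 0$ iff at least one of $(I-\Pi^0_{B_0})\cdot C(\x)$ or $(I-\Pi^{(i)}_{B_i})\cdot C(\x)$ is nonzero. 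Thus
\[
f_{C,(I-\Pi)}(\x) \;=\; f_{C,(I-\Pi^0_{B_0})} \,\vee\, \bigvee_{i=1}^n f_{C,(I-\Pi^{(i)}_{B_i})}(x_i).
\]
The first disjunct is a constant (independent of $\x$); if it equals $1$ the whole function is identically $1$, and otherwise we get the claimed OR of single-variable functions, realized by a single unbounded-fan-in OR gate of size at most $n$, hence in $\aczm(n,1)$.

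The argument is mostly mechanical once the product structure is set up, so the only place to be careful is justifying that $f_{C,\Pi^{(i)}_{B_i}}(\x)$ is indeed a function of $x_i$ alone. This is where the cleaned-up hypothesis is used: because every layer-$1$ gate acting on $B_i$ is $G_i$ (which contains only input bit $i$), the factor $\Pi^{(i)}_{B_i}\ket{\psi_i(x_i)}_{B_i}$ in the tensor decomposition above depends only on $x_i$, and the other factors are nonzero normalized states, so whether $\Pi^{(i)}_{B_i}\cdot C(\x) \neq 0$ is determined entirely by $x_i$. No genuine obstacle arises beyond this bookkeeping.
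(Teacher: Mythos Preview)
Your proposal is correct and follows essentially the same approach as the paper. For part 1 you work directly with the tensor factorization $\Pi\cdot C(\x)=\bigotimes_i \Pi^{(i)}_{B_i}\ket{\psi_i(x_i)}$, whereas the paper phrases the same step as an appeal to \Cref{lem:projdecomp}(1) together with \Cref{lem:d1junta}; for part 2 both arguments invoke \Cref{lem:projdecomp}(2) and \Cref{lem:componentinside} identically.
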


\begin{proof}

First, note that for any projector $\Pi'_{B_0}$ acting only on the ancilla $B_0$, $f_{C,\Pi'_{B_0}}(\x)$ is input independent.
From Item~(1) of \Cref{lem:projdecomp}, $\Pi \cdot C(\x) \neq 0$ iff $\Pi^{(i)}_{B_i} \cdot C(\x) \neq 0$ for every $i$. Therefore,  
  \begin{align}
    f_{C,\Pi}(\x) &= \andl_{i=0}^{n} f_{C,\Pi^{(i)}_{B_i}}(\x)
 \end{align}
 and  from  \Cref{lem:d1junta} each $f_{C,\Pi^{(i)}_{B_i}}(\x)$ for $i \in [n]$ depends only on $x_i$. Hence, this function is either $1$ or given by an \AND gate on the variables $x_1, \dots x_n$ and their negations. 
This completes the first part of the proof.

   We continue to prove the second item.
 From \Cref{lem:projdecomp}, we have that,
  \begin{align}\label{eq:dec1}
    \eig{I-\Pi} &=  \spn\clr{ \eig{(I-\Pi^{(i)}_{B_i}) \tens I_{\overline{B_i}}}}_{i \in \{0,\dots, n\}}
  \end{align}
Then, by \Cref{lem:componentinside}, any state $\ket{\psi}$ that has a component inside $\eig{I-\Pi}$ must have a component inside one of the subspaces $\eig{(I-  \Pi^{(i)}_{B_i}) \tens I_{\overline{B_i}}}$, and vice versa.
This gives,
  \begin{align}
    f_{C,(I - \Pi)}(\x) &= \orl_{i=0}^{n} f_{C,(I - \Pi^{(i)}_{B_i})}(\x)
  \end{align}
and then \Cref{lem:d1junta} shows that for $i \in [n]$, $f_{C,(I - \Pi^{(i)}_{B_i})}(\x)$ is depends only on the variable $x_i$. Hence, this function is either always $1$ or is given by an \OR gate on the variables $x_1, \ldots, x_n$ or their negations.%
\footnote{An alternative argument  goes as follows: 
$f_{C,(I - \Pi)}(\x) = 
\blr{(I-\Pi)\cdot C(\x)\neq 0}
= \neg \blr{(I-\Pi)\cdot C(\x)=0} = \neg \blr{C(\x) \in \eig{\Pi}}$.
Now, for $C(\x)$ to be a +1 eigenstate of $\Pi$ it must be a tensor product of +1 eigenstates of $\Pi_{B_i}^{(i)}$ for $i\in \{0,1, \dots, n\}$. Hence, the condition of $C(\x)\in \eig{\Pi}$ can be rewritten as an AND function on the literals $x_1, \neg x_1 \ldots, x_n, \neg x_n$, and by De Morgan's law, $f_{C,(I - \Pi)}(\x)$ can be written as an OR function on the same literals.}
\end{proof}

\begin{corollary}[Combinations of Separable Projectors \ACZn]\label{cor:d1sepcomb}
  Let $C$ be a depth-$1$ cleaned up \QACZ circuit with ancillae in $\ket{\0}_A$ and $\Pi_T$ be a projector that is given by the product of $k+1$ projectors,  
  $$\Pi_T = \Pi^0_{T_0} \tens \Pi^1_{T_1}  \tens \dots\tens \Pi^k_{T_k}$$ 
  satisfying, $\Pi^0_{T_0} = \kb{\vth}_{T_0}$ and $\Pi^j_{T_j} = I - \kb{\vth}_{T_j}$ for $j \in [k]$ for some separable states $\ket{\vth}_{T_0}, \ldots, \ket{\vth}_{T_k}$.
  Then, $f_{C,\Pi_T}(\x) \in \aczm(O(n^{k+1}),2)$ and is a $n$-DNF formula with $(n+1)^k$ clauses.
\end{corollary}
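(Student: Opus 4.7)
The plan is to decompose $\Pi_T$ across the layer-$1$ gate subsets $B_0, B_1, \dots, B_n$ and then invoke the junta and separable-projector lemmas (\Cref{lem:d1junta} and \Cref{lem:d1ac0}) to extract a Boolean formula structure for $f_{C,\Pi_T}$. As a first step, I would apply \Cref{lem:projdecomp}(1) to the factorization $\Pi_T = \Pi^0_{T_0}\otimes\Pi^1_{T_1}\otimes\dots\otimes\Pi^k_{T_k}$ to obtain
$$f_{C,\Pi_T}(\x) \;=\; \bigwedge_{j=0}^k f_{C,\Pi^j_{T_j}}(\x).$$

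Next, for each $j$ I would refine the decomposition using the fact that $\ket{\vth}_{T_j}$ is separable across its qubits. Writing $P^j_i := \kb{\vth}_{T_j\cap B_i}\otimes I_{B_i\setminus T_j}$ (with the convention $P^j_i = I_{B_i}$ if $T_j\cap B_i=\emptyset$), we get $\kb{\vth}_{T_j} \otimes I_{\mathrm{rest}} = \bigotimes_{i=0}^n P^j_i$, which is exactly the tensor form demanded by \Cref{lem:d1ac0}. For $j=0$, applying \Cref{lem:d1ac0}(1) yields that $f_{C,\Pi^0_{T_0}}(\x)$ is either constant or equal to an AND $\bigwedge_{i=1}^n f_{C,P^0_i}(x_i)$. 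Each conjunct depends only on $x_i$ by \Cref{lem:d1junta} (since $P^0_i$ has a single gate in its light-cone after the cleanup), so this is an AND of at most $n$ literals. For $j\in [k]$, applying \Cref{lem:d1ac0}(2) to $\Pi^j_{T_j} = I - \bigotimes_i P^j_i$ gives that $f_{C,\Pi^j_{T_j}}(\x)$ is either constant or equal to the OR $\bigvee_{i=1}^n f_{C,(I-P^j_i)}(x_i)$, which is an OR of at most $n$ literals. (The $i=0$ ancilla factor in both cases contributes only a constant, absorbed into the formula.)

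Finally, I would combine the pieces: $f_{C,\Pi_T}$ is the AND of one AND-of-literals and $k$ ORs-of-literals. Distributing the top-level AND over the ORs produces a DNF with at most $n^k$ clauses (choosing one literal from each of the $k$ ORs), each clause of width at most $n+k = O(n)$ after merging with the fixed AND-of-literals part (and dropping any clause that becomes contradictory by containing both a variable and its negation). This gives a depth-$2$ circuit of size $O(n^{k+1})$, matching the stated bound.

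The only delicate point is bookkeeping: making sure trivial cases (some $f_{C,\Pi^j_{T_j}}$ being the constant $0$ or $1$, or some $T_j\cap B_i=\emptyset$) collapse gracefully, and that the tensor decomposition $\kb{\vth}_{T_j}\otimes I = \bigotimes_i P^j_i$ genuinely satisfies the separability hypothesis of \Cref{lem:d1ac0}. These are routine since $\ket{\vth}_{T_j}$ is already product across single qubits, so no real obstacle arises.
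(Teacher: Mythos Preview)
Your first step is where the argument breaks. You claim
\[
f_{C,\Pi_T}(\x)\;=\;\bigwedge_{j=0}^{k} f_{C,\Pi^{j}_{T_j}}(\x)
\]
by citing \Cref{lem:projdecomp}(1), but that lemma only tells you $\eig{\Pi_T}=\bigotimes_j \eig{\Pi^{j}_{T_j}}$. Turning this into the $\AND$ of activation functions requires $C(\x)$ to be separable across the $T_j$'s, and that is \emph{not} given: after a cleaned-up layer~$1$, the state is separable across the gate blocks $B_0,\dots,B_n$, not across the $T_j$'s. When a single block $B_i$ meets two of the $T_j$'s, the state on $B_i$ may be entangled across that split, and then one can have $\Pi^{1}_{T_1}C(\x)\neq 0$ and $\Pi^{2}_{T_2}C(\x)\neq 0$ while $\Pi^{1}_{T_1}\otimes\Pi^{2}_{T_2}\cdot C(\x)=0$. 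Concretely, take $B_1=\{q_1,q_2\}$ with a reflection about $\ket{++}$ so that $C(0)$ on $B_1$ has Schmidt rank~$2$; setting $T_1=\{q_1\}$, $T_2=\{q_2\}$ and choosing $\ket{\vth}_{T_1},\ket{\vth}_{T_2}$ from the Schmidt basis gives exactly this failure. Only the forward implication $f_{C,\Pi_T}=1\Rightarrow\bigwedge_j f_{C,\Pi^{j}_{T_j}}=1$ is valid in general.

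The paper's proof aligns with the $B_i$'s \emph{before} extracting a Boolean formula. It first applies \Cref{lem:projdecomp}(2) to each factor $\Pi^{j}_{T_j}=I-\kb{\vth}_{T_j}$ to write $\eig{\Pi^{j}_{T_j}}$ as a span of pieces $\eig{(I-\kb{\vth}_{B_i\cap T_j})}$ indexed by $i$, then tensors these spans and uses \Cref{lem:componentinside} to get $f_{C,\Pi_T}(\x)=\bigvee_{\z\in\{0,\dots,n\}^k} f_{C,\,\Pi^{0}_{T_0}\otimes\mathcal P(\z)}(\x)$. Each projector $\Pi^{0}_{T_0}\otimes\mathcal P(\z)$ now genuinely factors across the $B_i$'s, so \Cref{lem:d1ac0}(1) gives each term as a width-$\le n$ $\AND$, and the outer $\OR$ over $(n{+}1)^k$ tuples yields the DNF. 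The fix, in short, is to decompose along the $B_i$'s first; splitting along the $T_j$'s loses information about entanglement inside a block.
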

\begin{proof}
Let $B_i$ be the set of qubits in the layer $1$ gate containing input qubit $i$ and $B_0$ be the set of remaining ancillae. For $i \in \nrng, j \in [k]$, define projector $\Q^{(i,j)}$ acting on $B_i \cap T_j$ as, 
\begin{align}
    \Q\id{i,j} := \begin{cases} 
    0 \ \ \ \text{if } B_i \cap T_j = \emptyset \\
    I - \kb{\vth}_{B_i \cap T_j} \ \ \ \text{otherwise.}
    \end{cases}
\end{align}
Then, since $(I - \Pi^j_{T_j}) = \kb{\vth}_{T_j}$, we can apply \Cref{lem:projdecomp} to get,
\begin{align}\label{eq:spnone}
    \eig{\Pi^j_{T_j}} 
    &= \spn\clr{\eiglr{\Q\id{i,j} \tens I_{T_j \setminus B_i}}}_{i \in \nrng}.
\end{align}
Now, for a tuple $\z \in \nrng^k$, define the following projector,  
\begin{align}
    \P(\z) := \bigotimes_{j \in [k]} \Q\id{z_j,j}.
\end{align}
Combining with \Cref{eq:spnone} gives, 
\begin{align}
   \eig{\Pi^1_{T_1} \tens \dots \tens \Pi^k_{T_k}} &= \spn\clr{ \eig{\P(\z)} }_{\z \in \nrng^k} \\
   \Rightarrow \eig{\Pi_T} &= \spn\clr{ \eig{\Pi^0_{T_0} \tens \P(\z)} }_{\z \in \nrng^k}.
\end{align} 
Therefore, from \Cref{lem:componentinside},
\begin{align}
    f_{C,\Pi_T}(\x) &= \orl_{\z \in \nrng^k}  f_{C, \Pi^0_{T_0} \tens \P(\z)}(\x) 
\end{align}
By definition $\P(\z)$ is separable across all the $B_i$s for $i \in \nrng$ and so is $\Pi^0_{T_0} \tens \P(\z)$. Therefore, from \Cref{lem:d1ac0}, each $f_{C, \Pi^0_{T_0 \tens \P(\z)}}(\x)$ is a width $\leq n$ \AND function and taking an \OR of $(n+1)^k$ of them gives $f_{C,\Pi}(\x) \in \aczm(O(n^{k+1}),2)$. 
\end{proof}

\subsection{\texorpdfstring{\ACZ}{AC0} Simulation of Projectors After Layer Two}
Consider a depth-$2$ circuit $C = L\id{2} \cdot L\id{1}$ whose depth-$1$ sub-circuit is given by $L\id{1}$. A projector $\Pi_S$ that is a $k$ junta-projector on $L\id{1}$ may not necessarily have a small light-cone in $C$, making it challenging to simulate its activation function. In the case that $\Pi_S$ is also a $k$ junta-projector on the depth-$1$ sub-circuit formed by $L\id{2}$, we can break it up into the building blocks from \Cref{sec:ac0d1} to simulate its activation function. As it turns out, this is sufficient for the purpose of simulating the output of depth-$\leq 3$ circuits (see \Cref{sec:ac0simulation} for the proof).

\begin{lemma}[Layer-2 Small Projector]\label{lem:d2ac0small}
  Let $C = L\id{2} \cdot L\id{1}$ be a cleaned up depth-$2$ \QACZ circuit on $n$ inputs and $\Pi_T$ be a projector on a subset of qubits $T$. Suppose that $\Pi_T$ is a $k$-junta projector on both the depth-$1$ circuits $L\id{1}$ and $L\id{2}$, formed by the gates in layers $1$ and $2$ of $C$ respectively. Then, 
  $f_{C,\Pi_T}(\x)$ can be written as a DNF of size at most $O(4^k \cdot n^{k+1})$.
\end{lemma}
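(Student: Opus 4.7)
The plan is to peel off the second layer and reduce to the single-layer setting of \Cref{cor:d1sepcomb}. First, any layer-$2$ gate disjoint from $T$ commutes with $\Pi_T$, so it can be moved past $\Pi_T$ and dropped without changing the activation, leaving at most $k$ layer-$2$ gates $G_1,\dots,G_k$ that touch $T$. Applying the block diagonalization \eqref{eq:pre_gatedec} with controls on $S_i\setminus T$ and targets on $S_i\cap T$, I write $G_i = (I - P_i) + P_i R_i$, where $P_i = \kb{\vth_i}_{S_i\setminus T}$ and $R_i = I - 2\kb{\vth_i}_{S_i\cap T}$. Since the $G_i$ act on disjoint qubits and therefore commute, their product expands as
\[ G^T := \prod_{i}G_i \;=\; \sum_{A\subseteq[k]} Q_A\, R_A, \quad Q_A := \prod_{i\in A}P_i\prod_{i\notin A}(I-P_i), \quad R_A := \prod_{i\in A}R_i, \]
where the $Q_A$ are mutually orthogonal projectors on $U:=\bigcup_i(S_i\setminus T)$ and $R_A$ is a self-adjoint unitary supported on $T$.

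Because $Q_A$ commutes with $\Pi_T$ (disjoint supports) and the images of the $Q_A$'s are mutually orthogonal in $\mathcal{H}_U$, the summands in $\Pi_T G^T\psi = \sum_A (Q_A\otimes \Pi_T R_A)\psi$ are mutually orthogonal vectors, so the sum is nonzero iff some summand is. Conjugating by the unitary $R_A$ and defining $\tilde\Pi_T^A := R_A\,\Pi_T\,R_A$ (still a projector supported on $T$ with the same $L^{(1)}$-light cone as $\Pi_T$, hence still a $k$-junta on $L^{(1)}$), I obtain
\[ f_{C,\Pi_T}(\x) \;=\; \bigvee_{A\subseteq[k]}\, f_{L^{(1)},\,Q_A\otimes\tilde\Pi_T^A}(\x), \]
reducing to at most $2^k$ subproblems.

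For each $A$, I apply \Cref{lem:projdecomp}(2) to each reject factor $(I - \kb{\vth_i}_{S_i\setminus T})$ for $i\notin A$, exactly as in the proof of \Cref{cor:d1sepcomb}, producing at most $(n+1)^k$ sub-projectors $\Pi_{A,z}\otimes\tilde\Pi_T^A$ in which $\Pi_{A,z}$ is now separable across the $L^{(1)}$-gates. Let $J_T$ be the set of $L^{(1)}$-gates touching $T$ (so $|J_T|\le k$). The key observation is that $\Pi_{A,z}\otimes\tilde\Pi_T^A$ factorizes across the bipartition $\bigl(\bigcup_{j\in J_T}B_j\bigr)\sqcup\bigl(\bigcup_{j\notin J_T}B_j\bigr)$: separability of $\Pi_{A,z}$ places its $J_T$-part inside the first block, while $\tilde\Pi_T^A$ is already confined to $T\subseteq\bigcup_{j\in J_T}B_j$. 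Since $\psi=L^{(1)}(\x)$ also factorizes across this bipartition, $(\Pi_{A,z}\otimes\tilde\Pi_T^A)\psi\ne 0$ is equivalent to the AND of two conditions: (i) the $J_T$-block is nonzero, which is a $k$-junta condition on $\x_{J_T}$ expressible as a DNF with at most $2^k$ width-$k$ clauses; and (ii) the complementary block is nonzero, which by separability collapses to a conjunction $\bigwedge_{j\notin J_T}\beta_j(x_j)$ of single-variable literals.

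ANDing the literals from (ii) into each clause of (i) gives a DNF with at most $2^k$ clauses of width $O(n)$ per $(A,z)$, and taking the OR over $A$ and $z$ yields $2^k\cdot(n+1)^k\cdot 2^k=O(4^k n^{k+1})$ total literals, matching the claimed bound. The main obstacle is that $\tilde\Pi_T^A$ may be entangled across several $L^{(1)}$-gates, which prevents a straightforward per-gate reduction whenever some $L^{(1)}$-gate straddles $T$ and $U$; the remedy is to factor the \emph{state} (rather than the projector) along the $J_T$ vs.\ $\bar{J_T}$ partition, thereby absorbing this entanglement into the $k$-junta of step (i).
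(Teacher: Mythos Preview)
Your proof is correct and follows essentially the same strategy as the paper: drop layer-$2$ gates not touching $T$, block-diagonalize the remaining $\le k$ reflections into $2^k$ branches, and on each branch split the resulting projector into a $k$-junta piece (handled by \Cref{lem:d1junta}) and a piece separable across layer-$1$ gates (handled via \Cref{cor:d1sepcomb}). The one genuine difference is in where you place the control/target cut. The paper first extends $T$ to include every qubit lying in a layer-$1$ gate that touches $T$ (their set $X$), so that after diagonalizing with controls on the complement $Y$, the control-side projector $Q(\y)$ lives entirely inside $D_2$ and \Cref{cor:d1sepcomb} applies directly. You instead diagonalize with controls on $S_i\setminus T$ for the original $T$, which forces you to deal afterward with the part of $Q_A$ that lands in $J_T$; you resolve this cleanly by factoring the state (not the projector) along the $J_T$ versus $\bar{J_T}$ partition, pushing that overlap into the $k$-junta condition. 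Both routes yield the same $O(4^k n^{k+1})$ bound; the paper's extension of $T$ makes the bookkeeping slightly shorter, while your version makes explicit the obstacle (entanglement of $\tilde\Pi_T^A$ across layer-$1$ gates) that the extension is silently absorbing.
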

\begin{proof}
The depth-$1$ sub-circuit formed by $L\id{1}$ can be partitioned into $D_1 \tens D_2$, where $D_1$ is the sub-circuit formed by $\leq k$ layer-$1$ gates on $T$, and $D_2$ contains the remaining gates. 
Assume without loss of generality that $L\id{2}$ contains only the $\leq k$ gates $G_1(S_1) \dots G_k(S_k)$, in the light-cone of $T$ (as the remaining gates can be replaced with identity, not affecting the output of the circuit).

First, we partition each $S_j$ into $X_j, Y_j$ where $X_j = S_j \setminus T$ and $Y_j = S_j \cap T$ and define,  
\begin{align}
    X := \bigcup_{j \in [k]} X_j, \\
    Y := \bigcup_{j \in [k]} Y_j
\end{align}

Then, $X \cup T$ belongs to the same set of layer-$1$ and layer-$2$ gates as $T$, so we will proceed assuming without loss of generality that $X \subseteq T$ by extending $\Pi_T \to \Pi_T \tens I_{X \setminus T}$ to act on $X \cup T$. 

By the Heisenberg evolution of $\Pi_T \tens I_Y$ acting on $C$ to $\wh{\Pi}_{T,Y} = (L\id{2})^\dag \Pi L\id{2}$ acting on $L\id{1}$, 
\begin{align} 
f_{C,\Pi}(\x) = f_{L\id{1},\wh{\Pi}_{T,Y}}(\x)
\end{align}
Hence, it is sufficient to show that $f_{L\id{1},\wh{\Pi}_{T,Y}} \in \aczm$. We will do so by decomposing $\wh{\Pi}_{T,Y}$ into the building blocks from \Cref{sec:ac0d1}.

For each $G_j(S_j)$, treating $Y_j$ as the controls and $X_j$ as the targets (ref. \Cref{sec:pregates}) gives,
 \begin{align}
    G_j(S_j) &:= (I - 2\kb{\vth}_{S_j}) \\
    &= (I - \kb{\vth}_{Y_j}) \tens I_{X_j} + \kb{\vth}_{Y_j} \tens \underbrace{(I - 2\kb{\vth}_{X_j})}_{G_j(X_j)}\label{eq:d2dec}
 \end{align}
We will describe below the block-diagonalization of $\wh{\Pi}_{T,Y} = (L\id{2})^\dag \Pi L\id{2}$ into $2^k$ subspaces corresponding to the subspaces of $\mathcal{H}_{Y}$. We will then argue that, inside each subspace, $\wh{\Pi}_{T,Y}$ is separable across $T,Y$ and its component on $T$ is a $k$-junta projector, while its component on $Y$ has the form in \Cref{cor:d1sepcomb}.  
Define $\Q^{(1)}_{Y_j} = \kb{\vth}_{Y_j}$ and $\Q^{(0)}_{Y_j} = I - \Q^1_{Y_j}$ for each $j \in [k]$. Then, \Cref{eq:d2dec} becomes, 
\begin{align}
    G_j(S_j) &=  \Q^{(0)}_{Y_j}  \tens I_{X_j} +\Q^{(1)}_{Y_j} \tens G_j(X_j) 
\end{align}
and for both values of $b \in \bin$,
\begin{align}
      [\Q^{(b)}_{Y_j}, G_j(S_j)] &= 0 \tag{i.e., the commutator is $0$.} \\
      \Q^{(b)}_{Y_j} \cdot G_j(S_j) &= \Q^{(b)}_{Y_j} \tens \underbrace{(I - 2\kb{\vth}_{X_j})^b}_{G_j(X_j)^b}\label{eq:d2comm}
\end{align}
where, for a unitary $U$, $U^0 = I$ and $U^1 = U$ and $[O_1, O_2]$ denotes the commutator of two operators $O_1,O_2$. Then, we identify each subspace by a $k$-bit string $\y$. For each $\y \in \bin^k$, the projector onto this subspace, $\Q(\y)_Y \tens I_X$, is given by, 
\begin{align}
  \Q({\y})_{Y} &:= \bigotimes_{j \in [k]} \Q^{y_j}_{Y_j} 
\end{align}
Due to \Cref{eq:d2comm}, $[\Q(\y), L\id{2}] = 0$. Then, $L\id{2}$ inside this subspace is a unitary $U(\y)$ on $X$ given by, 
\begin{align}
    \Q(\y) \cdot L\id{2} &= \Q(\y) \tens \lr{\bigotimes_{j \in [k]} (I - 2\kb{\vth}_{X_j})^{y_j}} \\
    &= \Q(\y) \tens U(\y)_X
\end{align}
Putting it all together, we can block-diagonalize $\wh{\Pi}_{T,Y}$ as, 
\begin{align}
\wh{\Pi}_{T,Y} &= (L\id{2})^\dag \Pi_T L\id{2} \\
&= \bigoplus_{\y \in \bin^k} \underbrace{U(\y)^\dag  \cdot \Pi_T \cdot U(\y)}_{\P(\y)_T}\tens \Q(\y)_Y  \\ 
    &= \bigoplus_{\y \in \bin^k} \P(\y)_T \tens \Q(\y)_Y
\end{align}
where each $\P(\y)$ is a projector only on $T$. Therefore, 
\begin{align}
    f_{C,\Pi_T}(\x) &= f_{D_1 \tens D_2, \wh{\Pi}_{T,Y}}(\x) \\
                  &= \orl_{\y \in \bin^k} f_{D_1 \tens D_2, \P(\y) \tens \Q(\y)}(\x) \\
                  &= \orl_{\y \in \bin^k} \lr{f_{D_1, \P(\y)_T}(\x) \andl f_{D_2, \Q(\y)_Y}(\x)} \label{eq:l2smallact}
\end{align}

Since $\P(\y)_{T}$ only acts on $D_1$, it is a $k$-\emph{junta} projector and can be implemented by a DNF of size at most $2^k + 1$ according to \Cref{lem:d1junta}. Also, from \Cref{cor:d1sepcomb}, $f_{D_2, \Q(\y)}$ can be implemented by a DNF of size at most $O(n^{k+1})$.  
The AND of two DNFs of size $s_1$ and $s_2$ can be written as a DNF of size $s_1\cdot s_2$ whose terms are all pairwise ANDs of all pairs of terms in the original DNFs. Therefore, $f_{C,\Pi}(\x)$ can be written as a DNF of size at most $O(4^k \cdot n^{k+1})$.
\end{proof}

\subsection{\texorpdfstring{\ACZ}{AC0} Simulation of cleaned-up depth-3 circuits}
\label{sec:ac0simulation}

Now we have all the pieces to prove \Cref{thm:d3ac0}. 
\acsimthm
\begin{proof}
  Suppose that $C$ computes a function $f(\x)$ with $n$ coordinates (for large enough $n$) on register $t$ in the $\ket{\mu_b}$ basis. Then, $f_{C,\kb{\mu_1}}(\x) = f(\x)$.
  Let $G(S,t) = (I - 2\kb{\vth}_{S,t})$ be the final (layer-3) gate on $t$ and let $D$ be the depth-$2$ sub-circuit. Since other layer-$3$ gates do not affect the output on $t$, without loss of generality, we can assume that $C = G \cdot D$ (i.e., there are no other layer-$3$ gates other than $G$).

  Let $\oth_S = (I - \kb{\vth}_S)$ and $\ket{\widetilde{\mu_b}} = (I_t - 2\kb{\th_t}) \cdot \ket{\mu_b}$ for $b \in \bin$. Consider the following function, 
  \begin{align}\label{eq:outcases}
    g(\x) = \begin{cases}
      f_{D,\oth_S \tens \kb{\mu_1}}(
      \x), & \text{ if } f_{D,\oth_S}(\x) = 1 \\
      f_{D, \kb{\widetilde{\mu_1}}}(\x), & \text{ otherwise }
    \end{cases}.
  \end{align}
    It is sufficient to argue that: (1) $g(\x) = f_{C,\kb{\mu_1}}(\x) = f(\x)$ and  (2) $g(\x) \in \aczm(O(m^4 \cdot n^4),3)$. 

  \paragraph{Proof of (1).}
  Recall from \Cref{sec:pregates} that we can treat $t$ as the ``target'' and $S$ as the controls in $G$ to decompose it as:
  \begin{align}\label{eq:d3gatedec}
      G(S,t) &= \oth_S \tens I_t + \kb{\vth}_S \tens (I_t - 2\kb{\th_t})
  \end{align}

 We will show that the output of $g(\x)$ is correct for each of the two cases in \Cref{eq:outcases} (i.e., for the case $f_{D,\oth_S}(\x) = 1$ and its complement).
  \paragraph{The Case $f_{D,\oth_S}(\x) = 1$:} In this case $\oth_S \cdot D(\x) \neq 0$. 
  Then, there must exists a $b\in \{0,1\}$ such that $\oth_S \tens \kb{\mu_b} \cdot D(\x) \neq 0$, because $\kb{\mu_0}_t + \kb{\mu_1}_t = I_t$.  
  This means that $f_{D,\oth_S \tens \kb{\mu_b}}(\x) = 1$ for some $b\in \{0,1\}$.
  
 On the other hand, $\oth_S$ deactivates $G$, i.e., $\oth_S \cdot G = \oth_S \tens I_t$, and thus for any $b \in \bin$, 
  \begin{align}
      \oth_S \tens \kb{\mu_b} \cdot C = \oth_S \tens \kb{\mu_b} \cdot D.
  \end{align}
  In particular for $b = \neg f(\x)$ the LHS of the above equation equals $0$ and thus $f_{D,\oth_S \tens \kb{\mu_b}}(\x)=0$.
  Overall, we have shown that $f_{D,\oth_S \tens \kb{\mu_b}}(\x)=1$ for exactly one $b\in \{0,1\}$, and this $b$ must be $f(\x)$. Therefore, in the case $f_{D,\oth_S}(\x) = 1$, we have $f(\x) = f_{D,\oth_S \tens \kb{\mu_1}}(\x)$.
%
%
%
  \paragraph{The Case $f_{D,\oth_S}(\x) = 0$:} 
  In this case, 
  $\oth_S \cdot C(\x) = \oth_S \cdot D(\x) = 0$, since $\oth_S$ deactivates $G$.
  Recall that $f_{C,\kb{\mu_1}}(\x) = f(\x)$.
  When we project $C(\x)$ onto $\kb{\mu_1}_t$ we get 
  \begin{align*}
  \kb{\mu_1}_t \cdot C(\x) &= ((\oth_S + \kb{\vth}_S) \tens \kb{\mu_1}_t) \cdot C(\x)\\ 
 &= (\kb{\vth}_S \tens \kb{\mu_1}_t) \cdot C(\x)\\
  &= (\kb{\vth}_S \tens \kb{\mu_1}_t) (\oth_S \tens I_t + \kb{\vth}_S \tens (I_t - 2\kb{\th_t}))\cdot D(\x)\tag{Using \Cref{eq:d3gatedec}}\\
  &= (\kb{\vth}_S \tens (\kb{\mu_1}_t\cdot (I_t - 2\kb{\th_t}))) \cdot D(\x)
  \end{align*}
  and by denoting $\ket{\widetilde{\mu_1}} = (I_t - 2\kb{\th_t}) \cdot \ket{\mu_1}$,  
  \begin{align*}
(I_t - 2\kb{\th_t})  \cdot \kb{\mu_1}_t \cdot C(\x) &= 
    (\kb{\vth}_S \tens \kb{\widetilde{\mu_1}}_t)\cdot D(\x)\\
    &= 
    (\oth_S+ \kb{\vth}_S) \tens \kb{\widetilde{\mu_1}}_t)\cdot D(\x)\tag{Since $\oth_S \cdot D(\x) = 0$}\\
    &=(I_S \tens \kb{\widetilde{\mu_1}}_t)\cdot D(\x)
  \end{align*}
  and thus in this case, $f(\x) = f_{C,\kb{\mu_1}}(\x) = f_{D,\kb{\widetilde{\mu_1}}}(\x)$.

  \paragraph{Proof of (2).}
By Part (1), we have that \begin{align}
  g(\x) &= \lr{f_{D, \oth_S}(\x) \andl f_{D,\oth_S \tens \kb{\mu_1}}(\x) }      \orl \lr{\neg f_{D, \oth_S}(\x)  \andl f_{D, \kb{\widetilde{\mu_1}}}(\x) }
\end{align}
so it suffices to show that each of the three activation functions can be implemented in \ACZ.

  Let $m_1,m_2$ be the number of layer $1$ and $2$ gates respectively and for $\ell \in [2], j \in [m_{\ell}]$, let $B_{\ell,j}$ denote the set of qubits in the $j$th gate at layer $\ell$. Let $B_{\ell,0}$ denote the set of qubits in layer $\ell$ without any gates. To apply \Cref{lem:d2ac0small}, we will first partition the qubits of $S$ into subsets that belong to at most one gate per layer.  
  For $0 \leq i \leq m_1$ and $0 \leq j \leq m_2$, define $S_{ij}$ as,  
  \begin{align}
    S_{ij} = B_{1,i} \cap B_{2,j} \cap S 
  \end{align}
  Recall that $\oth_S = (I - \kb{\vth}_S)$, and let $\oth_{S_{ij}} := (I - \kb{\vth}_{S_{ij}})$. For $S_{ij} = \emptyset$, we define $\oth_{S_{ij}}$ to be the $0$ projector. Then, from \Cref{lem:projdecomp}
  \begin{align}
    \eig{\oth_S} &= \spn{\clr{\eig{\oth_{S_{ij}} \tens I_{S \setminus S_{ij}}}}}_{i \in \blr{0,m_1},\ j \in \blr{0,m_2}}\\
    \eig{\oth_S \tens \kb{\mu_b}_t} &= \spn{\clr{\eig{\oth_{S_{ij}} \tens \kb{\mu_b}_t \tens I_{S\setminus S_{ij}}}}}_{i \in \blr{0,m_1},\ j \in \blr{0,m_2}}
  \end{align}
  From \Cref{lem:componentinside}, 
  \begin{align}
    f_{D, \oth_S}(\x) &= \orl_{i \in \blr{0,m_1},\ j \in \blr{0,m_2}} f_{D,\oth_{S_{ij}}}(\x) \\
    f_{D,\oth_S \tens \kb{\mu_b}}(\x) &= \orl_{i \in \blr{0,m_1},\ j \in \blr{0,m_2}} f_{D,\oth_{S_{ij}} \tens \kb{\mu_b}}(\x).
  \end{align}
  By definition $\oth_{S_{ij}}$ contains qubits from at most one gate per layer. 
  Then, due to \Cref{lem:d2ac0small}, both these functions can be implemented by DNFs of size $O(n^2 m^2)$.
  Finally, the function $f_{D, \kb{\widetilde{\mu_1}}}(\x)$ can be implemented by a DNF of size $O(n^{2})$ since it is a $1$-junta projector on both the depth-$1$ and depth-$2$ sub-circuits. Overall,
  \begin{align}\label{eq:g}
    g(\x) &= \lr{f_{D, \oth_S}(\x) \andl f_{D,\oth_S \tens \kb{\mu_1}}(\x) }      \orl \lr{\neg f_{D, \oth_S}(\x)  \andl f_{D, \kb{\widetilde{\mu_1}}}(\x) }
  \end{align}
  can be implemented as a depth-$4$ \ACZ circuit of $O(m^2 \cdot n^2)$ size. 
  A slightly more careful analysis, explained next, shows that the depth is actually $3$. 

  We show how to implement \Cref{eq:g} by an OR-AND-OR circuit of size $O(m^4 n^4)$. As the AND of two DNFs of size at most $O(n^2 m^2)$, the left hand side $\lr{f_{D, \oth_S}(\x) \andl f_{D,\oth_S \tens \kb{\mu_1}}(\x) }$ can be implemented as a DNF of size $O(m^4 n^4)$, which is a special case of OR-AND-OR circuit.

  As for the right hand side, the expression $\neg f_{D, \oth_S}(\x)$ can be implemented by a CNF of size $O(n^2 m^2)$ due to De Morgan's law, which is an AND of $O(n^2 m^2)$ clauses. Furthermore,
  the expression $f_{D, \kb{\widetilde{\mu_1}}}(\x)$ can be implemented by a DNF of size $O(n^2)$.

  Finally, the AND of a CNF of size $s_1$ and a DNF of size $s_2$ can be written as a OR-AND-OR circuit of size $O((s_1 +n) \cdot s_2)$ as follows.
  Suppose $\phi = C_1 \land \dots \land C_{s_1}$ and $\psi = T_1 \lor \dots \lor T_{s_2}$ are CNF and DNF formulas respectively where each clause $C_i$ is an OR of literals and each term $T_j$ is an AND of literals. Then, the AND of $\phi$ and $\psi$ can be written as a OR-AND-OR circuit of size $O(s_1 \cdot s_2)$ as follows.
  \begin{align*}
    \phi \land \psi &= (C_1 \land \dots \land C_{s_1}) \land (T_1 \lor \dots \lor T_{s_2}) \\
    &= \orl_{i=1}^{s_2} (C_1 \land \dots \land  C_{s_1} \land T_i)
  \end{align*}
  which is a OR-AND-OR circuit of size $O(s_1 \cdot s_2)$
  as $(C_1 \land \dots \land  C_{s_1} \land T_i)$ is a CNF with at most $s_1 + n$ clauses as the term $T_i$ is an AND of at most $n$ literals, a special case of a CNF of size $n$.
\end{proof}

\begin{corollary}[depth-3 Sub-exponential Lower Bound]\label{cor:d3anclb}
  Let $C$ be a single-output cleaned up depth-$3$ \QACZ~circuit $C$ on $n$ inputs with $m = \exp(o(\sqrt{n}))$ gates that computes the function $f(\x)$ on $n$ coordinates. Then, $f(\x)$ cannot be the \PARITY or \MAJORITY function.
\end{corollary}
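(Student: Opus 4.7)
The plan is to obtain this corollary as an almost immediate consequence of the classical simulation result \Cref{thm:d3ac0}, combined with the classical depth-$3$ \ACZ lower bounds of H{\aa}stad for \PARITY and \MAJORITY. Since the circuit $C$ is already assumed cleaned-up, no preprocessing via \Cref{lem:l1cleanup} is required at this stage; that lemma is only needed when one starts from an arbitrary depth-$3$ circuit.

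First, I would invoke \Cref{thm:d3ac0} directly on $C$, yielding a classical \ACZ circuit of depth $3$ and size $O(m^4 \cdot n^4)$ computing $f(\x)$. Under the hypothesis $m = \exp(o(\sqrt{n}))$, the size bound becomes
\[
O(m^4 \cdot n^4) \;=\; \exp(o(\sqrt{n})) \cdot \poly(n) \;=\; \exp(o(\sqrt{n})),
\]
so we obtain $f \in \aczm(\exp(o(\sqrt{n})), 3)$.

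Next, I would appeal to H{\aa}stad's switching-lemma lower bounds~\cite{hastad1986switch}: any depth-$d$ \ACZ circuit computing \PARITY on $n$ bits requires size $\exp(\Omega(n^{1/(d-1)}))$, and the same bound applies to \MAJORITY (since a depth-$d$ \ACZ circuit for \MAJORITY on $n$ bits can be converted, after a balanced restriction, into a depth-$d$ circuit for \PARITY on $\Omega(n)$ bits, or alternatively one can invoke the direct \MAJORITY lower bound via Razborov--Smolensky-style or switching-lemma arguments). For $d = 3$ this threshold is $\exp(\Omega(\sqrt{n}))$, which contradicts the size $\exp(o(\sqrt{n}))$ established in the previous step. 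Hence $f$ can be neither \PARITY nor \MAJORITY.

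There is no genuinely hard step here; the corollary is a packaging of \Cref{thm:d3ac0} with a black-box classical bound. The only minor point worth being careful about is matching the parameters: verifying that $m^4 \cdot n^4 = \exp(o(\sqrt{n}))$ when $m = \exp(o(\sqrt{n}))$, and checking that the depth-$3$ \ACZ lower bound for \MAJORITY does apply in the regime we need (which follows from either a balanced-restriction reduction to \PARITY preserving $\Omega(n)$ live inputs, or by directly citing the known $\exp(\Omega(n^{1/(d-1)}))$ lower bound for \MAJORITY in \ACZ).
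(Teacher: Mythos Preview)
Your proposal is correct and follows essentially the same route as the paper: apply \Cref{thm:d3ac0} to land in $\aczm(O(m^4 n^4),3)$, then invoke H{\aa}stad's $\exp(\Omega(\sqrt{n}))$ depth-$3$ lower bound for \PARITY and \MAJORITY to obtain a contradiction with $m=\exp(o(\sqrt{n}))$. You are also right that, since the corollary's hypothesis already stipulates a cleaned-up circuit, no further invocation of \Cref{lem:l1cleanup} is needed (the paper's proof applies it anyway, which is harmless but redundant given the stated hypothesis).
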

\begin{proof}
First, we apply the clean-up step from \Cref{lem:l1cleanup} while keeping at least $n' = n/3$ coordinates. 
Then, by \Cref{thm:d3ac0}, $f(\x)$ is a function that can be computed by a depth-$3$ \ACZ circuit of size $O(m^4 \cdot n^4)$. Based on the known \ACZ lower-bound of \cite{hastad1986switch}, if $f(\x)$ is either \PARITY or \MAJORITY, it requires $m^4 n^4 \geq \exp(\Omega(\sqrt{n}))$, making $m \geq \exp(\Omega(\sqrt{n}))$.
\end{proof}
\subsection{Size-independent depth-3 Lower-Bound for \PARITY}\label{sec:d3unlim}
\newcommand{\calB}{\mathcal{B}} 
\renewcommand{\K}{\mathcal{K}}

Recall that if $C$ computes $f(\x)$ on target $t$ in $(\ket{\mu_0}, \ket{\mu_1})$ basis, $f_{\kb{C,\mu_u}}(\x) = f(\x)$ and \Cref{thm:d3ac0} gives an \ACZ circuit for this function for such depth-$\leq 3$ cleaned up circuit $C$. We will use the fact that \PARITY is invariant under any classical restriction to construct a specific $\leq 2n/3$ qubit restriction to simplify this \ACZ circuit.  

For a circuit $C$ and projector $\Pi$, we say that $\Pi$ \emph{kills} $C$ if $\Pi \cdot C(\x) = 0$ for all $\x \in \bin^n$. Similarly, we say that a classical restriction $R$ \emph{kills} a boolean function $f(\x)$ if $f|_R(\x) = 0$ for all $\x \in \bin^{n-|R|}$. For activation functions, note $\lr{f_{C,\Pi}}|_R(\x) = f_{C|_{R},\Pi}(\x) $. 

For any layer $2$ gate $G(S) = (I - 2\kb{\vth}_S)$ of a depth-$2$ circuit $C$, we use the notation $f_{C,G}(\x)$ to refer to the activation function $f_{C,\kb{\vth}_S}(\x)$. Then, since $f_{C,\kb{\vth}_S}(\x) = f_{C^1,\kb{\vth}_S}(\x)$, where $C^1$ is the depth-$1$ sub-circuit of $C$, it follows from \Cref{lem:d1ac0} that either $f_{C,G}(\x) = 1$ for all $\x$ (trivial) or it is the \AND function on a subset of the variables. Then, we have the following two key observations. 

\begin{restatable*}
[Limited non-monotonicity]{lemma}{onesidedmono}\label{lem:onesidefanout}
Let $C$ be a cleaned up depth-$2$ circuit on $n > 2$ inputs. Then, there exists a string $z\in \{0,1\}^n$, such that for each $i \in [n]$, the single-coordinate restriction $R = (i,z_i)$ kills $f_{C,G}$ for at most two layer-2 gates $G$ with non-trivial activation function. 
\end{restatable*}

\begin{restatable*}[Single-qubit activation functions at depth-$2$]{lemma}{singleqbadproj}\label{lem:d2badproj}
    Let $C$ be a cleaned-up depth-$2$ circuit and $\kb{\eta}_q$ be a projector that does not kill $C$. Let, $G = (I - 2\kb{\vth}_{S})$ be the layer $2$ gate containing $q$. Then, at least one of the below must hold. 

    \begin{enumerate}
   \item  Either there is a restriction $R$ on $|R| \leq 2n/3$ coordinates that fixes $f_{C|_R, \kb{\eta}_q}(\x) = 1$ for all $\x \in \bin^{n-|R|}$,
    
    \item Or $f_{C, \kb{\eta}_q}(\x) = f_{C,G}(\x)$ and is given by an \AND function (up to NOT gates) of width at least $2n/3$.
    \end{enumerate}
\end{restatable*}

These are sufficient to prove our main theorem and we defer their proofs to the next section. 

\begin{theorem}[Depth-$3$ Lower-Bound with Unlimited Ancillae] \label{thm:depth_3_inf_anc}
 Let $C$ be a depth-$3$ $\qac^0$ circuit on $n = \omega(1)$ coordinates and an arbitrary number of ancillae and size, that computes the function $f(\x)$.  Then $f(\x)$ cannot be the \PARITY function.
\end{theorem}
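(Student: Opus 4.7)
Assume for contradiction that some depth-$3$ \QACZ circuit $C_0$ computes \PARITY on $n = \omega(1)$ inputs. Applying the Clean-Up Lemma (\Cref{lem:l1cleanup}) with $f = \PARITY$, I obtain a cleaned-up depth-$3$ circuit $C$ on $n_1 \geq n/3 = \omega(1)$ coordinates that still computes \PARITY exactly (in a possibly relabeled output basis). Let $G(S,t) = I - 2\kb{\vth}_{S,t}$ be the unique layer-$3$ gate on the target $t$, write $\oth_S = I - \kb{\vth}_S$, let $\ket{\widetilde{\mu_1}}_t = (I_t - 2\kb{\th_t})\ket{\mu_1}$, and let $D$ be the depth-$2$ subcircuit obtained from $C$ by dropping $G$. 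The key identity established inside the proof of \Cref{thm:d3ac0} gives
\[
f(x) \;=\; \bigl(f_{D,\oth_S}(x)\wedge f_{D,\oth_S\otimes \kb{\mu_1}_t}(x)\bigr)\;\vee\;\bigl(\neg f_{D,\oth_S}(x)\wedge f_{D,\kb{\widetilde{\mu_1}}_t}(x)\bigr),
\]
and I plan to use this identity together with the depth-$2$ structural lemmas to derive a contradiction.

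The next step is to break the multi-qubit activation functions into single-qubit ones. By \Cref{lem:componentinside} applied to $\eig{\oth_S}=\mathrm{span}\{\eig{(I-\kb{\th_q})_q}\}_{q\in S}$, we get $f_{D,\oth_S} = \bigvee_{q\in S} f_{D,(I-\kb{\th_q})_q}$, and a similar decomposition for $f_{D,\oth_S\otimes\kb{\mu_1}_t}$ (after first splitting off the $\kb{\mu_1}_t$ factor via \Cref{lem:projdecomp}). Each resulting single-qubit activation function $f_{D,\kb{\eta}_q}$ that does not kill $D$ is then classified by \Cref{lem:d2badproj} as one of two types: (I) some restriction on $\le 2n_1/3$ coordinates makes it identically $1$, or (II) it equals $f_{D,G_q'}$ for the layer-$2$ gate $G_q'$ of $D$ containing $q$, and this coincides with a width-$\ge 2n_1/3$ \AND of literals (and in particular depends on many input coordinates in the same monotone direction).

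In the type-(I) case for any of the single-qubit projectors that appear, I apply that restriction $R$ (of size $\le 2n_1/3$) to force the corresponding disjunct to $1$; this collapses one of $f_{D,\oth_S}$, $f_{D,\oth_S\otimes\kb{\mu_1}_t}$, or $f_{D,\kb{\widetilde{\mu_1}}_t}$ to a constant and removes it from the formula. Because \PARITY is invariant under every classical restriction, $f|_R$ must still equal \PARITY on the surviving $n_1 - |R| \ge n_1/3 = \omega(1)$ coordinates, so the reduced formula must still express an $\omega(1)$-variable \PARITY. I then iterate on the remaining activation functions of the reduced formula, again classifying each as (I) or (II). If at any stage the chain of type-(I) reductions removes every activation function, $f|_R$ becomes a Boolean expression of only $O(1)$ activation functions depending on $O(1)$ further variables (plus the constants produced), which cannot be \PARITY on $\omega(1)$ variables. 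Otherwise, at some point every remaining activation function is of type (II), so $f|_R$ becomes a constant-depth Boolean combination of \ANDs of layer-$2$ gate activation functions $f_{D,G}$, each of width $\ge 2n_1/3$ on the surviving variables. To handle this, I invoke \Cref{lem:onesidefanout} to obtain a global assignment $z\in\{0,1\}^{n_1}$ with the property that restricting any single surviving coordinate $i$ to $z_i$ kills at most two of these layer-$2$ \ANDs. I then restrict almost all surviving coordinates (keeping a small, constant-sized window free) to $z$: by the lemma this kills only an $O(1)$ number of \ANDs, while the remaining \ANDs either become $0$ (since they originally had width $\ge 2n_1/3$ but now depend on $O(1)$ free variables through literals that are forced to $0$) or become $1$-juntas on the free window. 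The net result is that $f|_{R'}$ becomes a function depending on $O(1)$ variables, contradicting that it equals \PARITY on the free window of size $\omega(1)$.

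The step I expect to be the main obstacle is the analysis of the joint projector $\oth_S\otimes\kb{\mu_1}_t$, since it mixes a multi-qubit reflection projector on $S$ with the single-qubit output projector on $t$ and is not itself separable into rank-$1$ pieces in a way that directly matches the hypothesis of \Cref{lem:d2badproj}. The plan is to split it via \Cref{lem:componentinside} and \Cref{lem:projdecomp} into a disjunction of the form $\bigvee_{q\in S} f_{D,(I-\kb{\th_q})_q\otimes \kb{\mu_1}_t}$, and then show, by analogously extending the proof of \Cref{lem:d2badproj} to two-qubit product projectors of this shape (one qubit being the output target), that each term is again of type (I) or (II). A secondary subtlety is that the restrictions used in type-(I) simplifications across different activation functions may overlap on the same coordinates, so I must be careful to budget restrictions so that the remaining free coordinates still witness \PARITY on $\omega(1)$ inputs throughout the iterative simplification.
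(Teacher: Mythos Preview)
Your overall architecture---clean up, peel off the last gate, decompose $f_{D,\oth_S}$ into single-qubit activation functions, and then classify each via \Cref{lem:d2badproj}---matches the paper. But the two endgame cases are both mishandled.

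\textbf{Type-(II) case: Lemma~\ref{lem:onesidefanout} is used in the wrong direction.} That lemma says restricting $(i,z_i)$ kills \emph{at most} two layer-2 \AND activations. You then restrict almost all coordinates to $z$ and claim ``this kills only an $O(1)$ number of \AND{}s, while the remaining \AND{}s either become $0$ \ldots or become $1$-juntas.'' This is contradictory: an \AND not killed by any $(i,z_i)$ is precisely one whose restricted literals are all \emph{satisfied} by $z$, so it does \emph{not} become $0$; it survives as an \AND over the free window. Since the free window must be $\omega(1)$ to derive a contradiction with \PARITY, the surviving \AND{}s can still be wide, and their disjunction need not be a small junta. (Your final sentence even asserts the free window is simultaneously ``constant-sized'' and ``of size $\omega(1)$.'') The paper's move is the opposite: it restricts most coordinates to $z_i \oplus 1$, which \emph{does} kill every \AND whose ``$z$-aligned'' literal count exceeds $n/4$; the finitely many ``bad'' gates (with $\ge 0.1n$ literals aligned against $z$) are bounded by a counting argument ($|\calB|\le 20$) and killed individually. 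That separation is the missing idea.

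\textbf{Type-(I) case: iteration runs out of budget, and the two-qubit extension of \Cref{lem:d2badproj} is not established.} Each type-(I) restriction consumes up to $2n_1/3$ inputs, and you need at least two rounds (one to collapse $f_{D,\oth_S}$, a second to handle $f_{D,\oth_S\otimes\kb{\mu_1}_t}$), which already exceeds $n_1$. You correctly flag that classifying $f_{D,(I-\kb{\th_q})_q\otimes\kb{\mu_1}_t}$ needs a two-qubit analogue of \Cref{lem:d2badproj}, but never prove it. The paper sidesteps both issues: if \emph{any} single $q\in S$ is type (I), restrict once to force $f_{D|_R,\kb{\pth_q}}\equiv 1$; since $\kb{\pth_q}$ commutes with $G$, this immediately gives $f|_R = f_{D|_R,\kb{\pth_q}\otimes\kb{\mu_1}_t}$, which by \Cref{lem:d2ac0small} is an $O(n^2)$-size depth-$2$ DNF---a direct contradiction with \PARITY on $\Omega(n_1)$ inputs via H{\aa}stad. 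No iteration, and no extension of \Cref{lem:d2badproj}, is needed.
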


\begin{proof}
Suppose for contradiction that $f(\x)$ is \PARITY.  
First we will perform the cleanup step from \Cref{lem:l1cleanup}, which preserves at least $n/3 = \omega(1)$ coordinates. Now we will proceed to prove the bound against cleaned up circuits on $n$ coordinates. 
To do so, we will construct a restriction that preserves $\Theta(n)$ qubits and simplifies the final gate on the target qubit $t$. From \Cref{sec:pregates}, the final gate on $t$, $G(S,t) = (I - 2\kb{\vth_{S,t}})$ can be viewed as a controlled-U gate on target $t$ as below. 

\begin{align}\label{eq:d3gatedec2}
    G(S,t) &= \oth_S \tens I_t + \kb{\vth}_S \tens (I_t - 2\kb{\th_t})
\end{align} 
Then, we will argue that there is a restriction to either guarantee that we are always in the $\oth = (I - \kb{\vth}_S)$ subspace, in which $G$ is inactive, or guarantee that we are orthogonal to the subspace, which simplifies $G$ into a single qubit unitary. 

First we will remove all the redundant qubits from the gates, these are the qubits that are always in the state $\ket{\th}_q$ regardless of the input, making $\kb{\pth_q}$ kill $C$. Now WLOG $S$ contains no such qubits. We will use $C^2$ to denote the sub-circuits consisting of gates up to layer $2$. 

\paragraph{Restrict to identity subspace.} 
Suppose that there is a qubit $q \in S$ such that $\ket{\eta}_q = \ket{\pth_q}$ satisfies point (1) of \Cref{lem:d2badproj}, i.e., there exists a restriction $R$ on at most $2n/3$ coordinates to fix $f_{C^2|_R,\kb{\eta}_q}(\x) = 1$. Then, $C|_R$ still correctly computes $f(\x)$ on $\x \in \bin^{n-|R|}$ in some basis $\lr{\ket{\mu_0}, \ket{\mu_1}}$ on $t$. Since $\kb{\mu_0} + \kb{\mu_1} = I$, $f_{C^2|_R, \kb{\eta}_q \tens \kb{\mu_b}}(\x)$ is nonzero for some value of $b \in \bin$.
Additionally, for $b = f(\x) \oplus 1$, $f_{C|_R,\kb{\eta} \tens \kb{\mu_b}}(\x) = 0.$
Hence,  
\begin{align} 
f_{C^2|_R,\kb{\eta} \tens \kb{\mu_1}}(\x) &= f_{C|_R, \kb{\eta} \tens \kb{\mu_1}}  \ \ \ \ \ \ \  (\text{due to } \kb{\eta} \cdot G = \kb{\eta} \tens I)\\
&= f(\x).
\end{align} 
However, from \Cref{lem:d2ac0small}, $f_{C^2|_R, \kb{\eta}_q \tens \kb{\mu_1}}(\x) \in \aczm(O(n^3), 2)$. From the known bounds of \cite{hastad1986switch} for \ACZ circuits, this is a contradiction to $n = \omega(1)$. 

\paragraph{No restriction to identity subspace.}
This is when the previous case doesn't apply. Let $H_1 \dots H_m$ be the layer-$2$ gates containing at least one qubit from $S$. Then, for every qubit $q \in S$ belonging to the layer $2$ gate $H_j$, \Cref{lem:d2badproj} gives that 
$f_{C^2,\kb{\pth_q}}(\x) = f_{C^2,H_j}(\x)$ and is a width $\geq 2n/3$ \AND function. 

Now we will argue that there is a restriction $R$ on $\leq 0.9n$ inputs, that makes $f_{C^2|_R, H_j}(\x) = 0$ for all $\x$. This has the effect of
making $f_{C^2|_R, \kb{\pth_q}}(\x) = 0$ for all $q \in S$ forcing the state on $S$ after $C^2$ to be $\ket{\vth}_S$ regardless of input. Then, from \Cref{eq:d3gatedec2}, this simplifies $G(S,t)$ to a single qubit unitary acting only on $t$, giving us a depth-$\leq 2$ circuit computing parity on $n' \geq 0.1n$ coordinates. 

We will use \Cref{lem:onesidefanout} to construct $R$. The idea is that the activation function corresponding to all but a $O(1)$ number of these gates is monotone in the same direction for a large fraction of the inputs. This allows us to construct a restriction that kills all these gates simultaneously while handling $O(1)$ other gates separately. 

For each gate $H_j$ with $j \in [m]$, define $\K(H_j)$ to be the set of all single coordinate restrictions that kill $f_{C^2, H_j}$, i.e., 
\begin{align}
    \K(H_j) := \clr{(i,b) :  \forall \x \ \ f_{C^2|_{(i,b)}, H_j}(\x) = 0 }_{i \in [n], b\in \bin}
\end{align}
Recall that each $f_{C^2,H_j}(\x)$ is an \AND of $\geq 2n/3$ literals, which implies $\vlr{\K(H_j)} \geq 2n/3$.

First, let $\calB$ be the set of all the gates $H_j$ for $j \in [m]$ such that $\K(H_j)$ contains at least $0.1n$ elements of the form $(i,z_i)$, where $z_i$ is defined according to \Cref{lem:onesidefanout}. Then, from \Cref{lem:onesidefanout}, $\vlr{\calB} \leq 2n/0.1n \leq 20$, and each $\vlr{\K(H_j)} > 20$. By picking a coordinate $(i,z_i) \in \K(H_j)$ from each $H_j$ in $\calB$, this gives a restriction $R_0$ on $\leq 20$ qubits that kills all these gates in $\calB$ simultaneously.

Let $R_1$ be a restriction on arbitrary $3/4n$ inputs not in $R_0$, such that each $i$ in $R_1$ is set to $x_i = z_i \oplus 1$. We will argue that $R_1$ kills all the gates $H_j$ outside $\calB$.  
Then, $R = R_1 \cup R_0$.  
Every gate in $\calB$ is killed in $C^2|_R$ due to $R_0$. Now, for every gate $H_j \not \in \calB$, $\K(H_j)$ contains at least $2/3n - 0.1n > 0.5n$ elements of the form $(i, z_i \oplus 1)$. Since $R_1$ leaves only $0.25n$ elements unfixed, there is at least one such element from $\K(H_j)$ in $R_1$. 

Therefore, $R$ is such that $f_{C|_R, H_j}(\x) = 0$ for all $j \in [m]$ and $\x \in \bin^{n-|R|}$.  
Furthermore, $|R| \leq |R_0| + |R_1| \leq 0.75n + 20 \leq 0.8n$, for $n$ large enough. Hence, $C|_R$ is a depth-$2$ circuit that computes parity on $\geq 0.2n$ inputs and it must be that $0.2n \leq 6$ due to \Cref{cor:d2par}. 
\end{proof}

\subsubsection{Intermediate proofs}
Now we provide the proofs of the two lemmas.

\onesidedmono
\begin{proof}
For each $i$, let $\ket{\th^1_i}$ be the component of $i$ corresponding to its layer $1$ gate $G_i = (I - 2\kb{\vth^1}_{A_i,i})$ containing the ancillae $A_i$. 

We will set $z_i = \th^1_i$ if $\ket{\th^1_i} \in \clr{\ket{0}, \ket{1}}$ or $z_i = 0$ otherwise. For any layer $2$ gate $G = (I-2\kb{\vth^2_S})$ acting on qubits $S$, we will argue that $(i, z_i)$ kills $f_{C,G}(\x)$ \emph{only if} either $i \in S$ or $A_i \subseteq S$. Then, this implies that there can be at most two layer $2$ gates that are killed by $(i, z_i)$.

Note that since measurements corresponding to layer $2$ reflections commute with layer $2$ gates, we have that $f_{C,G}(\x) = f_{C^1,\kb{\vth^2}_S}$, where $C^1$ is the layer $1$ sub-circuit.  
Additionally, by the assumption that $f_{C,G}(\x)$ is non-trivial and from \Cref{lem:d1ac0}, we have that any $(i,b)$ kills $f_{C,G}(\x)$ if and only if it kills $f_{G_i, \kb{\vth^2}_{S}}(\x)$, where $G_i$ here denotes the circuit consisting only of the layer-1 gate acting on $i$. 

Consider the case when $i \not \in S$ and $A_i \not \subseteq S$. We consider the mixed state of $G_i(z_i)$ on $A_i \cap S$.
This is obtained by measuring qubits $i$ and $A_i \setminus S$ in any basis, so in particular, we measure them according to $\th^1$. We note that (i) if we measure $\ket{\th^1_a}$ for all $a\in A_i \setminus S$ and $\ket{\th^1_i}$ for $i$, then the remaining gate is a non-trivial reflection gate on $A_i \cap S$. Denote the result of this gate on the ancillae $A_i \cap S$ by $\ket{\psi}$. (ii) Otherwise, the remaining gate is identity on $A_i \cap S$. By definition of $z_i$ and the fact that we assumed that layer-1 gates are non-trivial, i.e., that $\braket{\th^1_a | 0} \neq 0$ for any ancilla $a$ starting in $\ket{0}$, we get that we are in case (i) with non-zero probability. By the fact that we assumed that layer-1 gates are non-trivial, i.e., that $\vlr{\braket{\th^1_a | 1}} \neq 1$ for any ancilla $a$, we get that we are in case (ii) with non-zero probability. We see that in the two cases we got different vectors, so the Schmidt rank of the mixed state is 2.

Since $f_{C,G}$ is non-trivial, $\kb{\th^2_{S}}$ does not kill $G_i$, and so it must be that for at least one of the two vectors $v\in \{\ket{0}, \ket{\psi}\}$ we have $\kb{\th^2_{S}} \cdot v \neq 0$. This means that with non-zero probability we get in $G_i(z_i)$ a state that is not killed by $\kb{\th^2_{S}}$, and thus $(i,z_i)$ does not kill $f_{G_i, \kb{\vth^2}_S}$. 

The remaining cases, $i \in S$ or $A_i \subseteq S$, can happen for at most two such layer-2 gates. Therefore, $(i, z_i)$ can kill $f_{C,G}$ for at most two layer-2 gates $G$.
\end{proof}

\singleqbadproj
\begin{proof}
We will prove the lemma by considering $f_{C^1,\kb{\eta}}(\x)$ where $C^1$ is the depth-$1$ sub-circuit.

\paragraph{The case when $\kb{\eta_q}$ kills $C^1$.}
It must be the case $[\kb{\eta}_q, G] \neq 0$, since $\kb{\eta}_q$ doesn't kill $C$, making $\braket{\eta_q | \th_q} \neq 0$. 
Then, on any input $\x$, 
  \begin{align}
    \kb{\eta}_q \cdot C(\x) &= \kb{\eta}_q \cdot (I - 2\kb{\vth^2_S}) \cdot C^1(\x)  \\
    &=\kb{\eta}_q \cdot (-2\kb{\vth^2_S}) \cdot C^1(\x)  \\
                            &= - 2 \ket{\eta}_q \ket{\vth^2}_{S \setminus q} \cdot \braket{\eta_q | \th_q} \cdot \lr{\bra{\vth^2_S} \cdot C^1(\x)} \label{2.9}
  \end{align}
  and thus, $f_{\kb{\eta}_q}(\x) = f_{C,\kb{\vth}_S}(\x) = f_{C,G}(\x)$. 

Since $f_{C,\kb{\vth}_S}(\x) = f_{C^1,\kb{\vth}_S}(\x)$, it follows from \Cref{lem:d1ac0} that $f_{C,\kb{\vth}_S}(\x)$ is an \AND gate of width $w \leq n$. If $w \leq 2n/3$ then we have our restriction $R$ by fixing all the coordinates to satisfy the gate. Otherwise, $w \geq 2n/3$ as required.

\paragraph{The case when $\kb{\eta}_q$ does not kill $C^1$.} 
If $f(\x) = f_{C,\kb{\eta}_q}(\x)$ is a $2$-junta, then since $\kb{\eta}_q$ doesn't kill $C$, there is a restriction on the $\leq 2$ coordinates that $f(\x)$ depends on that makes it always $1$.  

Otherwise, if $f(\x)$ is not a $2$-junta there are at least two input-dependent layer $1$ gates in the light-cone of $\kb{\eta}_q$. Then, there must be a qubit $r$ that doesn't share layer $1$ gates with $\kb{\eta}_q$ such that $\kb{\pth_{r}}$ does not kill $C^1$. This is because we can remove any ``redundant qubits", i.e., qubits that are always in the $\ket{\th}_r$ state before $G$, until we are either left with a small light-cone or find such an $r$.  

Then, for $\Pi = \kb{\pth_r} \tens \kb{\eta}_q$, $f_{C^1,\Pi}$ is given by \Cref{lem:d1ac0} and by assumption, is not the constant function $0$. Additionally, $f_{C^1,\Pi}(\x)$ is a 2-junta. Then, observe that for all $\x$,
\begin{align}
    \Pi \cdot G(S) \cdot C^1(\x) &= \Pi \cdot C^1(\x)
\end{align}
Therefore, $f_{C,\Pi}(\x) = f_{C^1,\Pi}(\x)$ and we can let $R$ be the restriction that fixes this 2-junta to $1$. 
\end{proof}

\section{Depth-2 \texorpdfstring{\QACZ}{QAC0} Circuits Have Small Total Influence}
\label{sec:depth_2_influence}
In this section, we will prove that any depth-2 \QACZ~circuit has small total influence, as defined in \Cref{sec:prelim_aobf}. In particular, we will show that, regardless of the number of ancillae, the total influence of an $n$-input depth-2 \QACZ~circuit is upper-bounded by $O(\log n)$.
To establish this, we will prove a stronger bound on the Fourier concentration (showing exponentially small Fourier tails) of the circuit's outcome as a function of the input qubits.

Since \PARITY~has total influence $n$, this implies that the functions computable by depth-2 \QACZ~circuits have small correlation with \PARITY. Therefore, this result offers a novel average-case depth-$2$ lower-bound against \PARITY~$\in$~\QACZ~with unlimited ancillae. More generally, the result implies an average-case depth-$2$ lower-bound against any Boolean function with large total influence.

\subsection{Main Proof}
We will now prove the main result of this section, showing that for any depth-$2$ \QACZ~circuit $C$, the Fourier tail of the function $f_C(x) = \Pr[\text{$C$ accepts $x$}]$ at level $k = c\log(1/\eps)\log(n/\eps)$ is at most $\eps$, i.e. $\W^{\ge k}[f_C] \leq \eps$. Along the way, we will introduce several intermediate claims and lemmata that will be proved in \Cref{sec:inf_proofs}. The formal theorem statement is as follows.
\begin{theorem} \label{thm:influence}
There exists a constant $c\ge 1$ such that the following holds. Let $C$ be a depth-$2$ \QACZ circuit with $n$ input qubits and any number $a$ of ancilla qubits.
Consider the function $f_C: \B^n \to [0,1]$ defined by $f_C(x) = \Pr[\text{$C$ accepts $x$}]$. Then,  for any $\eps>0$,
\begin{align}
    \W^{\ge k(\eps)}[f_C] \leq \eps, \quad \textnormal{ where } \; k(\eps) = c\log(1/\eps)\log(n/\eps)
\end{align}
\end{theorem}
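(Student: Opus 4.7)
}
The plan is to follow the three-stage outline sketched in the techniques section. First, I would perform a \emph{layer-1 simplification}: any layer-1 \CZ gate acting on more than $b = \Theta(\log(n/\eps))$ input qubits is activated by a uniformly random input with probability at most $2^{-b}$, since the reduced state on its input qubits is maximally mixed. Replacing each such gate with the identity perturbs the unitary, and hence the acceptance probability function $f_C$, in $\ell_2$ norm by at most $\eps/8$ (via a hybrid argument summing the per-gate errors). This step contributes $\eps/2$ error to the target Fourier tail after squaring and collecting constants. After this step, every remaining layer-1 gate touches at most $b$ input qubits.

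Next, I would apply a \emph{random-valued restriction} tailored to the simplified circuit. For each remaining layer-1 gate, keep uniformly at random exactly one of its input qubits alive and set the others to uniform bits; input qubits in no layer-1 gate remain alive. This preserves at least $n/b$ alive variables in expectation, and by \Cref{lemma:random_valued_restriction} relates the Fourier tail $\W^{\ge k}[f_C]$ to the expected Fourier tails of the restricted functions $f_C|_{J,z}$. It therefore suffices to bound $\W^{\ge k'}[f_C|_{J,z}]$ on average for the structured class of circuits in which every layer-1 gate depends on at most one input qubit, for appropriate $k' = \Theta(\log(1/\eps))$.

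The heart of the argument is the analysis of these structured circuits. Since the circuit is single-output, only the unique layer-2 gate $g = I - 2\kb{\vth}$ containing the target matters; WLOG $\ket{\vth} = \ket{1^m}$ (absorbing single-qubit rotations into the output measurement). Partition the qubits entering $g$ into $Q_0, Q_1, \ldots, Q_n$, where $Q_i$ for $i \ge 1$ collects the qubits of the layer-1 gate depending on $x_i$ (if any) that feed into $g$, and $Q_0$ collects ancilla-only pieces. Crucially, the mixed state of $g$'s input register on input $x$ factorizes as $\rho^x = \rho_0 \otimes \rho_1^{x_1} \otimes \cdots \otimes \rho_n^{x_n}$, where each $\rho_i^{x_i}$ is determined by the single bit $x_i$. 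Let $\rho = \E_x[\rho^x]$ and split into two cases by the activation probability $p = \bra{1^m}\rho\ket{1^m}$. If $p \le \eps/32$, replace $g$ by identity at cost $\le \eps/8$ in $\ell_2$; the resulting circuit is a depth-1 computation of a dictator, whose Fourier weight is entirely on levels $\le 1$, finishing this case. If $p \ge \eps/32$, decompose $\rho_i^{x_i} = \rho_i + (-1)^{x_i} D_i$ with $D_i = (\rho_i^1 - \rho_i^0)/2$, and write the matrix Fourier expansion
\begin{align*}
\rho^x = \sum_{R \subseteq [n]} \widehat{\rho}(R)\, \chi_R(x), \qquad \widehat{\rho}(R) = \rho_0 \otimes \bigotimes_{i \notin R} \rho_i \otimes \bigotimes_{i \in R} D_i.
\end{align*}
Since $f_C(x) = \operatorname{Tr}(M \rho^x)$ for some measurement operator $M$ with $\|M\|_\infty \le 1$, each Fourier coefficient satisfies $|\widehat{f_C}(R)| \le \|\widehat{\rho}(R)\|_1 = \prod_{i \in R} \|D_i\|_1 = \prod_{i \in R} \td{\rho_i^0}{\rho_i^1}$, using that trace norm is multiplicative across tensor products and $\|\rho_i\|_1 = 1$. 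The separability of $\rho$ together with the non-negligible activation $p \ge \eps/32$ forces each $\rho_i$ to have large overlap with $\kb{1}^{\otimes|Q_i|}$, hence each $\rho_i^{x_i}$ is close to it too, and standard min-entropy/trace-distance relations give $\sum_i \td{\rho_i^0}{\rho_i^1} \le O(\log(1/\eps))$. Applying Maclaurin's inequality to the symmetric polynomials of the trace distances yields $\W^\ell[f_C] \le (O(\log(1/\eps))/\ell)^\ell$, and summing from $\ell = \Theta(\log(1/\eps))$ upward gives the tail $\le \eps/4$.

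The main obstacle I anticipate is the Case~2 quantitative step: turning the scalar bound $\bra{1^m}\rho\ket{1^m} \ge \eps/32$ on the separable average state $\rho$ into a useful upper bound on $\sum_i \td{\rho_i^0}{\rho_i^1}$. This requires carefully exploiting separability, perhaps by showing that $\bra{1^{|Q_i|}}\rho_i\ket{1^{|Q_i|}} \ge p^{1/?}$ uniformly in $i$ and then relating this high-bias condition to the trace distance between the two conditional states $\rho_i^0, \rho_i^1$ via a Fuchs--van de Graaf style inequality. Once that quantitative link is established, combining the errors from the simplification ($\eps/2$), the restriction bookkeeping, Case~1 ($\eps/4$), and Case~2 ($\eps/4$) yields $\W^{\ge k(\eps)}[f_C] \le \eps$ for $k(\eps) = c \log(1/\eps)\log(n/\eps)$, and the total-influence bound $\Inf[f_C] = O(\log n)$ follows by setting $\eps = 1/n$ and integrating levels.
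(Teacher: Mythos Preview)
Your plan is correct and matches the paper's proof essentially step for step: the layer-1 simplification at threshold $b = \Theta(\log(n/\eps))$, the random-valued restriction keeping one input per gate, the separable factorization $\rho^x = \rho_0 \otimes \bigotimes_i \rho_i^{x_i}$ entering the unique relevant layer-2 gate, the case split on $\bra{1^m}\rho\ket{1^m}$, the matrix Fourier expansion, and Maclaurin's inequality.

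The obstacle you flag in Case~2 dissolves more simply than you suggest, and your stated intermediate bound is slightly off. You do \emph{not} need a uniform lower bound of the form $\bra{1^{|Q_i|}}\rho_i\ket{1^{|Q_i|}} \ge p^{1/?}$; since $\rho$ is a tensor product, $\prod_i \bra{1^{|Q_i|}}\rho_i\ket{1^{|Q_i|}} = p \ge \eps/32$ directly yields $\sum_i \delta_i \le \ln(32/\eps)$ for $\delta_i := 1 - \bra{1^{|Q_i|}}\rho_i\ket{1^{|Q_i|}}$, via $\delta_i \le -\ln(1-\delta_i)$. Because $\rho_i = \tfrac12(\rho_i^0 + \rho_i^1)$, each conditional state has overlap $\ge 1 - 2\delta_i$ with $\kb{1^{|Q_i|}}$, and a triangle-inequality argument through $\kb{1^{|Q_i|}}$ gives $\td{\rho_i^0}{\rho_i^1} \le O(\sqrt{\delta_i})$. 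The quantity Maclaurin actually needs is therefore $\sum_i \td{\rho_i^0}{\rho_i^1}^{2} = O\!\left(\sum_i \delta_i\right) = O(\log(1/\eps))$, not $\sum_i \td{\rho_i^0}{\rho_i^1}$ as you wrote (the latter can be as large as $\Theta(\sqrt{n\log(1/\eps)})$); with this correction your Maclaurin step goes through verbatim.
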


\begin{proof}[Proof of \Cref{thm:influence}]
We will give an overview of the full proof with reference to several intermediate claims and lemmata. We defer the proofs of these intermediate results to \Cref{sec:inf_proofs}.

We begin the proof by simplifying the \CZ~gates at layer $1$ of the circuit, i.e., those closest to the inputs.
We replace each gate that depends on many input qubits with the identity gate (incurring a small error) and then apply a random restriction tailored to the circuit structure. The purpose of these two steps is to reduce to a more structured case, where each gate at layer $1$ depends on at most one input qubit (but potentially many other ancillae). Moreover, we show that these steps behave well with respect to the Fourier tails of the circuit.

\paragraph{Reduction to a Structured Circuit.}
We consider the gates at layer $1$. Recall that in \QACZ, gates at a certain layer are non-overlapping. This means that for each of the layer-$1$ gates, $g_1, \dots, g_\ell$, there is a set of incoming input bits, $S_1, \dots, S_{\ell} \subseteq[n]$, such that $S_1, \dots, S_{\ell}$ are disjoint (where possibly some of the $S_i$ are empty sets, as the corresponding gates only depend on ancillae).

We first show that any gate $g_i$ whose corresponding $S_{i}$ contains more than $b \triangleq \log(16n/\eps)$ inputs can be replaced with the identity gate, while incurring smalle error. In particular, \Cref{remove_large_layer_1_gates} shows that replacing each such gate with identity changes $f_C$ by at most $\eps/2n$ in Fourier weight above level $k$.
Since there are at most $n$ such gates (because each gate depends on at least one input qubit and the gates are non-overlapping), the Fourier weight above level $k$ differs by at most $\eps/2$. 

\begin{restatable}{claim}{RemoveLOne} \label{remove_large_layer_1_gates}
Let $g$ be a CZ-gate at layer $1$ with input qubits $S\subseteq[n]$ entering it. Let $C'$ be the circuit where $g$ is replaced with identity.
Then $\|f_{C}-f_{C'}\|_2^2 \le 4\cdot 2^{-|S|}$.
Furthermore, for any $k$, the Fourier weight above level $k$ of $f_C$ and $f_{C'}$ differs by at most
\begin{align}
    |\W^{\ge k}[f_C]-\W^{\ge k}[f_{C'}]| \leq 8 \cdot 2^{-|S|}.
\end{align}
\end{restatable}

\paragraph{Random Restrictions.} We replace $C$ with a circuit $C'$, replacing the above mentioned layer-$1$ gates that depend on more than $b$ input qubits with identity, and continue to analyze the Fourier tails of $f_{C'}$.
Each layer-$1$ gate of $f_{C'}$ depends on at most $b$ input qubits. We will apply a \emph{random-valued} restriction that randomly picks at most one input qubit per gate keeping it alive and sets the remaining inputs uniformly at random. The random restriction will sample a random subset $J \subseteq [n]$ of the variables to stay alive, based on the circuit structure, and a partial assignment $z\in \{0,1\}^{[n] \setminus J}$ sampled uniformly at random. 

We will now describe the explicit random restriction process for sampling $J$. We consider only the gates $g_i$ for which $1 \le |S_i| \le b$ (recall that $b=\log(16n/\eps)$). For each such gate, we pick exactly one of the bits in $S_i$ to be included in $J$ uniformly at random and independently of all other choices. Input qubits not involved in any gate always remain alive. \Cref{claim:Fourier_weights_under_random_restriction} demonstrates that Fourier tails behave nicely with respect to this random restriction. 
(Observe that this restriction always keeps at least $n/b$ variables alive.)

\begin{restatable}{claim}{RandRestrictFourier}\label{claim:Fourier_weights_under_random_restriction}
    Let $b\in \N$ and $S_1, \dots, S_{\ell}\subseteq[n]$ be disjoint sets of size between $1$ and $b$.
    Define $S_0 = [n]\setminus (S_1 \cup \dots \cup S_{\ell})$ so that $S_0, \ldots, S_{\ell}$ form a partition of $[n]$.
    Consider the random valued restriction $(J,z)$ that for each $i\in \{1,\ldots, \ell\}$, picks independently uniformly at random exactly one element from to $S_i$ to $J$, and furthermore surely picks all elements in $S_0$ to $J$.
    
    Then, for any $k\in \N$ and function $f:\B^n \to \R$, the above random restriction process satisfies
    \begin{align}
        \W^{\ge 4kb}[f] \le 2\cdot  \E_{J,z}[\W^{\ge k} [f|_{J,z}]].
    \end{align}
\end{restatable}

\paragraph{Fourier Tail Bounds for Structured Circuits.}
With \Cref{claim:Fourier_weights_under_random_restriction}, the proof of \Cref{thm:influence} is thus reduced to proving Fourier tail bounds of the structured depth-$2$ \QACZ~circuits, which we denote $C''$, with layer-$1$ gates each depending on at most one input qubit (and potentially many ancillae).
Specifically, it remains to prove that for any such circuit $C''$,  $\W^{\ge k}[f_{C''}] \le \eps/4$ for $k = \Theta(\log(1/\eps))$. 

While the circuit $C''$ depends only on the variables in $J$ that were kept alive, it will be convenient to think of it as a circuit with $n$ input qubits as well, where qubits outside $J$ are ignored. Without loss of generality, since the computation is single-output, we only need to consider the single layer-2 gate, $g$, containing the target qubit. Also, since the ancilla starting state is allowed to be an arbitrary separable state, without loss of generality, the gate $g$ is a $\CZ$ gate which flips the phase iff the input is $\ket{1^m}$, where $m$ is the number of input qubits to $g$.

We partition the input qubits of $g$ into disjoint sets $Q_0, Q_1, Q_2, \ldots, Q_n$ where for $i\in \{1,\ldots, n\}$, $Q_i$ is a set, potentially empty, of the qubits involved with the layer-1 gate containing $x_i$ and $Q_0$ contains the qubits coming from ancilla-only layer-1 gates. Note that for any fixed input $x \in \{0,1\}^n$, the states for these different subsets are separable.

We will now impose that the input to the layer-2 gate $g$ is a mixed state. Specifically, any layer-1 qubit which is not contained in $g$ will be traced out.
Let $\rho_{0}$ be the mixed state of qubits in $Q_0$. For $i\in [n]$, let $\rho_{i}^{b}$ be the mixed state of the qubits in $Q_i$ when $x_i = b$, for $b\in \{0,1\}$.
(If $Q_i= \emptyset$ then $\rho_i^b = (1)$, the trivial mixed state of dimension  one, for both $b\in \{0,1\}$.)
Then, on input $x\in \{0,1\}^n$, the state entering $g$ is 
\begin{align}
    \rho^x = \rho_0 \otimes \rho_{1}^{x_1} \otimes \dots \otimes \rho_n^{x_n}.    
\end{align}

We will denote the average state over the two options for $x_i$ as
\begin{align}
    \rho_i = \frac{1}{2}(\rho_{i}^{0} + \rho_{i}^{1})
\end{align}
and the average state over all $2^n$ options for $x$ as $\rho$, where we note that
\begin{align}
    \rho = \rho_0 \otimes \rho_{1} \otimes \dots \otimes \rho_n.
\end{align}

We will divide the remainder of the proof into two main cases. First, we consider the case in which the layer-$2$ gate is almost always inactive, meaning $g$ can be replaced by identity without notably affecting the acceptance probability on most inputs. In this case, the resultant function is close to a dictator function (i.e. depends only on one input qubit) and has extremely small Fourier tails. In the second remaining case, we argue that the ``entropy'' of the incoming state to the gate is small. This turns out to imply that the function's total influence and Fourier tails are small.

\paragraph{Case 1: $g$ is almost always inactive.} 
Suppose $\bra{1^m}\rho \ket{1^m}\le \eps/32$. 
For each fixed $x$, let $\eps_x := \bra{1^m}\rho^x \ket{1^m}$. Then, $\E_x[\eps_x] = \bra{1^m}\rho \ket{1^m} \le \eps/32$
and for each fixed $x$, \Cref{CZ-meaningless} implies that $\td{\rho^x}{\CZ_m \cdot  \rho^x \cdot \CZ_m} \le 2\sqrt{\eps_x}$.

\begin{restatable}{lemma}{CZmeaningless}
\label{CZ-meaningless}
Let $\rho$ be a quantum mixed state on $m$ qubits, and let $\delta = \bra{1^m}\rho \ket{1^m}$.
Then, 
\begin{align}
    \td{\rho}{\CZ_m \cdot \rho \cdot \CZ_m} \le 2\sqrt{\delta}.
\end{align}
\end{restatable}
\noindent Therefore, removing the \CZ gate, barely changes the circuit functionality. More formally, denote by $C'''$ the circuit $C''$ with gate $g$ replaced with identity.
Then, 
\begin{align}
    \|f_{C''}-f_{C'''}\|_2^2 &= \E_{x\sim \B^n}[(f_{C''}(x)-f_{C'''}(x))^2] \\
    &\le  \E_{x\sim \B^n}\left[\td{\rho^{x}}{\CZ_{m} \cdot \rho^x \cdot \CZ_{m}}^2\right] \\
    &\le \E_{x\sim \B^n}[\eps_x] \\
    &\le  \eps/8.
\end{align}
The circuit $C'''$ (without gate $g$ replaced with identity) has only one layer of \CZ~gates, and its output depends on only a single input qubit, i.e., it computes a dictator function. Such a circuit has zero Fourier weight above level $1$, let alone $k$. Thus, via \Cref{lemma:closeness_implies_closeness_in_tails}, our circuit $C''$ has at most $\eps/4$ Fourier weight above level $k$.
\begin{restatable}{lemma}{CloseTails}\emph{[Closeness in $\ell_2$ implies closeness in Fourier tails]}
\label{lemma:closeness_implies_closeness_in_tails}
Let $f, g: \B^n \to \R$, with $\|f\|_2, \|g\|_2\le 1$ and let $k \in \N$. Then, $|\W^{\ge k}[f]-\W^{\ge k}[g]| \le 2 \cdot \|f-g\|_2^2$
\end{restatable}

\paragraph{Case 2: $g$ is active with non-neglible probability.} Otherwise, $\bra{1^m}\rho \ket{1^m} \ge \eps/8$. Using the fact that $\rho = \rho_0 \otimes \rho_{1} \otimes \dots \otimes \rho_n$,
\begin{align}
	\bra{1^m}\rho \ket{1^m} = \prod_{i=0}^{n} \bra{1^{|Q_i|}}\rho_i \ket{1^{|Q_i|}}  \ge \eps/32\;.
\end{align}
Denoting $\delta_i = 1-\bra{1^{|Q_i|}}\rho_i \ket{1^{|Q_i|}}$, this implies that
\begin{align} \label{eqn:lnsum}
    \sum_{i=0}^{n} \delta_i = \sum_{i=0}^{n} \left(1-\bra{1^{|Q_i|}}\rho_i \ket{1^{|Q_i|}}\right) \le \sum_{i=0}^{n}\ln(1/\bra{1^{|Q_i|}}\rho_i \ket{1^{|Q_i|}}) \le \ln(32/\eps).
\end{align}
Therefore, for most $i\in [n]$, $\bra{1^{|Q_i|}}\rho_i \ket{1^{|Q_i|}}$ is rather close to $1$. 
Also note that both 
\begin{align}
    \bra{1^{|Q_i|}}\rho_{i}^0 \ket{1^{|Q_i|}} \geq 1-2\delta_i \quad \text{ and } \quad \bra{1^{|Q_i|}}\rho_{i}^1 \ket{1^{|Q_i|}} \geq 1-2\delta_i,
\end{align}
so \Cref{lemma:small-influence-in-direction-i} implies that $\td{\rho_{i}^0}{\rho_i^1}\leq8\sqrt{\delta_i}$.

\begin{restatable}{lemma}{SmallInfInDirectionI}
\label{lemma:small-influence-in-direction-i}
Let $0 \le \delta\le 1$.
Suppose $\rho'$ and $\rho''$ are two mixed states on $d$ qubits, such that $\bra{1^d} \rho' \ket{1^d} \geq 1-\delta$ and $\bra{1^d} \rho'' \ket{1^d} \geq 1-\delta$.
Then, $\td{\rho'}{\rho''} \le  2\delta + 2\sqrt{\delta}$.
\end{restatable}

Next, we prove a Fourier tail bound for this case.
We observe that 
\[
\rho_i^{x_i} = \rho_i + (-1)^{x_i} \cdot D_i\qquad\text{where}\qquad D_i = \frac{\rho_i^1 - \rho_i^{0}}{2}\]
is the \emph{derivative} of $\rho_i^{x_i}$ according to $x_i$.
This will allow us to write a nice ``matrix Fourier decomposition'' for $\rho$ and a bound on the Fourier spectrum of $f_{C''}$.
With this notation, 
\begin{align*}
\rho &= \rho_0 \otimes \bigg(\rho_1 + (-1)^{x_1} \cdot D_1\bigg) \otimes \dots \otimes \bigg(\rho_n + (-1)^{x_n} \cdot D_n\bigg)	
\\
&= \sum_{R \subseteq [n]} (-1)^{\sum_{i\in R} x_i} \cdot \rho_0 \otimes \bigotimes_{i=1}^{n} [(\rho_i)^{\one_{i\not\in R}} (D_i)^{\one_{i\in R}}]
\end{align*}
So, we can think of the coefficient of $(-1)^{\sum_{i\in R} x_i}$ as the $R$-Fourier coefficient of $\rho$. 
That is, \[\widehat{\rho}(R) := 
\rho_0 \otimes \bigotimes_{i=1}^{n} [(\rho_i)^{\one_{i\not\in R}} (D_i)^{\one_{i\in R}}].\]
Note that the $R$-Fourier coefficient in this case is a matrix (instead of a scalar in the standard case).
Since all density matrices have trace-norm 1, $\|\rho_i\|_1 =1$ for all $i \in [n]$. Therefore, since the trace is multiplicative with respect to tensor products,
\begin{align} \label{eqn:derivs}
    \|\widehat{\rho}(R)\|_1 = \prod_{i\in R} \|D_i\|_1.
\end{align}
Furthermore, observe that, by definition of trace distance, $\|D_i\|_1 = \td{\rho_i^1}{\rho_i^{0}}$ .

Now, $f_{C''}(x)$ is an application of a unitary $U$ and a projection $\Pi$ on $\rho^x$. As such, 
\begin{align*}
f_{C''}(x) = \Tr(\Pi U \rho^x U^{\dagger} )	 &=
\Tr\left(\Pi U \sum_{R\subseteq [n]} (-1)^{\sum_{i\in R}x_i} \widehat{\rho}(R) U^{\dagger}\right)
=\sum_{R\subseteq [n]} (-1)^{\sum_{i\in R}x_i} \Tr(\Pi U \widehat{\rho}(R) U^{\dagger} ).
\end{align*}
From hereon, we will denote $\widehat{f_{C''}}(R) =  \Tr(\Pi U \widehat{\rho}(R) U^{\dagger})$. By properties of the trace norm, $|\widehat{f_{C''}}(R)| \le \|\widehat{\rho}(R)\|_1$. Thus, by \Cref{eqn:derivs},
\begin{align}
    |\widehat{f_{C''}}(R)|\le \|\widehat{\rho}(R)\|_1 = \prod_{i\in R} \|D_i\|_1 = \prod_{i\in R}\td{\rho_i^{0}}{\rho_i^1} \le \prod_{i\in R} 8\sqrt{\delta_i}.
\end{align}
Therefore, for any $\ell \in \N$,
\begin{align*}\W^{\ell}[f_{C''}] = \sum_{\substack{R\subseteq [n] :\\|R|=\ell}} |\widehat{f}_{C''}(R)|^2 
&\le \sum_{\substack{R\subseteq [n] :\\|R|=\ell}} ~ \prod_{i\in R} (64 \delta_i)\\
&\le \frac{(\sum_{i=1}^n (64\delta_i))^\ell}{\ell!}\tag{Maclaurin's Inequality}\\
&\le \frac{(64\ln(32/\eps))^\ell}{\ell!} 
\le \left(\frac{e\cdot 64\ln(32/\eps)}{\ell}\right)^\ell\;.
\end{align*}
To obtain $\W^{\ge k}[f_{C''}] \le \eps/4$, it suffices to pick $k = 2e \cdot 64\ln(32/\eps) = \Theta(\ln(1/\eps))$.
\end{proof}

\subsection{Influence and Parity Correlation Bounds}

Before proving the lemmata and claims used in the proof of \Cref{thm:influence}, we prove three useful corollaries on the Fourier tails of $f_C$: (i) a bound on the tail $\W^{\ge k}[f_C]$ for $k$ in terms of $k$ and $n$, (ii) a bound on the total influence of $f_C$, and (iii) a bound on the correlation of $f_C$ with the \PARITY function.
All follow easily from \Cref{thm:influence}.

\begin{corollary} \label{thm:weigt_reex}
Let $c>1$ be the constant in \Cref{thm:influence}.
 For level $k\leq 2c\log^2 n$, we have $\W^{\ge k}[f_C] \le  \exp(-\Omega(k/\log n))$, and 
 for level $k\geq 2c\log^2 n$, we have 
 $\W^{\geq k}[f_C] \le \exp(-\Omega(\sqrt{k}))$.
 \end{corollary}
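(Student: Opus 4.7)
The plan is to invert the relation $k(\eps) = c\log(1/\eps)\log(n/\eps)$ from \Cref{thm:influence}: for each target level $k$ I want to exhibit an $\eps$ such that $k(\eps)\leq k$, so that \Cref{thm:influence} applied at that $\eps$ gives $\W^{\ge k}[f_C]\leq \W^{\ge k(\eps)}[f_C]\leq \eps$. The key qualitative observation is that the function $\eps\mapsto k(\eps)$ has two regimes separated by the crossover $\eps=1/n$: if $\eps\geq 1/n$, then $\log(n/\eps)=\Theta(\log n)$, so $k(\eps)=\Theta(\log(1/\eps)\log n)$; if $\eps\leq 1/n$, then $\log(n/\eps)=\Theta(\log(1/\eps))$, so $k(\eps)=\Theta(\log^2(1/\eps))$. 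The threshold value $k=2c\log^2 n$ that appears in the statement is exactly $k(1/n)$ up to the constant in the upper bound $\log(n/\eps)\leq 2\log n$.

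For the first regime $k\leq 2c\log^2 n$, I would set $\eps=\exp(-k/(2c\log n))$. This choice satisfies $\eps\geq 1/n$ precisely when $k\leq 2c\log^2 n$, which lets me bound $\log(n/\eps)\leq 2\log n$. Then
\[
k(\eps) = c\log(1/\eps)\log(n/\eps) \leq c\cdot \frac{k}{2c\log n}\cdot 2\log n = k,
\]
so \Cref{thm:influence} yields $\W^{\ge k}[f_C]\leq \eps = \exp(-\Omega(k/\log n))$.

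For the second regime $k\geq 2c\log^2 n$, I would set $\eps=\exp(-\sqrt{k/(2c)})$. The condition $k\geq 2c\log^2 n$ is equivalent to $\eps\leq 1/n$, which gives $\log(n/\eps)\leq 2\log(1/\eps)$. Then
\[
k(\eps)\leq c\cdot \log(1/\eps)\cdot 2\log(1/\eps) = 2c\cdot \frac{k}{2c} = k,
\]
and \Cref{thm:influence} yields $\W^{\ge k}[f_C]\leq \eps = \exp(-\Omega(\sqrt{k}))$.

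I do not expect a real obstacle here; the corollary is a straightforward inversion of the explicit bound from \Cref{thm:influence}. The only care needed is to verify in each regime that the chosen $\eps$ sits on the correct side of $1/n$, which is what justifies the constant-factor inequalities $\log(n/\eps)\leq 2\log n$ and $\log(n/\eps)\leq 2\log(1/\eps)$ used above, and thereby that the two bounds correctly glue at the crossover $k=2c\log^2 n$.
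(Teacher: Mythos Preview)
Your proposal is correct and essentially identical to the paper's proof: both split into the same two regimes at $k=2c\log^2 n$, choose $\eps$ with $\log(1/\eps)=\Theta(k/\log n)$ in the first regime and $\log(1/\eps)=\Theta(\sqrt{k})$ in the second, and verify $k(\eps)\le k$ via the same inequalities $\log(n/\eps)\le 2\log n$ (when $\eps\ge 1/n$) and $\log(n/\eps)\le 2\log(1/\eps)$ (when $\eps\le 1/n$). The paper phrases the case split as $s\le \log n$ versus $s\ge \log n$ for $s=\log(1/\eps)$, which is exactly your $\eps\ge 1/n$ versus $\eps\le 1/n$.
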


 \begin{proof}
 \Cref{thm:influence} implies that for any $s\ge 0$, if we want to get a Fourier tail bound of $\W^{\ge k}[f]\le  2^{-s}$ it suffices to take $k = c\log(1/2^{-s}) \log(n/2^{-s}) = cs(s +\log n)$.
 We consider two cases separately depending on whether $s \le \log n$ or not.
 Note that $s\le \log n$ if and only if $k \le 2c\log^2 n$, so we divide to cases based on this condition.
 
    \paragraph{The Case $k \le 2c \log^2 n$.} Let $s = k/(2c\log n) \le \log n$ and take $\eps = 2^{-s}$.
    We want to show that $\W^{\ge k}[f_C] \le \eps$.
    Indeed, we know that $\W^{\ge \kappa(\eps)}[f_C]\le \eps$ 
    for 
    \[\kappa(\eps) = c\log(1/\eps)(\log (n/\eps)) = cs(\log n + s) \le 2cs \log n \le k.\]
    By monotonicity of Fourier tails we get
    $\W^{\ge k}[f_C] \le W^{\ge\kappa(\eps)}[f_C] \le \eps = 2^{-k/2c\log n}$.
    
    \paragraph{The Case $k \ge 2c \log^2 n$.} 
    Let $s = \sqrt{k/2c} \ge \log n$ and take $\eps = 2^{-s}$.
    We want to show that $\W^{\ge k}[f_C] \le \eps$.
    Indeed, we know that $\W^{\ge \kappa(\eps)}[f_C]\le \eps$ for $\kappa(\eps) = cs(\log n + s) \le 2cs^2 \le k$.
    Thus, we get $\W^{\ge k}[f_C] \le W^{\ge\kappa(\eps)}[f_C] \le 2^{-\sqrt{k/2c}}$
 \end{proof}
 
\begin{corollary}
    $\Inf[f_C] \le O(\log n)$. 
\end{corollary}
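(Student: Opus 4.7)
The plan is to derive the total influence bound directly from the Fourier tail bound of \Cref{thm:weigt_reex} via the standard identity
\begin{align}
    \Inf[f_C] \;=\; \sum_{k=1}^{n} k \cdot \W^{k}[f_C] \;=\; \sum_{k=1}^{n} \W^{\ge k}[f_C],
\end{align}
which follows by swapping the order of summation in $\sum_{S\subseteq[n]} \hat{f_C}(S)^2 \cdot |S|$. So it suffices to show that the sum of tails $\sum_{k\ge 1} \W^{\ge k}[f_C]$ is $O(\log n)$.

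Next, I would split the summation range at the threshold $K := 2c\log^2 n$ coming from \Cref{thm:weigt_reex}, and bound each piece using the corresponding regime of that corollary. For the low-tail range $1 \le k \le K$, the bound $\W^{\ge k}[f_C] \le \exp(-\Omega(k/\log n))$ gives a geometric series with ratio $\exp(-\Omega(1/\log n))$, whose sum is $O(\log n)$ (since $1/(1-e^{-1/L}) = \Theta(L)$ for large $L$). For the high-tail range $k > K$, the bound $\W^{\ge k}[f_C] \le \exp(-\Omega(\sqrt{k}))$ gives a rapidly convergent series; substituting $u = \sqrt{k}$ and comparing with $\int_0^\infty u\, e^{-\Omega(u)}\, du = O(1)$ shows this tail contributes only an additive constant. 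Combining the two pieces yields $\Inf[f_C] = O(\log n) + O(1) = O(\log n)$.

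There is essentially no conceptual obstacle here; the entire content of the statement is packaged inside \Cref{thm:influence} and its restatement \Cref{thm:weigt_reex}. The only minor technical point is making sure the ``low-tail'' regime (which dominates) is summed correctly, i.e., that the geometric sum with decay rate $\Omega(1/\log n)$ contributes $O(\log n)$ rather than a larger polynomial factor, and that the high-tail regime genuinely contributes $O(1)$ so it does not worsen the bound.
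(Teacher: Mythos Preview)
Your proposal is correct and follows essentially the same approach as the paper's proof: rewrite $\Inf[f_C]$ as $\sum_{k\ge 1}\W^{\ge k}[f_C]$, split the sum at the threshold $2c\log^2 n$ from \Cref{thm:weigt_reex}, and bound the low-tail regime by a geometric series summing to $O(\log n)$ and the high-tail regime by $O(1)$.
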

\begin{proof}
    By the definition of total influence,
    \begin{align}
        \Inf[f_C] = \sum_{S \subseteq [n]}|S|\cdot \widehat{f}_C(S)^2 = \sum_{k>1} k \cdot \W^{=k}[f_C]=\sum_{k>1} \W^{\geq k}[f_C].
    \end{align}
    \Cref{thm:weigt_reex} shows that
    \begin{align}
        \W^{\geq k}[f_C] \leq \begin{cases}
            \exp(-\Omega(k/\log n)), & \text{if } k \leq 2c \log^2 n\\
            \exp(-\Omega(\sqrt{k})), & \text{if } k \geq 2c \log^2 n
        \end{cases}.
    \end{align}
    Thus, for some constants $\alpha$ and $\beta$, the total influence can be bounded as 
    \begin{align*}
        \Inf[f_C] \leq  \sum_{k\le 2c\log^2 n} \exp(-\alpha k/\log n) + \sum_{k> 2c\log^2 n} \exp(-\beta\sqrt{k}) = O(\log n).
    \end{align*}
\end{proof}
\begin{corollary} \label{thm:corr_proof}
    $f_C$ is weakly correlated with \PARITY, i.e. $\langle f_C, \chi_{[n]} \rangle \leq \exp(-\Omega(\sqrt{n}))$.
\end{corollary}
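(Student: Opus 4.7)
The plan is to observe that the correlation $\langle f_C, \chi_{[n]} \rangle$ is precisely the single Fourier coefficient $\widehat{f_C}([n])$ at the top level, and then apply the Fourier tail bound from \Cref{thm:weigt_reex} with $k = n$.

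More concretely, by the definition of the Fourier expansion,
\begin{align*}
\langle f_C, \chi_{[n]}\rangle \;=\; \E_{x\sim \B^n}[f_C(x)\cdot \chi_{[n]}(x)] \;=\; \widehat{f_C}([n]).
\end{align*}
Since $\widehat{f_C}([n])^2 \leq \W^{=n}[f_C] \leq \W^{\ge n}[f_C]$, it suffices to bound the Fourier tail at level $n$. For $n$ large enough that $n \ge 2c\log^2 n$ (which holds for all but finitely many $n$), the second case of \Cref{thm:weigt_reex} applies with $k=n$, yielding
\begin{align*}
\W^{\ge n}[f_C] \;\le\; \exp(-\Omega(\sqrt{n})).
\end{align*}
Taking square roots gives $|\widehat{f_C}([n])| \le \exp(-\Omega(\sqrt{n}))$, which is exactly the claimed correlation bound.

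The proof is essentially a one-line corollary of \Cref{thm:weigt_reex}, so there is no real obstacle; the only mild subtlety is ensuring that the regime $k\ge 2c\log^2 n$ of \Cref{thm:weigt_reex} is the one being invoked (so that we obtain the $\exp(-\Omega(\sqrt{k}))$ bound rather than the weaker $\exp(-\Omega(k/\log n))$ bound, though either would give a super-polynomially small correlation). For completeness, one could also handle small $n$ trivially, since the statement $\langle f_C, \chi_{[n]}\rangle \le \exp(-\Omega(\sqrt{n}))$ is a constant bound in that regime and holds automatically as $|\langle f_C, \chi_{[n]}\rangle| \le \|f_C\|_\infty \le 1$.
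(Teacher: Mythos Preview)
Your proposal is correct and follows essentially the same approach as the paper: identify $\langle f_C,\chi_{[n]}\rangle = \widehat{f_C}([n])$, bound its square by $\W^{\ge n}[f_C]$, and invoke \Cref{thm:weigt_reex} in the regime $n \ge 2c\log^2 n$ to obtain the $\exp(-\Omega(\sqrt{n}))$ bound. Your write-up is in fact slightly more careful than the paper's, which writes $\widehat{f_C}([n]) = \sqrt{\W^{=n}[f_C]}$ without an absolute value.
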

\begin{proof}
    The correlation of $f_C$ with the \PARITY~function $\chi_{[n]}$ is 
    \begin{align}
        \langle f_C, \chi_{[n]} \rangle = \sum_{S \subseteq [n]} \widehat{f}_C (S) \cdot \widehat{\chi}_{[n]} (S) = \widehat{f}_C ([n]) = \sqrt{\W^{=n}[f_C]}.
    \end{align}
    Via \Cref{thm:weigt_reex}, $\W^{=n}[f_C] \leq \exp\left(-\Omega(\sqrt{n})\right)$  for $n$ large enough (as it satisfies $n\ge 2c\log^2 n$), which implies the desired result. 
\end{proof}
 
subsection{Proofs of Claims and Lemmata} \label{sec:inf_proofs}

We now provide the proofs of the intermediate claims and lemmas used in the proof of \Cref{thm:influence}.
\CloseTails*
\begin{proof}[Proof of \Cref{lemma:closeness_implies_closeness_in_tails}]
\begin{align*}
\left|\W^{\ge k}[f]- \W^{\ge k}[g]\right| 
&=\left|\E_{x}[f^{\ge k} (x)^2 - g^{\ge k} (x)^2]\right|\\
&=
\left|\E_{x}[(f^{\ge k} (x) - g^{\ge k} (x))\cdot (f^{\ge k} (x) + g^{\ge k} (x))]\right|\\
&\le  
\|f^{\ge k} - g^{\ge k}\|_2 \cdot \|f^{\ge k} + g^{\ge k}\|_2 \tag{Cauchy-Schwarz}
\\
&\le \|f^{\ge k} - g^{\ge k}\|_2 \cdot (\|f^{\ge k}\|_2 + | g^{\ge k}\|_2)\tag{Triangle Inequality}\\
&\le \|f - g\|_2 \cdot (\|f\|_2 + | g\|_2)\\& \le 2\cdot \|f-g\|_2.\qedhere
\end{align*}
\end{proof}
We call a random restriction, $(J,z)$, ``random valued'' if $J\subseteq[n]$ is picked under some arbitrary distribution but given $J$, $z$ is sampled uniformly at random from $\B^{[n]\setminus J}$.

\RemoveLOne* 
\begin{proof}[Proof of \Cref{remove_large_layer_1_gates}]
Denote by $m$ the number of qubits on which the \CZ gate $g$ depends on.
Let $\ket{\psi^{x_S}_g}$ be the pure state entering the gate $g$, as a function of $x_{S}$. Let $\ket{\psi^{x_{\overline{S}}}_{rest}}$ be the rest of the state as a function of $x_{\overline{S}}$. So, for any $x\in \B^n$ we have that the state entering the layer $1$ gates is 
$\ket{\psi^x} = \ket{\psi^{x_S}_g} \otimes \ket{\psi^{x_{\overline{S}}}_{rest}}$.
Removing the \CZ~gate is the same as considering the behavior of the circuit on the state 
\[
\CZ_m \ket{\psi^x} 
= 
(\CZ_m \ket{\psi^{x_S}_g}) 
\otimes 
\ket{\psi^{x_{\overline{S}}}_{rest}}\;.
\]
So to show that the function $f_C$ associated with the original circuit $C$ and the function $f_{C'}$ associated with the circuit $C'$, where $g$ is replaced with identity, are close in $\ell_2$-distance it suffices to show that the states are $\CZ_m \ket{\psi^x}$ and $\ket{\psi^x}$ are close for most $x$.
\begin{align*}
\| f_{C} - f_{C'} \|_2^2 &= \E_{x\sim \B^n}[(f_C(x) - f_{C'}(x))^2] \\
&\le \E_{x\sim \B^n}[\td{\ket{\psi^x}}{\CZ_m \ket{\psi^x}}^2]\\
&= \E_{x\sim \B^n}[\td{\ket{\psi_{g}^{x_S}}}{ \CZ_m \ket{\psi_{g}^{x_S}}}^2]\\
&= \E_{x\sim \B^n}[1 - |\braket{\psi_{g}^{x_S} | \CZ_m |\psi_{g}^{x_S}}|^2]\\
&= \E_{x\sim \B^n}[1 - (1-2|\braket{\psi_{g}^{x_S} |0^m}|^2)^2]\\
&= \E_{x\sim \B^n}[4|\braket{\psi_{g}^{x_S} |0^m}|^2 - 4|\braket{\psi_{g}^{x_S} |0^m}|^4]\\
&\le 4\cdot \E_{x\sim \B^n}[|\braket{\psi_{g}^{x_S}|0^m}|^2]
\end{align*}
Since $\ket{\psi_g^{x_S}}$ is separable we can write it as $\ket{\psi_g^{x_s}} = \ket{\psi_g^{(0)}} \otimes \bigotimes_{i \in S} \ket{\psi_i^{x_i}}$ so that 
\[\E_{x\sim \B^n}[|\braket{\psi_{g}^{x_S}|0^m}|^2]
 = |\braket{\psi_g^{0}|\vec{0}}|^2 \cdot \prod_{i \in S} \E_x[|\braket{\psi_i^{x_i}|0}|^2]\]
To finish the proof, we claim that for any $i \in S$, $\E_x[\braket{\psi_i^{x_i}|0}|^2]=1/2$.
Indeed, this is the average of $|\braket{\psi_i^{0}|0}|^2$ and $|\braket{\psi_i^{1}|0}|^2$, and the vectors $\ket{\psi_i^{0}}, \ket{\psi_i^{1}}$ form an orthogonal basis over $\C^{2}$, so the average inner product squared with any fixed vector will be $1/2$.

The claim about Fourier tails follows from 
\Cref{lemma:closeness_implies_closeness_in_tails}.
\end{proof}

\RandRestrictFourier*
\begin{proof}[Proof of \Cref{claim:Fourier_weights_under_random_restriction}]
By the behavior of Fourier weight under random-valued restrictions, i.e., by \Cref{lemma:random_valued_restriction}, we have
\begin{align*}
	\E_{J,z}
	[\W^{\ge k}[f|_{J,z}]]
	&= \sum_{R \subseteq [n]} \widehat{f}(R)^2 \cdot \Pr[|R \cap J|\ge k]\ge \sum_{\substack{R \subseteq [n]: \\|R|\ge 4kb}} \widehat{f}(R)^2 \cdot \Pr[|R \cap J|\ge k]
\end{align*}
Thus, it suffices to prove that for any set $R$ of size at least $4kb$, the probability $\Pr[|R \cap J|\ge k] \ge 1/2$.
Partition $R$ according to the blocks $S_0, S_1, \ldots, S_\ell$, by taking $R_0 = R\cap S_0, \ldots, R_{\ell} = R\cap S_\ell$. For each $i\in \{0,1, \ldots, \ell\}$ the probability that $|R_i\cap J| =1$ is at least $|R_i|/b$, and these events are independent. In expectation we have $\E[\sum_{i=1}^{\ell} |R_i \cap J|] = \E[|R\cap J|] \ge \frac{|R|}{b} \ge 4k$. By Chernoff bound, we get that the probability that $\sum_{i=1}^{\ell} |R_i \cap J| \ge k$ is at least $1-\exp(-(4k\cdot (3/4)^2)/2)) \ge 1-\exp(-k) \ge 1/2$ as required to finish the proof.
\end{proof}

We move on to prove \Cref{CZ-meaningless}.  Before doing so, we will need the following additional lemma.
\begin{lemma}\label{lemma_rho_rho_squared}
    Let  $\rho$ be a quantum mixed state on finite domain $X$.
    Then for each $i \in X$, $\bra{i}\rho \ket{i} \ge \bra{i}\rho^2\ket{i}$.
\end{lemma}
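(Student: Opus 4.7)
The plan is to use the spectral decomposition of $\rho$ and the fact that, since $\rho$ is a density matrix, all of its eigenvalues lie in $[0,1]$. Concretely, I would write $\rho = \sum_j \lambda_j \kb{v_j}$ where $\{\ket{v_j}\}$ is an orthonormal eigenbasis and the eigenvalues satisfy $\lambda_j \in [0,1]$ (positive semidefiniteness gives $\lambda_j \ge 0$, and trace $1$ together with positivity gives $\lambda_j \le 1$). Then $\rho^2 = \sum_j \lambda_j^2 \kb{v_j}$.

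From here the computation is immediate:
\begin{align*}
\bra{i}\rho\ket{i} - \bra{i}\rho^2\ket{i} = \sum_j (\lambda_j - \lambda_j^2)\, |\braket{i|v_j}|^2 = \sum_j \lambda_j(1-\lambda_j)\, |\braket{i|v_j}|^2 \ge 0,
\end{align*}
since each term is a product of nonnegative quantities. This gives exactly the desired inequality $\bra{i}\rho\ket{i} \ge \bra{i}\rho^2\ket{i}$.

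There is no real obstacle here; the only subtlety is justifying $\lambda_j \le 1$, which I would handle by a one-line appeal to the fact that $\rho$ is a positive semidefinite operator of trace $1$, so each eigenvalue is a nonnegative number bounded by the sum of all eigenvalues. An even more abstract alternative is to note that $\rho - \rho^2 = \rho^{1/2}(I-\rho)\rho^{1/2}$ is positive semidefinite (as $I - \rho \succeq 0$), and then conclude that its diagonal entries in any basis are nonnegative, yielding $\bra{i}(\rho - \rho^2)\ket{i} \ge 0$; but the spectral-decomposition proof is the cleanest and most self-contained.
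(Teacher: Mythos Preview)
Your proof is correct and is essentially the same as the paper's: both use the spectral decomposition of $\rho$, expand $\bra{i}\rho\ket{i}$ and $\bra{i}\rho^2\ket{i}$ in the eigenbasis, and conclude from $\lambda_j \ge \lambda_j^2$ for $\lambda_j \in [0,1]$. The only cosmetic difference is that the paper first expands $\ket{i}$ in the eigenbasis rather than writing $\rho$ and $\rho^2$ directly as sums of rank-one projectors.
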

\begin{proof}[Proof of \Cref{lemma_rho_rho_squared}]
    We will denote the eigenvectors of $\rho$ as $\{\ket{\psi_j}\}_{j\in X}$, with corresponding eigenvalue $\{\lambda_j\}_{j\in X}$ satisfying $\lambda_j\in [0,1]$. We can decompose the standard basis vectors in terms of these eigenvalues as
    \begin{align}
        \ket{i} = \sum_{j\in X} \braket{\psi_j|i} \cdot \ket{\psi_j} = \sum_{j\in X} \alpha_{i,j} \cdot \ket{\psi_j},
    \end{align}
    which we can use to decompose $\bra{i}\rho \ket{i}$ as
    \begin{align} \label{eqn:relate}
        \bra{i}\rho \ket{i} &= \left( \sum_{k\in X} \alpha_{i,k}^* \cdot \bra{\psi_k} \right) \rho \left( \sum_{j\in X} \alpha_{i,j} \cdot \ket{\psi_j} \right) = \sum_{k,j\in X} \alpha_{i,k}^*\alpha_{i,j}\bra{\psi_k} \rho\ket{\psi_j} \\
        &= \sum_{k,j\in X} \alpha_{i,k}^*\alpha_{i,j} \lambda_j \braket{\psi_k|\psi_j} = \sum_{j\in X} |\alpha_{i,j}|^2 \lambda_j.
    \end{align}
    We do the same for $\rho^2$ which has the same eigenvectors as $\rho$, but eigenvalues $\{\lambda_j^2\}_{j\in X}$ to get 
    \begin{align}
        \bra{i}\rho^2 \ket{i}=\sum_{j\in X} |\alpha_{i,j}|^2 \lambda_j^2.
    \end{align}
	Overall, this establishes that
	\[
	\bra{i}\rho \ket{i} = \sum_{j\in X} |\alpha_{i,j}|^2 \lambda_j \ge \sum_{j\in X} |\alpha_{i,j}|^2 \lambda_j^2 = \bra{i}\rho^2 \ket{i}.\qedhere
	\]
\end{proof}
\noindent With this, we can now prove \Cref{CZ-meaningless}.
\CZmeaningless*
\begin{proof}[Proof of \Cref{CZ-meaningless}]
Let $\rho' = \CZ_m \cdot \rho \cdot \CZ_m$.
Then $\bra{i}\rho'\ket{j} = \bra{i}\rho\ket{j}$ if both $i, j \neq 1^m$ or both are equal to $1^m$, and $\bra{i}\rho'\ket{j} = -\bra{i}\rho\ket{j}$ otherwise.
We get that $\td{\rho}{\CZ_m \cdot \rho \cdot \CZ_m} = \frac{1}{2} \|A\|_1$ where $A = \rho - \rho'$ and $\|A\|_1$ is the trace norm of $A$.
Observe that $A$ is non-zero only on the last row and the last column, and that it diagonal is all zeros. Thus, it is a Hermitian matrix with rank $2$ and trace $0$ that has two real non-zero eigenvalues that sum up to $0$, which we  denote by $\lambda$ and $-\lambda$.
Looking at $A^{\dagger} A = A^2$ we see that it is a block matrix composed of a $(2^{m}-1)\times(2^{m}-1)$ block and a $1\times 1$ block. The entry in the $1\times1$ block is one of the eigenvalues of $A^2$, so it equals $\lambda^2$ and also (by definition) equals $\sum_{i\in \{0,1\}^m} |\bra{1^m}A \ket{i}|^2$.
As the trace norm of $A$ is $2|\lambda|$ we get 
\begin{align*}
\|A\|_1 = 2|\lambda| = 2 \cdot \sqrt{\sum_{ \{0,1\}^m\setminus\{1^m\}} |\bra{1^m} A \ket{i}|^2} 
	&=  2 \cdot \sqrt{\sum_{i\in \{0,1\}^m\setminus\{1^m\}} |\bra{1^m}(2\rho) \ket{i}|^2} \\
    &\le 4 \cdot \sqrt{\sum_{i\in \{0,1\}^m} |\bra{1^m}\rho \ket{i}|^2}\\
	&= 4 \cdot \sqrt{|\bra{1^m}\rho^2 \ket{1^m}|^2} \\
	&\le 4 \cdot \sqrt{\bra{1^m}\rho \ket{1^m}}\tag{\Cref{lemma_rho_rho_squared}}\\
	&= 4\sqrt{\delta}.
\end{align*}
\end{proof}

Next, we recall \Cref{lemma:small-influence-in-direction-i} and prove it.
\SmallInfInDirectionI*
\begin{proof}[Proof of \Cref{lemma:small-influence-in-direction-i}]
We can assume $\delta < 1/3$ without loss of generality, as $\td{\rho'}{\rho''} \leq 1 \le 2\delta + 2\sqrt{\delta}$ otherwise.
By the triangle inequality (and symmetry), we have that 
\begin{align}
    \td{\rho'}{\rho''} \leq \td{\rho'}{ \ketbra*{1^d}} + \td{ \ketbra*{1^d}}{\rho''} = \td{\rho'}{ \ketbra*{1^d}} + \td{\rho''}{ \ketbra*{1^d}}.
\end{align}
Therefore, it suffices to show that $\bra{1^d} \rho \ket{1^d} \geq 1-\delta$ implies $\td{\rho}{ \ketbra*{1^d}} \le \delta + \sqrt{\delta}$, for any mixed state $\rho$.
	
	Let $\rho = \sum_{i} \lambda_i\ketbra{\psi_i}$, where $\sum_i \lambda_i =1$.
	We express $\ket{1^d} = \sum_{j}\alpha_{j} \ket{\psi_j}$ as a linear combination of the eigenvectors of $\rho'$.
	Then, as in \Cref{eqn:relate},
	\[1-\delta \le \bra{1^d} \rho \ket{1^d}  = \sum_{j} |\alpha_{j}|^2 \lambda_j\le \max_{j} |\alpha_{j}|^2 \cdot \sum_{j}\lambda_j =  \max_{j} |\alpha_{j}|^2, 
	\]
	so there exists a $j$ with $|\alpha_{j}|^2 \ge 1-\delta$, and since $\delta<1/3$, $j$ is unique.
	This means that \[|\braket{\psi_j|1^d}|^2 = |\alpha_{j}|^2 \ge 1-\delta.\]
Similarly,	
\[
1-\delta \le \bra{1^d} \rho \ket{1^d}  
= \sum_{k} |\alpha_{k}|^2 \lambda_jk \le \max_{k} \lambda_k \cdot \sum_{k} |\alpha_{k}|^2=  \max_{k} \lambda_k\;,
\]
so there exists a $k$ with $\lambda_k \ge 1-\delta$, and 
since $\delta<1/3$, $k$ is unique.

Furthermore, we will now show that it must be the case that 
\begin{align}\label{eq:argmaxs_are_same}
    j={\arg\max}_i |\alpha_{i}|^2= {\arg \max}_i \lambda_i=k,
\end{align}
which implies that there is a unique $j$ such that 
\begin{align*}
    1-\delta \le |\alpha_{j}|^2 \quad \text{and}\quad 1-\delta \le  \lambda_j.
\end{align*}
We will prove this by contradiction. For contradiction, assume that \Cref{eq:argmaxs_are_same} is false so the above $j$ and $k$ are different.
Since $\sum_{i}\lambda_i = 1$ and $\lambda_k \ge 1-\delta$ we get that  $\lambda_j \le \delta$.
Using the assumption $\delta\le 1/3$ this implies that
\[
\bra{1^d} \rho \ket{1^d} = \sum_{i}|\alpha_{i}|^2 \lambda_i  \le |\alpha_{j}|^2\cdot \delta  +\sum_{i:i\neq j}|\alpha_{i}|^2  = \left(\sum_{i}|\alpha_{i}|^2\right) - (1-\delta) |\alpha_{j}|^2 \leq 1-(1-\delta)^2 \le 5/9.
\]
However, this contradicts the assumption that $\bra{1^d} \rho \ket{1^d} \geq 1-\delta > 2/3$.

Overall, we have shown that there exists an eigenvector of $\rho$, denoted $\ket{\psi_j}$, such that, by the definition of trace distance for pure states,
\begin{align}
\td{\ketbra*{\psi_j}}{ \ketbra*{1^d}}=\sqrt{1-|\braket{\psi_j|1^d}|^2} \le  \sqrt{1-(1-\delta)} = \sqrt{\delta}
\end{align}
and, since the trace norm can be expressed as the sum of the absolute values of the eigenvalues, 
\begin{align}
\td{\rho}{\ketbra*{\psi_j}}=\frac{1}{2}\cdot \left(|1-\lambda_j| + \sum_{k\neq j}|\lambda_k|\right) \le \frac{1}{2}\cdot (\delta + \delta) = \delta 
\end{align}
By triangle inequality, this therefore implies that $\td{\rho}{\ketbra*{1^d}} \le \delta + \sqrt{\delta}$.	
\end{proof}

\section{Depth-2 Circuits Cannot Construct a Nekomata} \label{sec:generalized_neko}

We will now prove that a depth-$2$ \QACZ circuit cannot exactly synthesize a \emph{generalized $n$-nekomata} even with unlimited ancillae. We will refer to the definitions in \Cref{sec:nekodef}. 

The outline of our proof is as follows. Given a depth-$d$ circuit that output a $n$-qubit nekomata, we will construct a separable state on some subset of qubits, $\ket{\veta}_Q$, such that, inside the eigenspace of $\kb{\veta}_Q$, (1) the state is still a nekomata and (2) the final layer is simplified to a single-qubit layer. The main idea behind this is the same as the block diagonalization used in \Cref{lem:d2ac0small}. To complete our proof for $d= 2$, we show that a depth-$1$ \QAC~circuit cannot compute a state that looks like an $\Omega(1)$-qubit nekomata, even after post-selecting for such $\kb{\veta}_Q$ as in \Cref{def:nekopost} 

 Then, the two main components are given by the following lemmas.  
 \begin{restatable*}[Nekomata after one layer]{lemma}{nekotognsp}\label{lem:postlayer}
    Let $\ket{\psi}$ be a $n$-nekomata. Then, for any layer $L$ of separable reflection gates, the state $\ket{\varphi}$ given by $\ket{\varphi} = L \cdot \ket{\psi}$ is an \emph{$\lceil n/2 \rceil$-GNSP.}
 \end{restatable*}

\begin{restatable*}[No GNSP in depth-$1$]{lemma}{onelayergnsp}\label{lem:d1postlb}
Let $\ket{\psi}$ be a state constructed by a depth-$1$ \QACZ circuit $C$. Then, $\ket{\psi}$ cannot be a \emph{$n$-GNSP} for any $n > 2$.  
\end{restatable*}
Then, the bound from \Cref{thm:d2_lb} immediately follows as a consequence, 
\begin{corollary}\label{cor:nekolb}
Let $\ket{\psi}$ be a state constructed by a depth-$2$ \QACZ circuit $C$ starting from the $\ket{\0}$ state. Then, $\ket{\psi}$ cannot be a generalized $n$-nekomata for $n > 4$. 
\end{corollary}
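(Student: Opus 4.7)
The plan is to prove \Cref{cor:nekolb} by contradiction, directly combining \Cref{lem:postlayer} and \Cref{lem:d1postlb}. Suppose a depth-$2$ \QACZ circuit $C = L_2 \cdot L_1$ starting from $\ket{\0}$ produces a state $\ket{\psi} = L_2 L_1 \ket{\0}$ that is a generalized $n$-nekomata for some $n > 4$. The goal is to ``peel off'' the top layer $L_2$ and reduce to a statement about the depth-$1$ state $L_1 \ket{\0}$, to which the second lemma applies.

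The key observation, which makes this peeling possible, is that in the Rosenthal normal form each layer $L_j$ is a product of reflections of the form $(I - 2\kb{\vth})$ about separable states. Within a layer these gates act on pairwise disjoint qubits, hence commute, and each is an involution; therefore $L_2^2 = I$, and
\begin{align*}
L_1 \ket{\0} \;=\; L_2 \bigl( L_2 L_1 \ket{\0} \bigr) \;=\; L_2 \ket{\psi}.
\end{align*}
I would then invoke \Cref{lem:postlayer} with $\ket{\psi}$ playing the role of the $n$-nekomata and $L_2$ playing the role of the layer of separable reflection gates; this gives that $L_2 \ket{\psi}$ is an $\lceil n/2 \rceil$-GNSP. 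Since $n > 4$ we have $\lceil n/2 \rceil \geq 3$, so $L_1 \ket{\0}$ is a $k$-GNSP with $k > 2$. However, $L_1 \ket{\0}$ is precisely the output of a depth-$1$ \QACZ circuit on $\ket{\0}$, which by \Cref{lem:d1postlb} cannot be a $k$-GNSP for any $k > 2$. Contradiction.

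Because the structural content already lives inside the two supporting lemmas, there is no serious obstacle here; the proof is a short combinatorial gluing. The only points to be careful about are (i) that $L_2$ is indeed an involution, which is immediate from the normal form, (ii) that a generalized $n$-nekomata qualifies as an $n$-GNSP so that \Cref{lem:postlayer} applies (take $Q = \emptyset$ in \Cref{def:nekopost}), and (iii) that any trailing or leading single-qubit unitaries in the standard \QACZ definition can be absorbed into the reflection directions $\ket{\vth}$ of $L_1, L_2$ (or into the ancilla preparation), so that the $\ket{\0}$-start hypothesis of \Cref{lem:d1postlb} applies without modification.
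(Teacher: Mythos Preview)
Your proposal is correct and follows essentially the same approach as the paper: assume for contradiction that $\ket{\psi}$ is a generalized $n$-nekomata with $n\ge 5$, use that the second layer $L_2$ is an involution (the paper phrases this as ``all reflection gates are Hermitian'') to write $L_1\ket{\0}=L_2\ket{\psi}$, apply \Cref{lem:postlayer} to conclude this is a $\lceil n/2\rceil\ge 3$-GNSP, and contradict \Cref{lem:d1postlb}. Your side remarks (i)--(iii) are all fine, though note that the hypothesis of \Cref{lem:postlayer} is already a (generalized) nekomata rather than a GNSP, so point (ii) is not strictly needed for that application.
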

\begin{proof}
Assuming for contradiction that $\ket{\psi}$ is a generalized $n$-nekomata for $n \geq 5$. Then, letting $\ket{\varphi}$ be the state after the first layer of $C$, $\ket{\psi}$ is given by $\ket{\psi} = L \cdot \ket{\varphi}$ for the second layer $L$. 
Since all reflection gates are Hermitian, $\ket{\varphi} = L \cdot \ket{\psi}$ and due to \Cref{lem:postlayer}, $\ket{\varphi}$ is an $3$-GNSP. This is a contradiction to \Cref{lem:d1postlb}.
\end{proof}

\subsection{Proofs of Lemmata} 
First, we show the following observation.
\begin{fact}\label{fact:nekoanc}
    Let $\ket{\psi}_{T,A}$ be a generalized $n$-nekomata on targets $t_1,t_2 \dots t_n$ and ancillae $A$. Then, for any unitary $U_A$ acting only on the ancillae, the state $U_A \cdot \ket{\psi}$ is still a generalized $n$-nekomata on targets $t_1,t_2 \dots t_n$.
\end{fact}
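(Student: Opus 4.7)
The plan is to unpack the definition of a generalized $n$-nekomata and observe that the ancilla-only unitary $U_A$ commutes with the tensor factorization between targets and ancillae, so it only relabels the ancillary witnesses $\ket{\gamma_0}_A$ and $\ket{\gamma_1}_A$ while leaving the target structure untouched.

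Concretely, I would start by writing $\ket{\psi}_{T,A}$ in the form guaranteed by Definition~\ref{def:neko}:
\begin{align*}
\ket{\psi}_{T,A} = \alpha \cdot \ket{\mu_1}_{t_1}\cdots \ket{\mu_n}_{t_n}\ket{\gamma_0}_A + \beta \cdot \ket{\mu^\perp_1}_{t_1}\cdots \ket{\mu^\perp_n}_{t_n}\ket{\gamma_1}_A,
\end{align*}
with $\alpha,\beta \neq 0$ and $\braket{\mu_i | \mu^\perp_i} = 0$ for each $i \in [n]$. Then I would apply $U_A = I_T \otimes U_A$ to both terms. Because $U_A$ is identity on every target qubit $t_i$, linearity gives
\begin{align*}
(I_T \otimes U_A)\ket{\psi}_{T,A} = \alpha \cdot \ket{\mu_1}_{t_1}\cdots \ket{\mu_n}_{t_n} \ket{\gamma_0'}_A + \beta \cdot \ket{\mu^\perp_1}_{t_1}\cdots \ket{\mu^\perp_n}_{t_n} \ket{\gamma_1'}_A,
\end{align*}
where $\ket{\gamma_b'}_A := U_A \ket{\gamma_b}_A$ for $b \in \{0,1\}$.

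The final step is to verify this is still a valid generalized $n$-nekomata on the same targets. The coefficients $\alpha,\beta$ are unchanged, hence still nonzero. The target vectors $\ket{\mu_i}, \ket{\mu_i^\perp}$ are unchanged, so the orthogonality conditions $\braket{\mu_i | \mu_i^\perp} = 0$ still hold. Finally, $\ket{\gamma_0'}_A$ and $\ket{\gamma_1'}_A$ are normalized states since $U_A$ is unitary, which is all that Definition~\ref{def:neko} requires of the ancillary branches. There is no real obstacle here; the fact is essentially a direct consequence of the definition together with unitarity of $U_A$, and the proof is just a one-line computation after writing out the nekomata form.
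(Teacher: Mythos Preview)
Your proof is correct and follows essentially the same approach as the paper: write $\ket{\psi}$ in its nekomata form, apply $U_A$ term-by-term, and observe that the result has the same target structure with new ancillary states $U_A\ket{\gamma_b}_A$. The paper's version is slightly terser, omitting the explicit verification that $\alpha,\beta$ and the orthogonality conditions are preserved, but the argument is identical.
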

\begin{proof}
Suppose $\ket{\psi}$ is given by, 
\begin{align}
 \ket{\psi} &=  \alpha \cdot \ket{\mu_1}_{t_1} \ket{\mu_2}_{t_2} \dots \ket{\mu_n}_{t_n} \ket{\gamma_0}_A + \beta \cdot \ket{\mu^\perp_1}_{t_1} \ket{\mu^\perp_2}_{t_2} \dots \ket{\mu^\perp_n}_{t_n} \ket{\gamma_1}_A
\end{align}
Then,
\begin{align}
   U_A \cdot \ket{\psi} &= \alpha \ket{\mu_1}_{t_1} \ket{\mu_2}_{t_2} \dots \ket{\mu_n}_{t_n} U_A\ket{\gamma_0}_A + \beta \ket{\mu_1^\perp}_{t_1} \ket{\mu_2^\perp}_{t_2} \dots \ket{\mu_n^\perp}_{t_n} U_A \ket{\gamma_1}_A 
\end{align}
which is also a generalized $n$-nekomata on the same targets. 
\end{proof}

Now we provide the proofs of the lemmas. First, we will prove a special case of \Cref{lem:postlayer} for a single gate. 
\begin{claim}\label{cl:postgate} 
Let $\ket{\psi}$ be a generalized $n$-nekomata on targets $T$ and $\ket{\varphi} = G(S) \cdot \ket{\psi}$, where $G(S) = (I - 2\kb{\vth}_S)$ is a reflection gate on a subset of qubits $S$ containing $k = \vlr{S \cap T}$ targets. 
Then, there exists a separable state $\ket{\eta}_{Q}$ on qubits $Q \subseteq S$, such that, $\kb{\veta}_Q \cdot \ket{\varphi}$ is a generalized $n-\lfloor k/2 \rfloor$-nekomata. 
\end{claim}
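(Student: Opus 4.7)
My plan is to prove the claim by case analysis on how the reflection $G(S) = I - 2\kb{\vth}_S$ interacts with the nekomata basis on the targets it touches.  Writing $S_T = S \cap T$, $S_A = S \cap A$, and $\ket{\th_{t_i}} = a_i \ket{\mu_i} + b_i \ket{\mu_i^\perp}$ for each $t_i \in S_T$, I will call $t_i$ \emph{aligned} if $a_i b_i = 0$ (so that $\ket{\th_{t_i}} \in \{\ket{\mu_i}, \ket{\mu_i^\perp}\}$) and \emph{non-aligned} otherwise.  The main tool throughout is the block-diagonalization of a reflection gate from \Cref{sec:pregates}, which gives $G(S) = (I - \kb{\vth}_{S_T}) \otimes I_{S_A} + \kb{\vth}_{S_T} \otimes (I - 2\kb{\vth})_{S_A}$, and also the analogous decomposition treating $S_A$ as the controls.

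The easy cases all reduce to $Q = \emptyset$.  If $k = 0$, \Cref{fact:nekoanc} gives the result directly.  If $k \ge 1$ but every $t_i \in S_T$ is aligned, then on each branch of the nekomata the restriction to $S_T$ is either exactly $\ket{\vth}_{S_T}$ or a product state orthogonal to $\ket{\vth}_{S_T}$; substituting into the block decomposition, $G(S)$ acts on that branch as the identity in the orthogonal case and as the smaller reflection $I - 2\kb{\vth}_{S_A}$ in the $\ket{\vth}_{S_T}$ case, so in every aligned sub-case $G(S)\ket{\psi}$ is still an $n$-nekomata on $T$.

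The main branch of the argument is when some $t_i \in S_T$ is non-aligned and $k \ge 2$.  Here I would take $Q = \{t_i\}$ with $\ket{\eta_{t_i}} = \ket{\th_{t_i}^\perp}$, so that $\braket{\eta_{t_i}|\th_{t_i}} = 0$ and hence the $\kb{\vth}_S$ part of $G(S)$ is annihilated under $\kb{\eta}_Q$.  Consequently $\kb{\eta}_Q G(S)\ket{\psi} = \kb{\eta}_Q \ket{\psi}$, and non-alignment of $t_i$ forces both $\braket{\th_{t_i}^\perp|\mu_i}$ and $\braket{\th_{t_i}^\perp|\mu_i^\perp}$ to be nonzero, so the projected state is a generalized $(n-1)$-nekomata on $T\setminus\{t_i\}$ (with $t_i$ absorbed into the ancilla in state $\ket{\th_{t_i}^\perp}$).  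Since $\lfloor k/2 \rfloor \ge 1$ for $k \ge 2$, an $(n-1)$-nekomata is in particular an $(n - \lfloor k/2\rfloor)$-nekomata, which finishes this case.

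The main obstacle is the case $k = 1$ with $t_1$ non-aligned, since now consuming $t_1$ would drop below the required $n$-nekomata.  I would split $\ket{\gamma_b} = \ket{\vth}_{S_A}\ket{\gamma_b^\parallel} + \ket{\gamma_b^\perp}$ for $b \in \{0,1\}$, with $\ket{\gamma_b^\perp}$ orthogonal to $\ket{\vth}_{S_A}$ on $S_A$.  When $\ket{\gamma_0^\parallel}$ and $\ket{\gamma_1^\parallel}$ are both nonzero, the separable post-selection $Q = S_A$, $\ket{\eta}_Q = \ket{\vth}_{S_A}$ reduces $G(S)$ via the $S_A$-control block decomposition to the single-qubit reflection $R_{t_1} = I - 2\kb{\th_{t_1}}$ on $t_1$, producing an $n$-nekomata on $T$ in the rotated basis $\{R_{t_1}\ket{\mu_1}, R_{t_1}\ket{\mu_1^\perp}\}$ on $t_1$.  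The delicate remaining sub-case is when one of $\ket{\gamma_b^\parallel}$ vanishes: then $G(S)$ is already trivial on that branch, and the plan is to transfer the target role of $t_1$ onto a qubit $q \in S_A$ along which $\ket{\gamma_0}$ and $\ket{\gamma_1}$ induce orthogonal pure directions, by peeling off qubits of $S_A$ one at a time using \Cref{lem:projdecomp} and \Cref{lem:componentinside} to build a separable $\ket{\eta}_Q$ on some $Q \subseteq S_A$ that keeps both branches alive --- this combinatorial selection of $Q$ and $\ket{\eta}_Q$ is the main technical obstacle.
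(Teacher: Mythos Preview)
Your proposal is correct and follows the same case decomposition as the paper: $k=0$ via \Cref{fact:nekoanc}; all targets aligned via commutativity of the branch projectors with $G(S)$; some target non-aligned with $k\ge 2$ by post-selecting that target onto $\ket{\th_{t_i}^\perp}$ to kill the gate and drop one target; and a separate treatment of $k=1$.

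The only organizational difference is in the non-aligned $k=1$ case. Your shortcut of post-selecting on $Q=S_A$ with $\ket{\veta}_Q=\ket{\vth}_{S_A}$ when both $\ket{\gamma_b^\parallel}\neq 0$ is valid and slightly cleaner than what the paper does. For the remaining sub-case, however, the paper does not ``transfer the target role of $t_1$ onto some $q\in S_A$'': that framing is unnecessary and does not reflect what actually happens. Instead, the paper runs a uniform induction on $|S_A|$: pick any ancilla $q\in S_A$ and a value $b\in\{0,1\}$ with $\bra{b}_q\ket{\gamma_0}\neq 0$ and $\bra{b}_q\ket{\gamma_1}\neq 0$; post-selecting $q$ onto $\ket{b}$ either kills the gate (if $\ket{b}=\ket{\pth_q}$), leaving an $n$-nekomata outright, or shrinks it to $G(S\setminus q)$ acting on an $n$-nekomata (if $\ket{b}=\ket{\th_q}$), and the inductive hypothesis applies. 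This is precisely your ``peel off qubits of $S_A$ one at a time keeping both branches alive,'' and the payoff is always an $n$-nekomata on the \emph{original} targets $T$ --- no target swap is ever needed. Your references to \Cref{lem:projdecomp} and \Cref{lem:componentinside} are also out of place here; neither is used in the paper's argument for this claim.
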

\begin{proof}
Wlog $\ket{\psi}$ is given by, 
\begin{align}
    \ket{\psi} &=  \alpha \cdot \ket{0^n}_T \ket{\gamma_0}_A + \beta \cdot \ket{1^n}_T \ket{\gamma_1}_A
\end{align}
For every qubit $q \in S$, let $\ket{\pth_q}$ be such that $\braket{\th_q | \pth_q} = 0$, then, wlog $\bra{\pth_q} \cdot \ket{\psi} \neq 0$, otherwise $q$ is \emph{redundant} in $G(S)$ and we can remove it to get a smaller gate. If $S \cap T = \emptyset$, then $\ket{\varphi}$ is already a generalized $n$-nekomata due to \Cref{fact:nekoanc}. Otherwise we have two main cases. 
\paragraph{At least two targets in gate, $\vlr{S \cap T} \geq 2$:}
Suppose there is a target $t \in T \cap S$ such that, $0 < \vlr{\braket{\th_{t} | 0}} < 1$. Then, for $\ket{\eta} = \ket{\pth_{t}}$ we have $\braket{\pth | 0} \neq 0$ and $\braket{\pth | 1} \neq 0$. Therefore, the following state is a generalized $(n-1)$-nekomata on $T' = T \setminus t$, 
\begin{align}
\kb{\eta}_{t} \cdot \ket{\varphi}  &= \kb{\eta}_{t} \cdot \ket{\psi}      \qquad \text{(since $\kb{\eta}_t G(S) = \kb{\eta}_t \otimes I$)} \\
&= \alpha \cdot \braket{\eta|0} \cdot \ket{\eta}_{t} \ket{0^{n-1}}_{T'} \ket{\gamma_0}_A + \beta \cdot \braket{\eta| 1} \cdot \ket{\eta}_{t} \ket{1^{n-1}}_{T'} \ket{\gamma_1}_A 
\end{align}
If no such $t$ exists, then, observe that for every target $t \in T$, 
\begin{align} 
    & [\kb{0}_{t}, G(S)]= [\kb{1}_t, G(S)] = 0 \\
    &\Rightarrow [\kb{0^n}_T, G(S)] = [\kb{1^n}_T, G(S)] = 0 
\end{align}
Then, from \Cref{sec:pregates}, for unitaries $U,V$ acting only on the ancillae $S \setminus T$,
\begin{align}
    \kb{0^n}_T \cdot G(S) &= \kb{0^n} \tens U_{S \setminus T} \\
   \kb{1^n}_T \cdot G(S) &= \kb{1^n} \tens V_{S \setminus T} 
\end{align}
Hence, $\ket{\varphi}$ is already a generalized $n$-nekomata because, 
\begin{align}
    \ket{\varphi}  &= G(S) \cdot \ket{\psi} \\
    &= \alpha \cdot \ket{0^n}_T \tens U_{S'} \ket{\gamma_0}_A + \beta \cdot \ket{1^n}_T \tens V_{S'} \ket{\gamma_1}_A
\end{align}
\paragraph{One target in gate}
Let $t = T \cap S$ and, $S' = S \setminus t$ and $T' = T \setminus S$.
Then, we will prove by induction on $m = |S'|$ that for some product state $\ket{\veta}_Q$,  $\kb{\veta}_Q \cdot \ket{\varphi}$ is actually a generalized $n$-nekomata. In the case when $m = 0$, $G(S)$ acts as a single qubit unitary on $t$ and $\ket{\varphi}$ is already a generalized $n$-nekomata. Now for $m > 1$, let $q \in S'$ be any ancilla. First, if $\bra{\th} \cdot \ket{\psi} = 0$, or $\bra{\pth_q} \cdot \ket{\psi} = 0$, we can either replace the gate with $I$ or a smaller gate with $q$ a before and apply the inductive hypothesis on $m-1$.  
Wlog assume $\ket{\th_q} = \ket{1}_q$. Suppose that for some value $b \in \bin$, 
$\bra{b}_q \cdot \ket{\gamma_0}_A \neq 0$ and $\bra{b}_q \cdot \ket{\gamma_1} \neq 0$. Then, 
let $\ket{\psi'} \propto \kb{b} \cdot \ket{\psi}$ and $\ket{\varphi'} \propto \kb{b} \cdot \ket{\varphi}$. Observe that $\ket{\psi'}$ is a generalized $n$-nekomata because, 
\begin{align}
    \kb{b}_q \cdot \ket{\psi} &= \alpha \cdot \ket{0^n}_T \lr{\kb{b}_q \cdot \ket{\gamma_0}_A} + \beta \cdot \ket{1^n}_T \lr{\kb{b}_q \cdot \ket{\gamma_1}_A} 
\end{align}
Furthermore, from \Cref{sec:pregates}, if $b = 0$, $\ket{\varphi'} = \ket{\psi'}$ and is also a generalized $n$-nekomata and we are done. If $b = 1$, $\ket{\varphi'} = G(S \setminus q) \ket{\psi'}$ where $G(S \setminus q) = (I - 2\kb{\vth}_{S \setminus q})$, and let $\ket{\veta}_{Q'}$ be the state from the inductive hypothesis on $\ket{\varphi'}$ and $\ket{\psi'}$ such that $\kb{\veta}_{Q'} \cdot \ket{\varphi'}$ is a generalized $n-1$-nekomata. Then, by definition,
\begin{align}
    \kb{\veta}_{Q'} \cdot \ket{\varphi'} &= \kb{\veta}_{Q'} \tens \kb{1}_q  \cdot \ket{\varphi}
\end{align}
and therefore, for $\ket{\eta_q}_{q} = \ket{1}_q$, $\kb{\veta}_{Q',q} \cdot \ket{\varphi}$ is a generalized $n$-nekomata. 
\end{proof}
This allows us to easily extend to general version, \Cref{lem:postlayer}.
\nekotognsp
\begin{proof}
Let $T$ be the set of targets in $\ket{\psi}$ and $Q_0$ be the set of qubits belonging to the gates in $L$ and $n_0 = \vlr{T \cap Q_0}$ be the number of targets from $T$ in these gates. We will proceed by induction on $m$, the number of gates in $L$ and additionally argue that, (1) $Q \subseteq Q_0$, (2) $\kb{\veta}_{Q} \cdot \ket{\varphi}$ is a generalized nekomata on at least $n - \lfloor n_0/2 \rfloor$ targets.

For $m = 1$, this simply follows from \Cref{cl:postgate}. For $m > 1$, let $G(S) = (I - 2\kb{\vth}_S)$ be a gate in $L$ so that $L = G(S) \tens L_1$, where $L_1$ contains $m-1$ gates and let $n_1$ be the number of targets from $T$ in $L_1$ and $n_2 = n_0 - n_1$ be the number of targets in $S$. 
 Then, from the inductive hypothesis on $\ket{\varphi_1} = L_1  \cdot \ket{\psi}$, there is a state $\ket{\veta}_{Q_1}$ only on qubits in $L_1$ such that the state $\ket{\psi_1} \propto \kb{\veta}_{Q_1} \cdot \ket{\varphi_1}$ is a generalized $n - \lfloor n_1/2 \rfloor$-nekomata. 
Now for $\ket{\varphi_2} = G(S) \cdot \ket{\psi_1}$, from \Cref{cl:postgate}, there is a state $\ket{\veta}_{Q_2}$ on $Q_2 \subseteq S$ such that $\ket{\psi_2} \propto \kb{\veta}_{Q_2} \cdot \ket{\varphi_2}$ is a generalized $n'$ nekomata for, 
\begin{align}
 n' &= n - \lfloor n_1 / 2 \rfloor - \lfloor n_2 / 2 \rfloor \\
 &\geq n - \lfloor n_0 / 2 \rfloor
\end{align}
Then, for $Q = Q_1 \cup Q_2 \subseteq Q_0$ and $\ket{\veta}_{Q} = \ket{\veta}_{Q_1} \tens \ket{\veta}_{Q_2}$, 
\begin{align}
 \kb{\veta}_Q \cdot \ket{\varphi} &= \kb{\veta_Q} \cdot \lr{G(S) \tens L_1} \cdot \ket{\psi} \\
 &= \kb{\veta_{Q_2}} \cdot G(S) \cdot \lr{\kb{\veta_{Q_1}} \cdot L_1 \cdot \ket{\psi}} \\
 &\propto \kb{\veta_{Q_2}} \cdot G(S) \cdot \ket{\psi_1} \\
 &\propto \ket{\psi_2}
\end{align}
which is a generalized $n'$-nekomata. 
\end{proof}

Finally, we will prove the base case of the result for depth-$1$ restated below, 
\onelayergnsp
\begin{proof}
Note that $\ket{\psi}$ is separable across $S_1, S_2 \dots S_m$, the subset of qubits belonging to each of the $m$ gates of $C$. Furthermore, for any separable state $\ket{\veta}_Q$, $\kb{\veta}_Q \cdot \ket{\psi}$ is also separable across $S_1, S_2 \dots S_m$. Therefore, it suffices to prove the lemma for when $C$ consisting of a single gate. Then, $\ket{\psi} = (I - 2\kb{\vth^1}_S) \cdot \ket{\vth^0}_S$ for separable states $\ket{\vth^0}, \ket{\vth^1}$ such that $0 < \vlr{\braket{\th^1_q | \th^0_q}} < 1$ for every $q \in S$. 
Now suppose for contradiction that there exists $\ket{\veta}$ on $Q \subseteq S$, s.t $\kb{\veta}_Q \cdot \ket{\psi} \propto \ket{\varphi}_{S'} \tens \ket{\veta}_Q$ and $\ket{\varphi}$ is a generalized $n \geq 3$-nekomata as below, 
\begin{align}
 \ket{\varphi} = \alpha \cdot \ket{0^n}_{T} \ket{\gamma_1}_A  + \beta \ket{1^n}_T \ket{\gamma_2}_A
\end{align}
for some $\alpha \neq 0$ and $\beta \neq 0$ such that $S' = T \cup A$.  
For any target $t \in T$, since $0 < \vlr{\braket{\th^0_t |\th^1_t}} < 1$ there must be some $b \in \bin$ for which $0 < \vlr{\braket{\th^b_t | 0}} < 1$. Let $\ket{\varphi'} \propto \kb{\mu}_t \cdot \ket{\varphi}$ for $\ket{\mu}$ such that $\braket{\mu | \th^b_t} = 0$. Then, $\ket{\varphi'}$ is a is a generalized $n-1$-nekomata because,
\begin{align}
 \ket{\varphi'} &\propto \kb{\mu} \cdot \ket{\varphi} \\
 &= \alpha \cdot \braket{\mu | 0} \cdot \ket{0^{n-1}}_{T\setminus t} \ket{\gamma_1}_A \ket{\mu}_t + \beta \cdot \braket{\mu|1} \cdot \ket{1^{n-1}}_{T\setminus t} \ket{\gamma_2}_A \ket{\mu}_t.  
\end{align}
However, observe that $\ket{\psi} \in \spn \{ \ket{\vth^1}_S, \ket{\vth^0}_S \}$ and thus $\ket{\varphi} \in \spn\{ \ket{\vth^1}_{S'}, \ket{\vth^0}_{S'}\}$. Since $\kb{\mu}_t \tens I$ is orthogonal to $\ket{\vth^b}$, for $b' = b \oplus 1$ and $S'' = S \setminus t$, we have 
$\ket{\varphi'} \in \spn\{ \ket{\vth^{b'}_{S''}} \tens \ket{\mu} \} $. This is a contradiction to $n > 2$  because $\ket{\varphi'}$ is separable across all its qubits and cannot be a generalized $2$-nekomata.  
\end{proof}

\section{Acknowledgements}
M.J thanks Lucas Gretta for many insightful discussions about reversible circuits and restrictions. M.J also thanks Meghal Gupta for a very helpful discussion on subspaces.
A.T. thanks ChatGPT for coming up with the proof idea for \Cref{lemma:closeness_implies_closeness_in_tails}. F.V. thanks ChatGPT for helpful interactions throughout the course of this project. The authors are grateful to the anonymous reviewers for their careful reading and valuable feedback. The authors also thank Gregory Rosenthal for additional extensive feedback. 


\appendix

\bibliographystyle{alpha}
\bibliography{biblio}
\section{Approximate to Exact \PARITY Deferred Proofs}\label{sec:aproofs}

First we will formally prove the form of \cite{watts2019separation} that we require. 

\avgtoworst*
\begin{proof}
  We can implement the transformation $\ket{\x} \ket{0^n} \mapsto \ket{\x} \ket{x_1 \oplus x_2, x_2 \oplus x_3 \dots x_n \oplus x_1}$ in $2$ layers of $\cnot$ gates \cite{watts2019separation}. Observe that $ \ket{x_1 \oplus x_2, \dots x_n \oplus x_1 }$ has parity $0$ regardless of $\x$.  This allows us to synthesize the below superposition of even parity strings from $\ket{0^{2n}}$ in \QNCZ. 
  \begin{align}
    \ket{\nu^*} &= \frac{1}{\sqrt{2^n}} \sum_{\x \in \bin^n} \ket{\x}_A \ket{x_1 \oplus x_2, x_2 \oplus x_3 \dots x_n \oplus x_1}_B \\
                &= \frac{1}{\sqrt{2^{n-1}}} \sum_{\y \in \bin^{n} : \oplus \y = 0} \ket{\nu(\y)}_A \tens \ket{\y}_B 
  \end{align}
  Note that $\ket{\nu^*}$ is simply the ``Poor Man's Cat State'' (\cite{watts2019separation}) due to the entangled $A$ register. Uncomputing $\ket{\nu(\y)}$ would produce a $\ket{\Cat_{n-1}}$ which is not possible in \QNCZ. 

  WLOG suppose that $\gamma > 0$ so that the output register $t$ of $C$ measures to $\PARITY(\x)$ with probability $p = 1/2 + \gamma/2$ on a random input (otherwise flip the output of $C$). We will describe a circuit $C'$ that, on any input $\x$, outputs $\PARITY(\x)$ with probability at least $p = 1/2 + \vlr{\gamma}/2$. 

  On input $\x \in \bin^n$, using $2n$ additional ancillae, construct $\ket{\nu^*}$ and then apply a single layer of $\cnot$ gates from each input qubit to a corresponding qubit in $B$, to obtain,  

  \begin{align}
    \ket{\x} \ket{0^{2n}} \mapsto \frac{1}{\sqrt{2^{n-1}}} \sum_{\y \in \bin^{n} : \oplus \y = 0} \ket{\y} \ket{\nu(\y)}_A \tens \ket{x_1 \oplus y_1, \dots x_n \oplus y_n}_B 
  \end{align}

 Then, we can in feed the $B$ registers as inputs to $C$ instead. Observe that, measuring the $B$ register produces a uniformly random string with the same \PARITY as $\x$, therefore, the output of $C'$ measures to $\PARITY(\x)$ with probability at least $p$. 
\end{proof}

\approxtoexpar*
\begin{proof}
The proof follows through a series of reductions described below.  
\begin{enumerate}
  \item Apply \cref{lem:avgtoworst} to obtain a depth-$d_0 = d + O(1)$ circuit $C_0$ with $a_0 = a + O(n)$ ancillae that correctly computes \PARITY with probability $1/2 + |\rho|/2$ on every input $\x$.   

  \item  Using \FANOUT and \MAJORITY gates of size $k = \Theta(1/\rho^2) = \Theta(\log^{2\delta} n)$, repeat $C_0$ in parallel $k$ times and output the \MAJORITY of the runs. This gives $C_1$ that, on every $\x \in \bin^n$ correctly outputs $\PARITY(\x)$ correctly wp $\geq 0.999$. $C_1$ can be implemented in depth-$d_1 = d + O(1)$ using $a_1 = a_0 \cdot O(n^O(\delta))$ ancillae from $C_0$ due to \cite{rosenthal2021qac0, grier2026mathsfqac0containsmathsftc0with}.   

  \item Turn $C_1$ into a $n+1$-output circuit $C_2$ that preserves the input registers, by simply making a classical copy of each input coordinate $x_i$ in a single layer at the start and then running $C$ using on these qubits instead. This only increases the depth to $d_3 = d_2 + 1$ and the number of ancillae to $d_3 = a_3 + O(n)$. 

  \item Apply the approximate reduction of \cite{rosenthal2021qac0} to obtain a depth-$d_4 = O(d)$ and $a_4 = O(a)$ circuit $C_4$ such that the \emph{phase dependent fidelity} of $\ket{\psi} = C_4 \cdot \ket{0^{n+a_4}}$ with $\ket{\Cat_n}_T$ is at least $0.99$. In other words, there is some ancilla state $\ket{\alpha}_A$ such that, $$1 - \Vlr{\ket{\psi}_{T,A} - \ket{\Cat_n}_T \ket{\alpha}_A}_2^2 \geq 0.9.$$ Then, the qubits $T$ of $\ket{\psi}$ have at least $0.25$ probability of measuring to all $0$s and at least $0.25$ probability of measuring to all $1$s. 

  \item Apply \cref{lem:exactcat} to obtain $C_5$ for exactly computing $\ket{\Cat_n}$ in depth-$d_5 = O(d_4)$ using $a_5 = a_4 + 1$ ancillae. 
\end{enumerate}
This produces a depth-$O(d)$ circuit with $O(a \cdot n^{O(\delta)})$ ancillae to exctly compute $\ket{\Cat_n}$. This can be turned into $C'$ for computing exact \PARITY on $n$ coordinates in depth-$O(d)$ using $a' = O(d \cdot n^{O(\delta)})$ ancillae \cite{moore1999qac0}.

The depth of $C'$ we obtain is independent in the original error and, when $\rho = 1/\poly\log(n)$, $C'$ only requires $a' = \poly(n)$ ancillae.  
\end{proof}

\aprexactcat*
\begin{proof}
  Let $\alpha_b = \vlr{\bra{b^n}_T \cdot \ket{\psi}}$ for $b \in \bin$. Assume WLOG that $\alpha_1 \geq \alpha_0$. Using a fresh ancilla $q$, we will first apply a single gate $G(T,q)$ on $\ket{\psi} \tens \ket{\mu}_q$ to produce a state $\ket{\psi_1}$ with exactly $1/{2\sqrt{2}}$ amplitude on the two branches $\ket{0^n}_T \ket{0}_q$ and $\ket{1^n}_T \ket{0}_q$. To achieve this, let $\ket{\mu} = \beta \ket{0} + \sqrt{1-\beta^2} \ket{1}$ for $\beta = \frac{1}{\alpha_0 2\sqrt{2}}$
  and choose $G(T,q)$ to be a controlled-$U_q$ gate, controlled on $T$ being $\ket{1^n}$.  
  such that $U$ maps $\ket{0} \mapsto \gamma \ket{0} + \sqrt{1-\gamma^2} \ket{1}$ for $\gamma = \frac{\alpha_0}{\alpha_1}$. Note that we can set remaining degrees of freedom to make $U = U^\dag$ (valid reflection) as, 
 \begin{align}
     U := \begin{bmatrix}
            \gamma  & \sqrt{1-\gamma^2} \\
            \sqrt{1-\gamma^2} & -\gamma
         \end{bmatrix}
 \end{align} 
  This produces,
  \begin{align}
    \vlr{ \bra{0^n,0}_{T,q} \cdot \ket{\psi_0}} &= \vlr{\bra{0^n} \cdot \ket{\psi}} \cdot \beta = \alpha_0 \cdot \beta = \frac{1}{2 \sqrt{2}} \label{eq:br1} \\
    \vlr{\bra{1^n,0}_{T,q} \cdot \ket{\psi_0}} &= \vlr{\bra{1^n} \cdot \ket{\psi} \tens U_q \ket{\mu}_q} 
= \alpha_1 \cdot \gamma \cdot \beta  = \frac{1}{2 \sqrt{2}} \label{eq:br2} 
  \end{align}
Define $\Pi^* :=  \lr{\kb{0^n}_T + \kb{1^n}_T} \tens \kb{0}_q$ 
  and let $C_0$ be this depth-$d+1$ circuit to construct $\ket{\psi_0}$ from $\ket{0^{a+1}}$, by applying a single qubit unitary $\ket{0}_q \mapsto \ket{\mu}_q$ and then $G(T,q)\cdot C$. We claim that the following circuit synthesizes an $n$-nekomata, 
  \begin{align}
    C' &:=  (I - 2\kb{\psi_0}) \cdot (I - 2\Pi^*) \cdot C_0 \\
    &= 
    C^\dag_0 (I - 2\kb{0^{a+1}}) C_0 \cdot (I - 2\kb{0^n,0}_{T,q}) \cdot (I - 2\kb{1^n,0}_{T,q}) \cdot C_0, 
  \end{align}
 Then, $C'$ only requires depth-$d' = 3(d+1)+2 \leq 3(d+2)$ and $a+1$ ancillae.  

  We now proceed to prove the remaining claim. Due to \Cref{eq:br1} and \Cref{eq:br2}, there exists a $n$-nekomata $\ket{\nu_n}_{T,A}$ such that, 
  \begin{align}\label{eq:state0}
    \ket{\psi_0} &= \frac{1}{2} \cdot \ket{\nu_n} \ket{0}_q + \frac{\sqrt{3}}{2} \cdot \ket{\varphi'} 
  \end{align}
  for some state $\ket{\varphi'}$ satisfying $\Pi^* \cdot \ket{\varphi'} = 0$. Since $\ket{\nu_n} \ket{0}_q \in \eig{\Pi^*}$, 
  \begin{align}
    \ket{\psi_1} &:= (I - 2\Pi^*) \cdot C_0 \ket{0^{a+1}} \\
    &= (I - 2\Pi^*) \cdot \ket{\psi_0} \\
    &= - \frac{1}{2} \cdot \ket{\nu_n} \ket{0}_q + \cdot \frac{\sqrt{3}}{q} \cdot \ket{\varphi'}\\ 
    &= \ket{\psi_0} - \ket{\nu_n} \ket{0}
  \end{align} 
  Hence, 
  \begin{align}
    \ket{\psi_2} &:= C' \ket{0^{a+1}} \\ 
                 &= (I - 2\kb{\psi_0}) \cdot \ket{\psi_1} \\ 
                 &= \ket{\psi_1} - \ket{\psi_0} \\
                 &= -\ket{\nu_n} \ket{0} 
  \end{align}
  which is a $n$-qubit nekomata.
\end{proof}

\end{document}